\numberwithin{equation}{section}
\newtheorem{Theorem}{Theorem}[section]
\newtheorem{Corollary}[Theorem]{Corollary}
\newtheorem{Lemma}[Theorem]{Lemma}
\newtheorem{Proposition}[Theorem]{Proposition}
 {\theoremstyle{definition}
\newtheorem{Remark}[Theorem]{Remark} }
\begin{document}

\allowdisplaybreaks

\newcommand{\arXivNumber}{1609.02525}

\renewcommand{\thefootnote}{}

\renewcommand{\PaperNumber}{011}

\FirstPageHeading

\ShortArticleName{Series Solutions of the Non-Stationary Heun Equation}

\ArticleName{Series Solutions of the Non-Stationary Heun Equation\footnote{This paper is a~contribution to the Special Issue on Elliptic Hypergeometric Functions and Their Applications. The full collection is available at \href{https://www.emis.de/journals/SIGMA/EHF2017.html}{https://www.emis.de/journals/SIGMA/EHF2017.html}}}

\Author{Farrokh ATAI~$^{\dag\ddag}$ and Edwin LANGMANN~$^\dag$}

\AuthorNameForHeading{F.~Atai and E.~Langmann}

\Address{$^\dag$~Department of Physics, KTH Royal Institute of Technology, SE-10691 Stockholm, Sweden}
\EmailD{\href{mailto:langmann@kth.se}{langmann@kth.se}}
\Address{$^\ddag$~Department of Mathematics, Kobe University, Rokko, Kobe 657-8501, Japan}

\EmailD{\href{mailto:farrokh@math.kobe-u.ac.jp}{farrokh@math.kobe-u.ac.jp}}

\ArticleDates{Received October 10, 2017, in f\/inal form February 08, 2018; Published online February 16, 2018}

\Abstract{We consider the non-stationary Heun equation, also known as quantum Pain\-lev\'e~VI, which has appeared in dif\/ferent works on quantum integrable models and conformal f\/ield theory. We use a generalized kernel function identity to transform the problem to solve this equation into a dif\/ferential-dif\/ference equation which, as we show, can be solved by ef\/f\/icient recursive algorithms. We thus obtain series representations of solutions which provide elliptic generalizations of the Jacobi polynomials. These series reproduce, in a~limiting case, a~perturbative solution of the Heun equation due to Takemura, but our method is dif\/ferent in that we expand in non-conventional basis functions that allow us to obtain explicit formulas to all orders; in particular, for special parameter values, our series reduce to a single term.}

\Keywords{Heun equation; Lam{\'e} equation; Kernel functions; quantum Painlev{\'e} VI; perturbation theory}

\Classification{33E20; 81Q05; 16R60}

\renewcommand{\thefootnote}{\arabic{footnote}}
\setcounter{footnote}{0}

\section{Introduction}
This paper is devoted to the study of a certain non-stationary Schr\"odinger equation with elliptic potential (see \eqref{Heun} below) which has appeared in dif\/ferent works on quantum integrable models and conformal f\/ield theory \cite{FLNO,Kolb,Ro}. This so-called {\em non-stationary Heun equation}~\cite{LT}, also known as {\em quantum Painlev\'e~VI}~\cite{Nag}, is a generalization of a second-order Fuchsian dif\/ferential equation with four regular singular points known as {\em Heun equation}~\cite{Ron,SL}. Another important special case is the {\em non-stationary Lam\'e equation} \cite{BM,EK}, which is also known as {\em KZB heat equation}\footnote{To be more precise: the special case of the KZB equations corresponding to $\mathfrak{g}=\mathfrak{sl}_2$ and $n=1$.}~\cite{FV}. Our main result is a method to construct a solution of the non-stationary Heun equation which is complete (in a sense explained below) and which is complementary to results obtained by other methods in that it can be used for generic parameter values. We also discuss various interesting special cases of our results. As will be explained, this is part of a research program using kernel functions to solve quantum models of Calogero--Moser--Sutherland (CMS) type; see \cite{HL,ELeCS0,ELJack,ELsigma,ELeCS2}.

\subsection{Background}
There is a long tradition in mathematical physics that regards integrable models and the mathematical theory of special functions as two sides of the same coin. While most textbook examples of such models can be solved using results about special functions known since a long time (they can be found in \cite{WW}, for example), modern developments in conformal f\/ield theory and quantum statistical physics have led to special functions which belong to function classes which are the subject of ongoing research. As an outstanding example we mention the representation theoretic approach to the solution of certain hyperbolic dif\/ferential equations due to Etingof, Frenkel and Kirillov \cite{EFK,EK}, which is a beautiful generalization of classical works on the Gauss hypergeometric equation pioneered by Gelfand and motivated by conformal f\/ield theory. A~complementary approach to this inspired by string theory was developed in works by Felder, Varchenko, and others \cite{FG,FV95,FV}. The non-stationary Lam\'e equation is the simplest non-trivial example to which these results apply. More recently this equation appeared in the construction of exact 4-point correlation functions of the quantum Liouville model~\cite{FLNO}, and in the exact solution of the eight-vertex model \cite{BM}. It was conjectured in~\cite{FLNO} that results about the non-stationary Lam\'e equation should have natural generalizations to the non-stationary Heun equation. This was recently conf\/irmed in important examples by Rosengren~\cite{Ro1,Ro}, who proved and generalized conjectures in~\cite{BM}, and Kolb \cite{Kolb}, who generalized the representation theoretic approach to the non-stationary Lam\'e equation in \cite{EFK,EK} to the Heun case. We also mention another important approach due to Nekrasov and Shatiashvili allowing to construct functions in the Heun class and which is based on supersymmetric gauge theories \cite{NS}; see also \cite{KS} for recent related work. Another related subject is the AGT conjecture which has led to explicit combinatorial expressions for conformal blocks related to the non-stationary Heun equation; see, e.g.,~\cite{Nawata} and references therein.

The Heun equation has received considerable interest as a natural equation def\/ining a function class generalizing the Gauss hypergeometric functions; see, e.g., \cite{Maier,Ron,SL,Smirnov,TakemuraHeun4} and references therein. The examples mentioned in the previous paragraph motivate us to extend this scope: our work is intended as a contribution towards a general theory of special functions def\/ined by the {\em non-stationary} Heun equation. The approach we use is based on a kernel function method developed in a series of papers in order to construct exact eigenfunctions of quantum models of CMS type; see \cite{HL,ELeCS0,ELJack,ELsigma,ELeCS2}. The relation of this to other approaches to the Heun equation based on kernel functions \cite{Erd,LaSl,LW,Novikov,RHeun} is discussed in Section~\ref{secFinal}.

\looseness=-1 To explain our method we recall that a kernel function for a pair of Schr\"odinger opera\-tors~$H(x)$ and $\tilde H(y)$ is an explicitly known function $K(x,y) $ satisfying the functional identity $(H(x)-\tilde H(y) -c)K(x,y)=0$  for some constant $c$; in the examples of interest to us the Schr\"odinger operators are Hamiltonians def\/ining a CMS-type model which can have equal ($H=\tilde H$) or dif\/ferent ($\tilde H\neq H$) parameters, and we write $H(x)$ to indicate that this dif\/ferential operator acts on functions depending on the variable $x$. A basic observation underlying the kernel function method we use is that CMS-type Hamiltonians have eigenfunctions which are easy to construct but uninteresting for applications, and that kernel functions provide a tool to transform such uninteresting eigenfunctions to interesting ones \cite{ELsigma}. It was shown in~\cite{HL} that all classical CMS-type models possess kernel functions which allow the construction of particular series representations of standard eigenfunctions in this way (by {\em classical CMS-type models} we mean those whose eigenfunctions provide natural many-variable generalizations of classical orthogonal polynomials, including the corresponding deformed CMS models in the sense of Chalykh, Feigin, Veselov and Sergeev~\cite{CFV,Sergeev}). We also mention recent related work of Halln\"as and Ruijsenaars~\cite{HR1} constructing eigenfunctions of the CMS model with $1/\sinh^2$-interactions using such a kernel function and which goes beyond the paradigm of polynomial eigenfunctions.
It was found that the elliptic generalizations of the CMS-models, which can be regarded as many-variable generalizations of the Lam\'e and Heun equations, possess kernel functions~\cite{ELrem,ELeCS1,LT} generalizing those in~\cite{HL}, and in~\cite{ELeCS0} one such kernel function was used to construct series representations of eigenfunctions of the elliptic quantum Calogero--Sutherland (eCS) model~\cite{ELeCS2}.
We mention in passing a series solution of the eCS model by Komori and Takemura~\cite{KT} which, while using the same expansion parameter, is dif\/ferent in important details from the result in~\cite{ELeCS0,ELeCS2} (this dif\/ference is analogous to the dif\/ference between perturbative results of Takemura on the Heun equation in~\cite{THeun} and results in the present paper, as discussed in the last paragraph of Section~\ref{secSummary1} below).

It is known that the kernel functions allowing for elliptic generalizations are restricted by a so-called {\em balancing condition} $\kappa=0$ with $\kappa$ a constant depending on the model parameters \cite{KNS,ELrem}.
If this condition is not fulf\/illed one can often f\/ind a {\em generalized kernel function} $K(x,y,\tau)$ satisfying
\begin{gather}\label{Kgen}
\left(\frac{{\rm i}}{\pi} \kappa \frac{\partial}{\partial\tau}+H(x,\tau)-\tilde H(y,\tau) -c(\tau)\right)K(x,y,\tau)=0
\end{gather}
with $\tau$ the half-period ratio of the elliptic functions appearing in the CMS Hamiltonians~$H$ and~$\tilde H$ \cite{ELrem,ELeCS1,LT} (see, e.g., Lemma~\ref{Lemma:kernel} where the $\tau$-dependence of~$H$, $\tilde H$, $c$ and $K$ is suppressed). In the present paper we use a generalized kernel function found in~\cite{LT} to solve the non-stationary Heun equation, following the approach in~\cite{ELeCS2}. The basic observation underlying our approach is that one can use the generalized kernel function in \eqref{Kgen} to transform an eigenfunction of the dif\/ferential operator $\frac{{\rm i}}{\pi} \kappa \frac{\partial}{\partial\tau}+\tilde H(y,\tau)$ to an eigenfunction of $\frac{{\rm i}}{\pi} \kappa \frac{\partial}{\partial\tau}+H(x,\tau)$ \cite{LT}.

\subsection{Summary of results}\label{secSummary1}
The non-stationary Heun equation can be written as\footnote{Here and in the following we often suppress the dependence of functions on the variable $\tau$.}
\begin{gather}\label{Heun}
\left( \frac{{\rm i}}{\pi}\kappa \frac{\partial}{\partial\tau} -\frac{\partial^2}{\partial x^2} +\sum_{\nu=0}^3 g_\nu(g_\nu-1) \wp(x+\omega_\nu)\right)\psi(x)=E\psi(x)
\end{gather}
with $\wp(x)$ the Weierstrass elliptic function with periods $(2\pi,2\pi\tau$) and
\begin{gather}\label{omeganu}
\omega_0= 0,\qquad \omega_1=\pi,\qquad \omega_2= -\pi-\pi\tau,\qquad \omega_3= \pi\tau
\end{gather}
(for the convenience of the reader we collect the def\/initions of $\wp$ and other well-known special functions we need in Appendix~\ref{appSpecialFunctions}). To simplify notation we set $\omega_1=\pi$ here and in most parts of this paper; see Appendix~\ref{appScaling} for how to transform our results to other values of $\omega_1$. The parameters $g_0$, $g_1$, $g_2$, $g_3$ and $\kappa$ can be arbitrary complex numbers for our general results.\footnote{We have to make some restrictions on parameters due to a technical problem referred to as {\em resonances} but, as discussed in Section~\ref{secResonances}, many of these restrictions are irrelevant in practice.} Our aim is to construct functions $\psi(x)\equiv \psi(x,\tau;\{g_\nu\}_{\nu=0}^3,\kappa)$ of two complex variables $x$ and $\tau$, $\Im(\tau)>0$, that satisfy this dif\/ferential equation for some $E\equiv E(\tau;\{g_\nu\}_{\nu=0}^3,\kappa)$; a more precise characterization of our solutions is given in \eqref{solution}--\eqref{P} below. It is important to note that, for $\kappa\neq 0$, $E$ can be transformed to 0, or any other convenient value, by changing the normalization of $\psi(x)$ (this follows from the obvious invariance of \eqref{Heun} under the transformation
 \begin{gather} \label{symmetry}
\psi(x)\to C\psi(x),\qquad E \to E + \frac{{\rm i}}{\pi}\kappa \frac1{C}\frac{\partial C}{\partial\tau}
\end{gather}
for arbitrary analytic functions $C$ of $\tau$). However, we sometimes f\/ind it convenient to impose a normalization condition on $\psi(x)$ so that $E$ remains signif\/icant even for $\kappa\neq 0$. Important special cases of \eqref{Heun} include the Heun equation ($\kappa=0$), the non-stationary Lam\'e equation ($g_\nu=g$ independent of $\nu$, or $g_\nu(g_\nu-1)= 0$ for three of the $\nu$'s), and the Lam\'e equation (both specializations). Many of our results are new even for these special cases (to our knowledge).

To explain the nature of our solutions we recall that, in the trigonometric limit $\Im(\tau)\to +\infty$, \eqref{Heun} reduces to the stationary Schr\"odinger equation with the P\"oschl--Teller potential $\propto g_0(g_0-1)/\sin^2(x/2) + g_1(g_1-1)/\cos^2(x/2)$ which has explicitly known solutions equal to Jacobi polynomials up to a common factor $\sin(x/2)^{g_0}\cos(x/2)^{g_1}$ (see \eqref{PT}--\eqref{PTsolution} for details).
The solutions of \eqref{Heun} that we construct are a generalization of this: they are of the form
\begin{gather}
\psi_n(x)= (2q^{1/4})^{-g_0-g_1}\left( \prod_{\nu=0}^3 \theta_{\nu+1}\big(\tfrac{1}{2} x\big)^{g_\nu} \right)\mathcal{P}_n\left(\cos(x) \right),\nonumber\\
E_n = \kappa^2\left(\frac{1}{12} -\frac{\eta_1}{\pi}\right) -\sum_{\nu=0}^3 g_\nu(g_\nu-1)\frac{\eta_1}{\pi} + \mathcal{E}_n\label{solution}
\end{gather}
with $\theta_{\nu+1}(z)$ the Jacobi theta functions, $q=\exp({\rm i}\pi\tau)$ the nom\'e, $\eta_1/\pi$ in \eqref{eta1pi} and
\begin{gather}\label{P}
\mathcal{P}_n(z) = \sum_{\ell=0}^\infty \mathcal{P}^{(\ell)}_n(z) q^\ell ,\qquad \mathcal{E}_n = \sum_{\ell=0}^\infty \mathcal{E}^{(\ell)}_nq^\ell,
\end{gather}
with
\begin{gather}\label{eJacobi}
\mathcal{P}^{(0)}_n(z) = P_n^{\big(g_0-\frac{1}{2},g_1-\frac{1}{2}\big)}(z),\qquad \mathcal{E}^{(0)}_n = \left(n+\frac{g_0+g_1}{2}\right)^2,\qquad n\in\mathbb{N}_0,
\end{gather}
and $P^{\big(g_0-\frac12,g_1-\frac12\big)}_n(z)$ the Jacobi polynomials (see~\eqref{Series}). Our main result provides ef\/f\/icient recursive procedures to compute the functions $\mathcal{P}^{(\ell)}_n(z)$ which, as we show, are polynomials of degree $n+\ell$ in $z$; see Propositions~\ref{prop1} and~\ref{prop2} for two complementary variants of this result. Thus one can regard~$\mathcal{P}_n(z)$ in~\eqref{P} as an elliptic generalization of Jacobi polynomials. It is interesting to note that these elliptic generalizations exist even for negative integers~$n$ if~$q$ is non-zero, but they vanish like~$O(q^{-n})$ for $n<0$ as $q\to 0$. We conjecture that the series in~\eqref{P} is absolutely convergent and converges to a $L^2$-function on $[0,\pi]$ for $|q|\leq q_0$ and some $q_0>0$ depending on parameters (this is known to be true in the Heun case $\kappa=0$ from work by Takemura \cite{THeun}; see also \cite{ELeCS2} for a convergence proof for the Lam\'e case which, as we believe, can be generalized). However, this question is left for future work: for simplicity we treat series like in~\eqref{P} as formal power series.

Our solution \eqref{solution}--\eqref{P} of \eqref{Heun} is complete in the sense that the $\psi_n(x)$ provide a complete orthonormal basis in the Hilbert space of $L^2$-functions on $[0,\pi]$ for $g_0+g_1>0$ in the trigonometric case $q=0$ (we believe that this is true even for $q > 0$).
Moreover, we give an ef\/f\/icient recursive procedure to compute the coef\/f\/icients $ \mathcal{P}^{(\ell)}_n(z)$ and $ \mathcal{E}^{(\ell)}_n$ of the power series in~\eqref{P}. We note that, in the Heun case $\kappa=0$, the $E_n$ correspond to the eigenvalues of the $BC_1$ elliptic CMS Hamiltonian discovered by Inozemtsev~\cite{I}. We thus refer to the $E_n$ as {\em generalized eigenvalues} in the following.

It is important to note that, by exploiting the invariance of \eqref{Heun} under the transformation in \eqref{symmetry}, we obtain two complementary variants of results: in the f\/irst variant we impose a~normalization conditions on $\psi_n(x)$ such that the generalized eigenvalues $E_n$ are signif\/icant (see Proposition~\ref{prop1} and Theorem~\ref{Thm1}), and in the second we f\/ix $E_n$ by a convenient condition (see Proposition~\ref{prop2} and Theorem~\ref{Thm2}). These two variants of results are complementary to each other in that the second is somewhat simpler but restricted to $\kappa\neq 0$, whereas the f\/irst applies to the case $\kappa=0$ as well. However, as will be discussed after Proposition~\ref{prop2}, the second variant of results implies an interesting representation of the eigenvalues $E_n$ in the limit $\kappa\to 0$.

One important feature of our method is that it provides particular $q$-dependent basis functions $\{f_m(z)\}_{m\in\mathbb{Z}}$ to expand the functions $\mathcal{P}_n(z)$ in; see~\eqref{fn} and~\eqref{Pnseries}. This is useful since these functions $f_m(z)$ take into account much of the complexity of the problem; for example, in special cases the expansion coef\/f\/icients are trivial, and in these cases our method gives explicit integral representations of the solutions $\mathcal{P}_n(z)$ (see Section~\ref{secExplicit}). For general parameter values, we obtain a system of equations for these expansion coef\/f\/icients which, in the trigonometric case $q=0$, can be solved by diagonalizing a triangular matrix and which, for non-zero $q$, can be solved by ef\/f\/icient perturbative algorithms; see Propositions~\ref{prop1} and~\ref{prop2}. A perturbative solution of these algorithms to all orders in $q$ is obtained in Section~\ref{secAllOrders}; see Theorems~\ref{Thm1} and~\ref{Thm2}. We note that results for the elliptic Calogero--Sutherland model corresponding to the ones in Sections~\ref{secTrigonometric}, \ref{secPerturbative} and \ref{secAllOrders} were obtained by one of us in \cite{ELeCS0,ELJack}, and \cite{ELeCS2}, respectively. We also mention that, in the special case $\kappa=0$, we obtain results for the Heun equation in Section~\ref{secAllOrders0} which dif\/fer from the ones by Takemura who used Jacobi polynomials as basis to expand the functions $\mathcal{P}_n(z)$~\cite{THeun}. As already mentioned, this dif\/ference is analogous to the dif\/ference between the perturbative results for the eCS model in~\cite{ELeCS0} and the one by Komori and Takemura in \cite{KT}: in the latter work the eigenfunctions are expanded in Jack polynomials, whereas in the former an unconventional basis is used which allows for an explicit solution to all orders~\cite{ELeCS2}.

\subsection{Plan}
Section~\ref{secPreliminaries} contains preliminary material: a summary of notation (Section~\ref{secNotation}), a~review of a~well-known solution of \eqref{Heun} for $\Im(\tau)\to\infty$ in terms of Jacobi polynomials (Section~\ref{secPT}), the def\/inition and properties of our basis functions $f_m(z)$ (Section~\ref{secfm}), and a discussion of a technicality referred to as {\em resonances} (Section~\ref{secResonances}). In Section~\ref{secResults} we present our key result, which is a transformation of the problem to solve \eqref{Heun} into a~dif\/ferential-dif\/ference equation (Proposition~\ref{prop0}), together with a discussion of special cases (Section~\ref{secExplicit}); the proof of this key result is given in Section~\ref{Proofprop0}.
In Section~\ref{secPerturbative} we present two complementary recursive algorithms to solve this dif\/ferential-dif\/ference equation, and Section~\ref{secAllOrders} contains the corresponding explicit solutions to all orders.
We conclude with f\/inal remarks in Section~\ref{secFinal}.
Five appendices contain def\/initions and properties of special functions we use (Appendix~\ref{appSpecialFunctions}), details on how to translate our results for $\omega_1=\pi$ to other values of $\omega_1$ (Appendix~\ref{appScaling}), explicit results from one of our recursive algorithms at low order (Appendix~\ref{appExplicitResults}), derivations of results needed in proofs (Appendix~\ref{appComputations}), and a short discussion of the combinatorial structure of our solution (Appendix~\ref{appCombinatorics}).

\section{Preliminaries}\label{secPreliminaries}
We collect def\/initions and preliminary results that we use.

\subsection{Notation}\label{secNotation}
We use the special functions ($\xi$ and $q$ are complex variables; $|q|<1$)
\begin{gather}
\Theta_1(\xi) \equiv (1-\xi)\prod_{n=1}^\infty\big(1-q^{2n}\xi\big)\big(1-q^{2n}\xi^{-1}\big),\qquad \Theta_2(\xi)\equiv \Theta_1(-\xi),\nonumber\\
\Theta_3(\xi) \equiv \prod_{n=1}^\infty\big(1+q^{2n-1}\xi\big)\big(1+q^{2n-1}\xi^{-1}\big),\qquad \Theta_4(\xi)\equiv \Theta_3(-\xi)\label{Thetanu}
\end{gather}
and
\begin{gather}\label{Theta}
\Theta(z,\xi)\equiv \big(1-2z\xi+\xi^2\big)\prod_{n=1}^\infty\big(1-2q^{2n}\xi z+q^{4n}\xi^2\big)\big(1-2q^{2n}\xi^{-1}z+q^{4n}\xi^{-2}\big),
\end{gather}
which all are closely related to the Jacobi theta functions (see~\eqref{tettet} and~\eqref{tettet1}).

We denote as $\mathbb{N}_0$ and $\mathbb{Z}'$ the sets of non-negative and non-zero integers, respectively. The symbol $\delta(m,n)$ for integers $m$, $n$ denotes the Kronecker delta.

\subsection{Trigonometric limit}\label{secPT}
The non-stationary Heun equation simplif\/ies in the trigonometric case $q=0$ to
\begin{gather}\label{PT}
\left( -\frac{\partial^2}{\partial x^2} + \frac{g_0(g_0-1)}{4\sin^2 \frac{1}{2} x} + \frac{g_1(g_1-1)}{4\cos^2 \frac{1}{2} x} \right)\psi^{(0)}(x)=E^{(0)}\psi^{(0)}(x) ,
\end{gather}
which is known to have solutions
\begin{gather}\label{PTsolution}
\psi_n^{(0)}(x)=\big(\sin \tfrac{1}{2} x\big)^{g_0}\big(\cos \tfrac{1}{2} x\big)^{g_1}P^{\big(g_0-\frac12,g_1-\frac12\big)}_n(\cos x), \qquad
E^{(0)}_n = \left(n+\frac{g_0+g_1}2\right)^2
\end{gather}
with the Jacobi polynomials $P_n^{(\alpha,\beta)}(z)$ in \eqref{Series} (see, e.g., \cite[Table~18.8.1, 2nd line]{Dig10}). We will use a well-known uniqueness result about this solution in the following form.

\begin{Lemma}\label{lemmaUniqueness}
Let $\psi_n^{(0)}(x)$ be a solution of \eqref{PT} of the form
\begin{gather*}
\psi_n^{(0)}(x) = \big(\sin \tfrac{1}{2} x\big)^{g_0}\big(\cos \tfrac{1}{2} x\big)^{g_1}P(\cos x)
\end{gather*}
for some constant $E^{(0)}$, with $P(z)$ a polynomial of degree $n$ such that
\begin{gather}\label{Pnormalization}
P(z) = \frac{(n+g_0+g_1)_n}{2^nn!}z^n+O\big(z^{n-1}\big)
\end{gather}
for some $n\in\mathbb{N}_0$. Then $P(z)=P^{\big(g_0-\frac12,g_1-\frac12\big)}_n(z)$ and $E^{(0)} =E^{(0)}_n$.
\end{Lemma}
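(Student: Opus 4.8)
The plan is to gauge away the prefactor and reduce \eqref{PT} to the Jacobi differential equation, for which uniqueness of polynomial solutions is classical. First I would insert the ansatz $\psi_n^{(0)}(x)=\psi_0(x)P(\cos x)$ with $\psi_0(x)=(\sin\tfrac{1}{2} x)^{g_0}(\cos\tfrac{1}{2} x)^{g_1}$ into \eqref{PT} and conjugate, i.e.\ compute $\psi_0^{-1}(-\partial_x^2+V)\psi_0$ with $V$ the P\"oschl--Teller potential appearing in \eqref{PT}. The key point is that the logarithmic derivative $\psi_0'/\psi_0=\tfrac{g_0}{2}\cot\tfrac{1}{2} x-\tfrac{g_1}{2}\tan\tfrac{1}{2} x$ produces, via $-\psi_0''/\psi_0$, exactly the inverse-square singularities of $V$ together with the constant $-\tfrac{1}{4}(g_0+g_1)^2$; hence the singular terms cancel and the conjugated operator is the regular first-order operator $-\partial_x^2-2(\psi_0'/\psi_0)\partial_x+\tfrac{1}{4}(g_0+g_1)^2$. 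Changing variables to $z=\cos x$, using $\sin^2\tfrac{1}{2} x=(1-z)/2$, $\cos^2\tfrac{1}{2} x=(1+z)/2$, $\sin^2 x=1-z^2$ and $\phi''=(1-z^2)P''-zP'$, $\phi'=-\sin x\,P'$ for $\phi=P(\cos x)$, turns the eigenvalue equation into
\begin{gather*}
(1-z^2)P''(z)+\big[(g_1-g_0)-(g_0+g_1+1)z\big]P'(z)+\lambda P(z)=0,\qquad \lambda=E^{(0)}-\tfrac{1}{4}(g_0+g_1)^2,
\end{gather*}
which is the reduction of \eqref{PT} with parameters $\alpha=g_0-\tfrac{1}{2}$, $\beta=g_1-\tfrac{1}{2}$.

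I expect this conjugation to be the main (if routine) obstacle: the cancellation of the $1/\sin^2$ and $1/\cos^2$ terms, and above all the exact bookkeeping of the additive constant $\tfrac{1}{4}(g_0+g_1)^2$, must be carried out carefully, since it is precisely this shift that turns the Jacobi eigenvalue $n(n+g_0+g_1)$ into the claimed $E_n^{(0)}=(n+\tfrac{1}{2}(g_0+g_1))^2$.

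It then remains to extract $E^{(0)}$ and $P$ by a triangularity argument. The Jacobi operator $\mathcal{L}P=(1-z^2)P''+[(g_1-g_0)-(g_0+g_1+1)z]P'$ is lower triangular in the monomials $\{z^k\}$, sending $z^k\mapsto -k(k+g_0+g_1)z^k$ plus terms of lower degree. Inserting the degree-$n$ polynomial $P$ into $(\mathcal{L}+\lambda)P=0$ and reading off the coefficient of $z^n$ forces $\lambda=n(n+g_0+g_1)$, hence $E^{(0)}=n(n+g_0+g_1)+\tfrac{1}{4}(g_0+g_1)^2=E_n^{(0)}$. For the uniqueness of $P$ I would compare it with $P_n^{(g_0-\frac12,g_1-\frac12)}$: by the normalization \eqref{Pnormalization} and the standard leading coefficient $(n+g_0+g_1)_n/(2^n n!)$ of the Jacobi polynomials in \eqref{Series}, the difference $Q=P-P_n^{(g_0-\frac12,g_1-\frac12)}$ has degree at most $n-1$ and again satisfies $(\mathcal{L}+\lambda)Q=0$. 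If $Q\neq0$, reading off its top coefficient would force $\lambda=d(d+g_0+g_1)$ with $d=\deg Q$, i.e.\ $(n-d)(n+d+g_0+g_1)=0$; since $0\le d<n$ this can hold only under the resonant condition $g_0+g_1=-(n+d)$. Away from such resonances (the phenomenon discussed in Section~\ref{secResonances}) we conclude $Q\equiv0$, so that $P=P_n^{(g_0-\frac12,g_1-\frac12)}$ and $E^{(0)}=E_n^{(0)}$, as claimed.
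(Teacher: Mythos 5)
The paper offers no proof to compare against --- it states ``The proof is standard and therefore omitted'' --- and your argument is precisely that standard proof: conjugate by the factor $\psi_0(x)=(\sin\tfrac12 x)^{g_0}(\cos\tfrac12 x)^{g_1}$ to reduce \eqref{PT} to the Jacobi differential equation, then exploit lower-triangularity of the Jacobi operator in the monomial basis. Your computations check out: $\psi_0''/\psi_0$ equals the singular part of the P\"oschl--Teller potential minus $\tfrac14(g_0+g_1)^2$ (note it is $\psi_0''/\psi_0$, not $-\psi_0''/\psi_0$, that reproduces the singular terms --- your wording is garbled there, but the final conjugated operator, which is second order rather than ``first-order'', carries the correct constant $+\tfrac14(g_0+g_1)^2$); the change of variables $z=\cos x$ gives the stated Jacobi ODE with $\alpha=g_0-\tfrac12$, $\beta=g_1-\tfrac12$; the diagonal action $z^k\mapsto -k(k+g_0+g_1)z^k+O(z^{k-1})$ forces $\lambda=n(n+g_0+g_1)$ and hence $E^{(0)}=n(n+g_0+g_1)+\tfrac14(g_0+g_1)^2=\big(n+\tfrac12(g_0+g_1)\big)^2=E^{(0)}_n$; and the difference $Q=P-P_n^{(g_0-\frac12,g_1-\frac12)}$, of degree $d\le n-1$ by \eqref{Pnormalization} and \eqref{JacobiNormalization}, can be annihilated by $\mathcal{L}+\lambda$ only if $Q=0$ or $(n-d)(n+d+g_0+g_1)=0$.

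The one genuine loose end is your closing caveat ``away from such resonances'': the Lemma as stated carries no non-resonance hypothesis, so as written your proof does not cover the cases $g_0+g_1=-(n+d)$ with $0\le d<n$. This gap closes itself, and you should say so explicitly rather than waving at Section~\ref{secResonances}. Indeed, if $g_0+g_1=-(n+d)$ for some integer $0\le d\le n-1$, then $n+g_0+g_1=-d$ and
\begin{gather*}
(n+g_0+g_1)_n=(-d)(-d+1)\cdots 0\cdots(-d+n-1)=0 ,
\end{gather*}
since the factor with index $d\le n-1$ vanishes. Thus in precisely the resonant cases the normalization \eqref{Pnormalization} forces the coefficient of $z^n$ in $P$ to be zero, contradicting the hypothesis that $P$ has degree $n$; the assumptions of the Lemma are then vacuous and there is nothing to prove. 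With this observation added, your argument establishes the Lemma unconditionally, exactly as stated.
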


The proof is standard and therefore omitted. (Note that the normalization in \eqref{Pnormalization} follows from~\eqref{JacobiNormalization}.)

We will use this result to f\/ix the normalization of our solutions so as to get Jacobi polynomials in the trigonometric case $q=0$.

\subsection{Basis functions}\label{secfm}
As mentioned in the introduction, one important feature of our method is that we use non-trivial basis functions $f_m(z)$. These functions are def\/ined by the following generating function,
\begin{gather}\label{fn}
\frac{\prod\limits_{\nu=0}^3\Theta_{\nu+1}(\xi)^{\tilde{g}_\nu}}{\Theta(z,\xi)^\lambda}\equiv \sum_{m\in \mathbb{Z}}f_m(z)\xi^{m},\qquad |q|<|\xi|<1,
\\ \label{lambda}
\tilde g_\nu\equiv \lambda-g_\nu,\qquad \lambda \equiv \tfrac12(g_0+g_1+g_2+g_3-\kappa)
\end{gather}
with the special functions $\Theta_{\nu+1}(\xi)$, $\Theta(z,\xi)$ def\/ined in \eqref{Thetanu}--\eqref{Theta}. It is easy to check that the series on the r.h.s.\ in~\eqref{fn} is absolutely convergent in the region indicated, and thus the functions~$f_m(z)$ are well-def\/ined.\footnote{One can show that, for f\/ixed complex parameters $\{g_\nu\}_{\nu=0}^3$ and~$\kappa$, and all $m\in\mathbb{Z}$, $f_m(z)$ is analytic for $|q|<1$ and $z$ in some $q$-dependent open domain which includes the interval $[-1,1]$.}  Since we restrict ourselves to results in the sense of formal power series in $q$, we only need the following characterization of these functions (the proof of this is technical and thus deferred to an appendix).

\begin{Lemma}\label{Lemma:fn2}
The functions $f_m(z)$ defined in \eqref{fn}--\eqref{lambda} have the following power series expansion
\begin{gather}\label{fnseries}
f_m(z) = \sum_{\ell=0}^\infty f_m^{(\ell)}(z)q^\ell
\end{gather}
with $f_m^{(\ell)}(z)=0$ for $m+\ell<0$ and $f_m^{(\ell)}(z)$ a polynomial of degree $m+\ell$ in $z$ for $m+\ell\geq 0$. In particular,
\begin{gather}\label{f0n}
f_m^{(0)}(z) = \binom{-\lambda}{m}(-2z)^m + O\big(z^{m-1}\big)
\end{gather}
with the binomial coefficient $\binom{-\lambda}{m}$ as usual $($see \eqref{binomial}$)$.
\end{Lemma}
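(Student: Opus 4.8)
The plan is to read off the two claims directly from the generating function \eqref{fn}, working order by order in $q$ (which is all we need, since we treat everything as a formal power series in $q$). The entire argument is an accounting of two additive weights. Assign to a monomial $z^c\xi^aq^b$ the weights $w_1=a+b$ and $w_2=c-a-b$. Since the coefficient of $\xi^mq^\ell$ in \eqref{fn} is precisely $f_m^{(\ell)}(z)$, the assertion that $f_m^{(\ell)}=0$ for $m+\ell<0$ is equivalent to saying that every monomial occurring in the generating function has $w_1\ge0$, and the assertion that $\deg_zf_m^{(\ell)}\le m+\ell$ is equivalent to every monomial having $w_2\le0$. Since $z$ enters every factor only with non-negative powers, and $w_2\le0$ bounds these powers from above by $a+b$, each coefficient $f_m^{(\ell)}(z)$ is then automatically a polynomial, of degree $\le m+\ell$.

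First I would verify $w_1\ge0$ and $w_2\le0$ for each elementary building block. The decisive structural facts, visible in \eqref{Thetanu}--\eqref{Theta}, are that every negative power of $\xi$ is accompanied by a strictly larger power of $q$ (e.g.\ $q^{2n}\xi^{-1}$ in $\Theta_1$ has $w_1=2n-1\ge1$ and $w_2=1-2n\le0$, and likewise $q^{2n-1}\xi^{-1}$ in $\Theta_3$, $\Theta_4$ has $w_1\ge0$, $w_2\le0$), and that in $\Theta(z,\xi)$ the variable $z$ never appears without a factor $\xi^{\pm1}$, with the $\xi^{-1}$ occurrences additionally carrying a $q^{2n}$. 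Checking the finitely many monomials of each factor, and of the Gegenbauer-type series $(1-2z\xi+\xi^2)^{-\lambda}$, gives $w_1\ge0$ and $w_2\le0$ throughout; since both weights are additive, the inequalities pass to arbitrary products.

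The point needing care --- and, I expect, the main technical obstacle --- is that the exponents $\tilde g_\nu$ and $-\lambda$ are arbitrary complex numbers, so each factor must be expanded by the binomial series, and I must check that the weight bounds survive this and that only finitely many terms contribute at each order in $q$. Here I would factor each theta as an elementary part times a $q$-correction: $\Theta_1=(1-\xi)[1+O(q^2)]$, $\Theta_2=(1+\xi)[1+O(q^2)]$, $\Theta_3$ and $\Theta_4$ equal $1+O(q)$, and $\Theta(z,\xi)=(1-2z\xi+\xi^2)[1+O(q^2)]$. The elementary parts are expanded as honest power series in $\xi$ (or, for $(1-2z\xi+\xi^2)^{-\lambda}$, with polynomial-in-$z$ coefficients), while each correction $1+R$ with $R=O(q)$ is expanded as $\sum_k\binom{g}{k}R^k$, which converges $q$-adically because $R^k=O(q^k)$, so at fixed order $q^\ell$ only finitely many $k$ survive. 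Additivity of $w_1$, $w_2$ then shows the complex powers, and hence the full generating function, still satisfy $w_1\ge0$ and $w_2\le0$, and the finiteness at each $q$-order confirms that every $f_m^{(\ell)}(z)$ is a genuine polynomial of degree $\le m+\ell$.

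Finally, to pin down the leading coefficient \eqref{f0n} I would pass to the associated graded for $w_2$: projection onto the top ($w_2=0$) part is a ring homomorphism on the subring of $w_2\le0$ elements and is compatible, order by order in $q$, with the binomial expansions above. One computes that the $w_2=0$ parts of $\Theta_1$, $\Theta_2$ are $1$, those of $\Theta_3$, $\Theta_4$ are $1\pm q\xi^{-1}$, and that of $\Theta(z,\xi)$ is $(1-2z\xi)(1-2q^2z\xi^{-1})$, so the top-degree generating function collapses to
\[
(1+q\xi^{-1})^{\tilde g_2}(1-q\xi^{-1})^{\tilde g_3}(1-2z\xi)^{-\lambda}(1-2q^2z\xi^{-1})^{-\lambda},
\]
whose coefficient of $z^{m+\ell}\xi^mq^\ell$ is exactly the leading coefficient of $f_m^{(\ell)}$. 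Setting $\ell=0$, only the term with all binomial indices zero contributes, giving $\binom{-\lambda}{m}(-2)^m$ as the coefficient of $z^m$, which is \eqref{f0n}; expanding for general $\ell$ yields an explicit (and generically nonzero) leading coefficient, confirming that the degree is exactly $m+\ell$.
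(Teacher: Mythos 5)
Your proof is correct, and its core mechanism---expanding every factor of \eqref{fn} in binomial series and keeping track of the exponents of $z$, $\xi$, $q$---is the same as in the paper's proof (Appendix~\ref{secProofLemfn2}); but the organization is genuinely different, and you get more out of it. The paper splits the generating function into its $q=0$ part, expanded explicitly to obtain the closed formula \eqref{Eq:fn0explicit} for $f_n^{(0)}(z)$ (whence \eqref{f0n}), times a product of $q$-correction factors, for which it writes out multi-index exponent sums $L$, $M$, $N$ and verifies the single inequality $L-M+N\geq 0$ by inspection. You instead introduce the two additive weights $w_1=a+b$, $w_2=c-a-b$ and check $w_1\ge 0$, $w_2\le 0$ factor by factor, letting additivity do all the bookkeeping; this is equivalent to the paper's inequality but modular and less error-prone, and your $w_1\ge 0$ also delivers the vanishing $f_m^{(\ell)}=0$ for $m+\ell<0$ directly, a point the paper leaves implicit (there it follows from $m,M\ge 0$ combined with $L-M+N\ge0$). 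The genuinely new element is your treatment of the leading coefficient: rather than reading it off the explicit $q=0$ formula, you project onto the $w_2=0$ graded piece, note that this projection is multiplicative on the subring $\{w_2\le 0\}$ and compatible (order by order in $q$) with the binomial expansions, and thereby collapse the full generating function to $(1+q\xi^{-1})^{\tilde g_2}(1-q\xi^{-1})^{\tilde g_3}(1-2z\xi)^{-\lambda}\big(1-2q^2 z\xi^{-1}\big)^{-\lambda}$. This buys you something the paper does not state: a closed-form generating function for the top-degree coefficients of \emph{all} the $f_m^{(\ell)}$, hence an explicit and generically nonzero leading coefficient at every order $\ell$, which substantiates the claim that the degree equals $m+\ell$ exactly, where the paper only exhibits this for $\ell=0$. (As in the paper, ``degree exactly $m+\ell$'' must be read generically in the parameters: for $-\lambda\in\mathbb{N}_0$ the coefficient $\binom{-\lambda}{m}$ can vanish, cf.\ Remark~\ref{remlambdacondition}.)
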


\begin{proof} See Appendix~\ref{secProofLemfn2}. \end{proof}

We will also use the following integral representation of these functions:
\begin{gather}\label{fnint}
f_m(z) = \oint_{\mathcal{C}}\frac{d\xi}{2\pi{\rm i}\xi}\xi^{-m} \frac{\prod\limits_{\nu=0}^3\Theta_{\nu+1}(\xi)^{\tilde g_\nu}}{\Theta(z,\xi)^\lambda}
\end{gather}
with $\mathcal{C}$ the contour once around the circle with radius $|\xi|=R$, $|q|<R<1$, taken counterclockwise (this is equivalent to \eqref{fn} by Cauchy's theorem).

\begin{Remark}\label{remlambdacondition}
It is clear from \eqref{f0n} that the functions $f^{(0)}_m(z)$ are non-trivial polynomials in~$z$ of degree $m$ for all $m\in\mathbb{N}_0$ only if $-\lambda\notin\mathbb{N}_0$ (i.e., only in this case the binomial coef\/f\/icient on the r.h.s.\ in~\eqref{f0n} is always non-zero). If $-\lambda=k\in\mathbb{N}_0$, one can see from \eqref{fnint} that $f^{(0)}_m(z)$ is a~polynomials of degree $\leq k$ for all $m\in\mathbb{N}_0$. Thus, to get complete results, we sometimes impose the condition $-\lambda\notin\mathbb{N}_0$.
\end{Remark}

\subsection{Resonances}\label{secResonances}
We discuss a technical issue encountered in Sections~\ref{secPerturbative} and \ref{secAllOrders}: to prove Propositions~\ref{prop1} and~\ref{prop2} and Theorems~\ref{Thm1} and \ref{Thm2} we f\/ind it convenient to impose the following {\it no-resonance conditions:} {\em either $\Im(\kappa)\neq 0$ and $g_0+g_1\in\mathbb{R}$, or $\kappa=0$ and $g_0+g_1\notin\mathbb{Z}$}. At f\/irst sight this seems to exclude many cases of interests in applications but, at closer inspection, one f\/inds that this is not the case: as explained in this section, our results can be used even in cases where these conditions fail.

We f\/irst explain the reason for these conditions: our solutions are obtained by an unconventional variant of perturbation theory leading to series solutions which are linear combinations of products of the following generalized energy dif\/ference fractions,
\begin{gather}\label{resonance1}
\frac1{E^{(0)}_{m}-E^{(0)}_n -\kappa\ell} = \frac1{(m-n)(m+n+g_0+g_1)-\kappa\ell}
\end{gather}
with $E^{(0)}_n$ in \eqref{PTsolution} the energy eigenvalues of the unperturbed problem $q=0$ and $\ell=0,1,2,\ldots$; we refer to a case $(m,\ell)$ where, for f\/ixed $n$, the denominator in \eqref{resonance1} is zero as {\em resonance}. The reason for the conditions on parameters above is that they are a simple means to rule out resonances, and this guarantees that our series are well-def\/ined.
However, while these conditions are suf\/f\/icient, they are not necessary: one peculiar feature of our perturbation theory is that the series we obtain have singularities coming from energy dif\/ference denominators as in \eqref{resonance1}, but many of these singularities are removable. Thus our results can be extended to cases where our no-resonance conditions fail. We now give a precise formulation for one important such case.

\begin{Lemma}\label{lemResonance}
In the Heun case $\kappa=0$ and for fixed $n\in\mathbb{N}_0$, the results for the expansion coefficients $\mathcal{P}_n^{(\ell)}(z)$ and $\mathcal{E}_n^{(\ell)}$ in Proposition~{\rm \ref{prop1}} and Theorem~{\rm \ref{Thm1}} are valid even in the limit $g_0+g_1\to k\in\mathbb{Z}$ with $k>-(2n+1)$.
\end{Lemma}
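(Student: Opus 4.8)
The plan is to show that, for fixed $n\in\mathbb{N}_0$ and $\kappa=0$, every coefficient $\mathcal{P}_n^{(\ell)}(z)$ and $\mathcal{E}_n^{(\ell)}$ produced by the recursion of Proposition~\ref{prop1} (equivalently, by the closed-form expressions of Theorem~\ref{Thm1}) is a \emph{regular} function of $s:=g_0+g_1$ at the point $s=k$, so that the limit $s\to k$ exists and reproduces the very same quantities. By construction these coefficients are rational in $s$, and—by the discussion in Section~\ref{secResonances}—their poles can occur only at resonances, i.e.\ at zeros of the energy denominators $E^{(0)}_m-E^{(0)}_n=(m-n)(m+n+s)$ appearing in \eqref{resonance1} (with $\kappa=0$). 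A pole at $s=k$ can therefore come only from the single resonant index $m^*:=-k-n$, for which $m^*+n+s\to 0$. It thus suffices to prove that the residue of each $\mathcal{P}_n^{(\ell)}(z)$ and $\mathcal{E}_n^{(\ell)}$ at $s=k$ vanishes.

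First I would translate the hypothesis $k>-(2n+1)$ into the statement $m^*\le n$ (indeed $k\ge -2n$ is equivalent to $-k-n\le n$). The endpoint $k=-2n$ is immediate: there the denominator degenerates into a perfect square, $(m-n)(m+n+s)\to (m-n)^2$, which is nonzero for all $m\neq n$, so no off-diagonal resonance occurs at all and the lemma holds trivially. For $-2n<k$ one has $m^*<n$, and the apparent pole sits in the coefficient of the basis function $f_{m^*}(z)$ inside $\mathcal{P}_n^{(\ell)}$ (and, through the solvability condition, in $\mathcal{E}_n^{(\ell)}$); this is the case that needs work.

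The main step—and the main obstacle—is to show that the source term multiplying $1/(m^*+n+s)$ in the recursion carries a compensating zero at $s=k$. I would do this by induction on $\ell$, exploiting that the normalization of Lemma~\ref{lemmaUniqueness}, which fixes $\mathcal{P}_n$ as a degree-$n$ polynomial with prescribed leading coefficient, singles out a unique solution. Since $m^*\le n$, mixing the resonant partner $f_{m^*}(z)$ into $\mathcal{P}_n$ is compatible with this degree-$n$ normalization: it neither raises the degree nor perturbs the leading coefficient, so the normalization can absorb the resonant coefficient and the limiting solution stays well defined. Concretely I expect the inductive hypothesis to supply a factor $(s-k)$ in the lower-order data feeding the source at index $m^*$—equivalently, a vanishing of the effective matrix element between the two states that become degenerate at $s=k$—and the hard part is to organize the bookkeeping of the recursion so that this cancellation is manifest at every order $\ell$. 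By contrast, when $m^*>n$ (the excluded range $k<-2n$), mixing in $f_{m^*}(z)$ would force a term of degree $m^*>n$ that is incompatible with the normalization, and the singularity is genuine; this explains the sharpness of the bound.

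Finally, having established regularity of each coefficient at $s=k$, I would take the limit $s\to k$ term by term in the formal power series \eqref{P}. Since every $\mathcal{P}_n^{(\ell)}(z)$ and $\mathcal{E}_n^{(\ell)}$ extends continuously, the recursive results of Proposition~\ref{prop1} and the all-orders formulas of Theorem~\ref{Thm1} persist in the limit, which is exactly the assertion of the lemma.
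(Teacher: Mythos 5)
Your proposal does not contain a proof: the step you yourself label ``the hard part'' --- that the source term feeding the resonant coefficient $\alpha_n^{(\ell)}(m^*)$, $m^*=-k-n$, carries a compensating factor $(g_0+g_1-k)$ at every order $\ell$ --- \emph{is} the entire content of the lemma, and you only state that you ``expect'' it. The paper never proves this cancellation within the recursion at all; its proof runs along a completely different line. It observes that, for $\kappa=0$, the coefficients $\mathcal{P}_n^{(\ell)}(z)$ and $\mathcal{E}_n^{(\ell)}$ produced by Proposition~\ref{prop1} and Theorem~\ref{Thm1} must coincide, as functions of the parameters, with the coefficients obtained from \emph{standard} perturbation theory for the same eigenvalue problem (Takemura), since both compute the Taylor coefficients of the same solution; and in standard perturbation theory the fractions \eqref{resonance1} occur only for $m\in\mathbb{N}_0$, $m\neq n$. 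Hence the singularities of the paper's expansion that come from denominators with indices outside that range must be removable. This uniqueness/comparison argument is the key idea of the paper's proof, and it is absent from your proposal; your induction would have to reconstruct from the recursion alone precisely the fact that the comparison argument delivers for free.

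Moreover, the mechanism you propose (degree-compatibility of $f_{m^*}$, $m^*\le n$, with the normalization of Lemma~\ref{lemmaUniqueness}) cannot be the operative one, because it draws the line in the wrong place: the cancellation is tied to resonant indices $m^*<0$, i.e.\ to denominators that have no counterpart in standard perturbation theory, not to $m^*\le n$. In the sub-range $0\le m^*<n$ (i.e.\ $-2n+1\le k\le -n$, which your condition and the lemma's hypothesis both allow when $n\ge 1$) the unperturbed states $m^*$ and $n$ are genuinely degenerate and genuinely coupled, and no compensating zero appears: by \eqref{cEn1} with $\kappa=0$, $\mathcal{E}_n^{(1)}=\gamma_0^1\gamma_1^1\left(\frac{1}{P-1}-\frac{1}{P+1}\right)$ with $P=2n+g_0+g_1$ has a pole at $g_0+g_1=1-2n$ (resonant index $m^*=n-1$) whose residue, $\gamma_0^1\gamma_1^1=(g_1-g_0)(g_2+g_3-1)(g_2-g_3)(g_0+g_1-1)\to -2n(g_1-g_0)(g_2+g_3-1)(g_2-g_3)$, is generically non-zero for $n\ge1$; correspondingly the Jacobi polynomial $P_n^{(g_0-\frac12,g_1-\frac12)}$ degenerates there (its leading coefficient $(n+g_0+g_1)_n/(2^nn!)$ vanishes), so the normalization of Lemma~\ref{lemmaUniqueness} has nothing left to ``absorb''. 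An induction built on your stated mechanism would therefore prove a false statement in that sub-range, and for the same reason your degree-based explanation of the ``sharpness'' at $m^*>n$ is not the right one: any correct argument must distinguish $m^*<0$ from $m^*\ge 0$, which is exactly what the paper's comparison with standard perturbation theory (where only $m\in\mathbb{N}_0$ occurs) achieves, and what degree counting does not see.
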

(The proof is given at the end of this section.)

Thus, in the Heun case $\kappa=0$ and for all $n\in\mathbb{N}_0$, our results can be extended to the case $g_0+g_1\in\mathbb{N}_0$ of interest in applications. We believe that, in a similar manner, our results can be extended to interesting cases with non-zero {\em real} $\kappa$. As will become clear in our proof of Lemma~\ref{lemResonance} below, the challenge to make precise and prove such a result is that the generalization of standard perturbation theory~\cite{THeun} to $\kappa\neq 0$ is not known (to our knowledge).

\begin{proof}[Proof of Lemma~\ref{lemResonance}]
For $\kappa=0$ and f\/ixed $n\in\mathbb{N}_0$, the energy denominator in \eqref{resonance1} appearing in standard perturbation theory \cite{THeun} are only for $m\in\mathbb{N}_0$ dif\/ferent from $n$, and thus all pertinent fractions in \eqref{resonance1} are f\/inite if $g_0+g_1>-(2n+1)$.
In our perturbative expansions we encounter fractions in \eqref{resonance1} for arbitrary integers $m\neq n$.
However, it is clear that our results for the coef\/f\/icients $\mathcal{P}_n^{(\ell)}(z)$ and $\mathcal{E}_n^{(\ell)}$ of the perturbative solution def\/ined in \eqref{P} must be identical with the corresponding results obtained in standard perturbation theory. Thus the singularities coming from resonance fractions in our perturbation expansion must cancel: our perturbative results remain f\/inite even in the limit when $g_0+g_1$ becomes integer.
\end{proof}

\begin{Remark}The resonance problem that we encounter in this paper is very similar to the one which appeared in the treatment of the eCS model in \cite{ELeCS1,ELeCS2}.
This is no coincidence: results in the special case $N=2$ in {\em op.\ cit.}\ correspond to ours in the Lam\'e case $g_0=g_1=g_2=g_3$, $\kappa=0$.
The interested reader is referred to \cite{ELeCS2} for a more extensive discussion of resonances.
\end{Remark}

\section{Key result and special cases}\label{secResults}
In Section~\ref{secKey} we present our key result, which is a transformation of the problem to solve the non-stationary Heun equation in \eqref{Heun} with the ansatz in \eqref{solution} to a problem to solve a~dif\/ferential-dif\/ference equation; as we show in subsequent sections, the latter problem allows for ef\/f\/icient solutions. In Section~\ref{secExplicit} we point out special non-trivial cases where our key result leads to explicit integral representations of solutions of \eqref{Heun}.

\subsection{Dif\/ferential-dif\/ference equation}\label{secKey}
We construct solutions $\psi_n(x)$, $E_n$ of the non-stationary Heun equation in \eqref{Heun} of the form \eqref{solution}--\eqref{eJacobi} in the sense of formal power series in $q$.
One important feature of our method is that we expand
\begin{gather}\label{Pnseries}
\mathcal{P}_n(z) = {\mathcal N}_n \sum_{m \in\mathbb{Z}} \alpha_n(m)f_{m}(z)
\end{gather}
with non-trivial basis functions $f_m(z)$ given in \eqref{fn}--\eqref{lambda} and characterized in Lemma~\ref{Lemma:fn2}.
As will be shown, the following constant ensures the normalization in \eqref{eJacobi},
\begin{gather}\label{cNn}
{\mathcal N}_n = \frac{(n+g_0+g_1)_n}{4^n(\lambda)_n}
\end{gather}
with the raising Pochhammer symbol $(x)_n$ in \eqref{Pochhammer} and $\lambda$ in \eqref{lambda}; note that ${\mathcal N}_n$ is f\/inite and non-zero for all integers $n$ if $-\lambda\notin\mathbb{N}_0$ for $n>0$ and $-(g_0+g_1)\notin\mathbb{N}_0$ for $n<0$ (see the discussion after \eqref{Pochhammer}).

Our key result is equations determining $\alpha_n(m)$ and $\mathcal{E}_n$ and which, as we will show, can be solved ef\/f\/iciently.
To state this result we introduce the convenient shorthand notation
\begin{gather}\label{gammanu}
\gamma_k^\mu\equiv \begin{cases} \tilde g_0(\tilde g_0-1) + (-1)^\mu \tilde g_1(\tilde g_1-1) & \text{if }  \frac{1}{2} k\in\mathbb{N}_0, \\
(-1)^\mu \tilde g_2(\tilde g_2-1) + \tilde g_3(\tilde g_3-1) & \text{if } \frac{1}{2} (k-1)\in\mathbb{N}_0 \end{cases}
\end{gather}
for $(\mu,k)\in\mathbb{Z}\times \mathbb{N}_0$ (recall that $\tilde{g}_0=\frac12(-g_0+g_1+g_2+g_3-\kappa)$, $\tilde{g}_1=\frac12(g_0-g_1+g_2+g_3-\kappa)$ etc.; note that $\gamma_k^\mu=\gamma_{k+2r}^{\mu+2s}$ for all integers~$r$,~$s$).

\begin{Proposition}\label{prop0}
Let $n\in\mathbb{Z}$, $-\lambda\notin\mathbb{N}_0$ for $n>0$ and $-(g_0+g_1)\notin\mathbb{N}_0$ for $n<0$, and assume that $\mathcal{E}_n$ and $\alpha_n(m)$ for $m\in\mathbb{Z}$ satisfy the following system of equations,
\begin{gather}
\left[ -\kappa q\frac{\partial}{\partial q} +E^{(0)}_{m}-\mathcal{E}_n\right]\alpha_n(m)\nonumber\\
\qquad {} =\sum_{\mu=1}^\infty\mu \gamma_0^\mu \alpha_n(m+\mu)
+ \sum_{\mu=1}^\infty \frac{\mu q^\mu}{1-q^{2\mu}}\big( \gamma_0^\mu q^{\mu} + \gamma_1^\mu \big) [\alpha_n(m+\mu)+\alpha_n(m-\mu)]\label{aneqs}
\end{gather}
and the condition
\begin{gather}\label{ic}
\alpha_n(m)|_{q=0} = \begin{cases} 0, & m>n, \\ 1, & m=n. \end{cases}
\end{gather}
Then $\psi_n(x)$, $E_n$ in \eqref{solution}--\eqref{P} and \eqref{Pnseries}--\eqref{cNn} satisfy the non-stationary Heun equation in \eqref{Heun}, and the conditions in \eqref{eJacobi} hold true provided $-(g_0+g_1)\notin\mathbb{N}$.
\end{Proposition}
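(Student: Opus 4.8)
The plan is to establish the implication as stated: assuming that $\mathcal E_n$ and the $\alpha_n(m)$ solve \eqref{aneqs} subject to \eqref{ic}, I would show that $(\tfrac{{\rm i}}{\pi}\kappa\partial_\tau+H(x)-E_n)\psi_n(x)$ vanishes, and separately that \eqref{ic} forces the leading coefficients in \eqref{eJacobi}. The starting point is to insert the integral representation \eqref{fnint} into the expansion \eqref{Pnseries}, turning the ansatz \eqref{solution} into a single contour integral
\[
\psi_n(x)=\mathcal N_n\oint_{\mathcal C}\frac{d\xi}{2\pi{\rm i}\,\xi}\,\Phi_0(x)\,\frac{\prod_{\nu=0}^3\Theta_{\nu+1}(\xi)^{\tilde g_\nu}}{\Theta(\cos x,\xi)^\lambda}\,A_n(\xi),\qquad A_n(\xi)=\sum_{m\in\mathbb Z}\alpha_n(m)\,\xi^{-m},
\]
where $\Phi_0(x)=(2q^{1/4})^{-g_0-g_1}\prod_\nu\theta_{\nu+1}(\tfrac12 x)^{g_\nu}$ is the prefactor in \eqref{solution}. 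The first technical step is to recognise the product $\Phi_0(x)\,\Theta(\cos x,\xi)^{-\lambda}\prod_\nu\Theta_{\nu+1}(\xi)^{\tilde g_\nu}$, with $\xi=e^{{\rm i}y}$, as a constant multiple of the generalised kernel function $K(x,y)$ of \cite{LT}, whose $x$- and $y$-dependence carry the parameters $g_\nu$ and the dual parameters $\tilde g_\nu=\lambda-g_\nu$ of \eqref{lambda}, respectively. Making this identification precise means rewriting $\Theta$ and the $\Theta_{\nu+1}$ in terms of Jacobi theta functions, and it is here that the explicit constant part of $E_n$ in \eqref{solution} — the terms $\kappa^2(\tfrac1{12}-\tfrac{\eta_1}{\pi})$ and $-\sum_\nu g_\nu(g_\nu-1)\tfrac{\eta_1}{\pi}$ — has to be matched against the additive constant $c(\tau)$ appearing in the kernel identity \eqref{Kgen}.

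With the identification in hand I would invoke the kernel identity \eqref{Kgen}, in the form $(\tfrac{{\rm i}}{\pi}\kappa\partial_\tau+H(x)-\tilde H(y)-c)K(x,y)=0$ (cf.\ Lemma~\ref{Lemma:kernel}), to move the $x$-Heun operator onto the dual variable. Applying $\tfrac{{\rm i}}{\pi}\kappa\partial_\tau+H(x)-E_n$ under the integral and using the product rule for $\partial_\tau$ — the $\alpha_n(m)$ are themselves power series in $q$ — gives
\[
\Big(\tfrac{{\rm i}}{\pi}\kappa\partial_\tau+H(x)-E_n\Big)\psi_n(x)=\mathcal N_n\oint_{\mathcal C}\frac{d\xi}{2\pi{\rm i}\,\xi}\,\Big\{\big[(\tilde H(y)+c)K\big]A_n+K\,\tfrac{{\rm i}}{\pi}\kappa\,\partial_\tau A_n-E_n K A_n\Big\}.
\]
Because $\mathcal C$ is a circle (so $y$ runs over a full period and no boundary terms appear), I would transfer $\tilde H(y)=-\partial_y^2+\sum_\nu\tilde g_\nu(\tilde g_\nu-1)\wp(y+\omega_\nu)$ onto $A_n$ by integrating by parts twice, interpreting the elliptic potential through its Fourier expansion, which is valid on $\mathcal C$. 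Using $\tfrac{{\rm i}}{\pi}\kappa\partial_\tau=-\kappa q\partial_q$ (from $q=e^{{\rm i}\pi\tau}$), the brace collapses to $K(x,y)$ times a single differential-difference operator $\mathcal L$ acting on $A_n$, so the whole expression equals $\mathcal N_n\oint_{\mathcal C}\tfrac{d\xi}{2\pi{\rm i}\,\xi}\,K(x,y)\,[\mathcal L A_n(\xi)]$. The core of the argument is then to read off $\mathcal L A_n=0$ in Laurent modes $\xi^{-m}$: the constant Fourier mode of the dual potential, together with $c$ and the constant part of $E_n$, must reconstruct the unperturbed eigenvalue $E^{(0)}_m$ on the diagonal, while the non-constant modes of the $\wp(y+\omega_\nu)$ produce the factors $\tfrac{\mu q^\mu}{1-q^{2\mu}}$ and, through the couplings $\tilde g_\nu(\tilde g_\nu-1)$ and the four shifts $\omega_\nu$, the quantities $\gamma_0^\mu$ and $\gamma_1^\mu$ of \eqref{gammanu}. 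This reproduces exactly the system \eqref{aneqs}; granting \eqref{aneqs}, the integrand vanishes and \eqref{Heun} follows. Note that only this (backward) direction is needed, so no injectivity of the kernel transform has to be argued.

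It remains to check \eqref{eJacobi}. Setting $q=0$ and using \eqref{ic} gives $A_n(\xi)|_{q=0}=\xi^{-n}$, hence $\mathcal P^{(0)}_n(z)=\mathcal N_n f^{(0)}_n(z)$; by \eqref{f0n} this is a polynomial of degree $n$ with leading coefficient $\mathcal N_n\binom{-\lambda}{n}(-2)^n$, which with $\mathcal N_n$ from \eqref{cNn} equals $(n+g_0+g_1)_n/(2^n n!)$ and so matches \eqref{Pnormalization}. Lemma~\ref{lemmaUniqueness} then identifies $\mathcal P^{(0)}_n=P^{(g_0-\frac12,g_1-\frac12)}_n$ and $\mathcal E^{(0)}_n=E^{(0)}_n$; the hypothesis $-(g_0+g_1)\notin\mathbb N$ ensures the leading coefficient is non-zero so the degree is genuinely $n$, while the running assumptions $-\lambda\notin\mathbb N_0$ (for $n>0$) and $-(g_0+g_1)\notin\mathbb N_0$ (for $n<0$) keep $\mathcal N_n$ finite and non-zero.

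I expect the main obstacle to be the two ingredients of the middle paragraph: the integration-by-parts transfer of $\tilde H(y)$ and, above all, the Laurent-mode bookkeeping that must reproduce the exact coefficients in \eqref{aneqs} — in particular the even/odd split in \eqref{gammanu} mirroring the Fourier expansions of $\wp(\,\cdot+\omega_\nu)$ for the real half-periods ($\omega_0,\omega_1$) versus the $\tau$-shifted ones ($\omega_2,\omega_3$), and the emergence of $\tfrac{\mu q^\mu}{1-q^{2\mu}}$. A secondary but genuine difficulty is the constant-matching of the first paragraph, namely pinning down $c(\tau)$ together with the $\eta_1/\pi$ and $\kappa^2$ terms of $E_n$ so that the diagonal comes out as precisely $E^{(0)}_m$; this is the step most likely to require the detailed theta-function identities collected in the appendices.
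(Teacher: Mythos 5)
Your middle paragraphs reproduce, in contour-integral language, the paper's own strategy: the generalized kernel identity of \cite{LT} (Lemma~\ref{Lemma:kernel}), the recognition of the $\Theta$-ratio times the prefactor as (essentially) the kernel function (the paper's generating-function identity \eqref{genfun}), the Fourier expansion \eqref{wpseries} of $\wp(y+\omega_\nu)$, and mode-by-mode comparison (the paper's Lemma~\ref{lemma2}); integrating over the contour versus equating Laurent coefficients is only a cosmetic difference, so that part of your plan is sound in outline. The genuine error is in your verification of \eqref{eJacobi}. You claim that \eqref{ic} gives $A_n(\xi)|_{q=0}=\xi^{-n}$ and hence $\mathcal{P}^{(0)}_n=\mathcal{N}_nf^{(0)}_n$. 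But \eqref{ic} constrains $\alpha_n(m)|_{q=0}$ only for $m\geq n$; for $m<n$ the values are forced by the $q=0$ limit of \eqref{aneqs} and are generically non-zero: the $q=0$ equation at $m=n-1$ gives $\alpha_n^{(0)}(n-1)=\gamma_0^1/b_n^{(0)}(-1)$ with $\gamma_0^1=\tilde g_0(\tilde g_0-1)-\tilde g_1(\tilde g_1-1)$, which vanishes only for special parameters. (Indeed, imposing $\alpha_n^{(0)}(m)=0$ for all $m\neq n$ would make the $q=0$ equations inconsistent unless all $\gamma_0^\mu=0$, i.e., outside the special cases of Corollary~\ref{corSimple} your hypothesis would be vacuous.) Your claim, fed into your own uniqueness step, would assert $\mathcal{N}_nf_n^{(0)}=P_n^{(g_0-\frac12,g_1-\frac12)}$, which is false in general: $f_n^{(0)}$ reduces to a Jacobi polynomial only in cases like \eqref{Simple1}. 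The repair is what the paper does in \eqref{Pn0}: $\mathcal{P}^{(0)}_n=\mathcal{N}_n\sum_{m=0}^{n}\alpha_n^{(0)}(m)f_m^{(0)}$ (terms with $m<0$ drop since $f_m^{(0)}=0$ there), which is still a polynomial of degree $n$ whose top coefficient $\mathcal{N}_n\binom{-\lambda}{n}(-2)^n$ comes from the $m=n$ term alone, so your leading-coefficient matching with \eqref{Pnormalization} and the appeal to Lemma~\ref{lemmaUniqueness} go through after this correction.

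A second, smaller but real flaw sits in your mode bookkeeping. The kernel $\mathcal{K}(x,y)$ is \emph{not} a constant multiple of $\Phi_0(x)\prod_\nu\Theta_{\nu+1}(\xi)^{\tilde g_\nu}/\Theta(\cos x,\xi)^\lambda$: by \eqref{tettet}--\eqref{tettet1} it carries the extra factors ${\rm e}^{{\rm i} y(g_0+g_1)/2}$ and $G^{-\kappa}$, both of which matter. Transferring $-\partial_y^2$ by parts onto your $A_n(\xi)=\sum_m\alpha_n(m)\xi^{-m}$ literally produces $m^2$, and the discrepancy $E_m^{(0)}-m^2=m(g_0+g_1)+\tfrac14(g_0+g_1)^2$ is $m$-dependent, so it cannot be ``reconstructed'' by the constant Fourier mode of the dual potential together with $c$ and the constant part of $E_n$, as your plan asserts; the second derivative must act on ${\rm e}^{-{\rm i}(m+\frac12(g_0+g_1))y}$, which is exactly why the exponentials in \eqref{genfun} carry the half-integer shift. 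Likewise, the $\tau$-dependence of $G^{-\kappa}$ is where the $\kappa^2\big(\tfrac1{12}-\tfrac{\eta_1}{\pi}\big)$ term in $E_n$ originates (the passage from $C_{1,1}$ to $C_{1,1}'$ in the paper's proof of Lemma~\ref{lemma2}), so ignoring it would leave the constants unmatched.
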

(The proof is given in Section~\ref{Proofprop0}.)

It is important to note that the conditions above do not determine $\mathcal{P}_n(z)$ and $\mathcal{E}_n$ uniquely: \eqref{aneqs}--\eqref{ic} are invariant under
\begin{gather}\label{invariance}
\alpha_n(m)\to C   \alpha_n(m),\qquad \mathcal{E}_n\to \mathcal{E}_n - \kappa \frac1{C} q\frac{\partial}{\partial q}C
\end{gather}
for any change of normalization $C=1+O(q)$ analytic in $q$ (this is a consequence of the invariance of~\eqref{Heun} under~\eqref{symmetry}).
This ambiguity can be f\/ixed for generic parameter values by replacing~\eqref{ic} by a stronger condition; see Propositions~\ref{prop1} and~\ref{prop2} for two dif\/ferent ways to do this.

This result also provides simple explicit solutions of \eqref{Heun} for special particular parameter values, as elaborated in Section~\ref{secExplicit}.

\begin{Remark}\label{remP}
It interesting to note that the solutions $\alpha_n(m)$ and $\mathcal{E}_n$ of the equations in Proposition~\ref{prop0} are essentially independent of $n$ in the following sense: they are of the form
\begin{gather*}
\alpha_n(m)=a(m-n) ,\qquad \mathcal{E}_n = (P/2)^2 +\tilde{\mathcal{E}}
\end{gather*}
with functions $a(k)$ and $\tilde{\mathcal{E}}$ depending on $n$ only in the combination
\begin{gather}\label{Pdef}
P \equiv 2n+g_0+g_1
\end{gather}
(this is easy to check).
This and the notation introduced here are useful in computations and in the presentation of results; we use this in \eqref{cEn1}--\eqref{cEn2} and in Appendix~\ref{appExplicitResults}.
\end{Remark}

\subsection{Explicit solutions by integrals}\label{secExplicit}
The solutions $\alpha_n(m)$, $\mathcal{E}_n$ of \eqref{aneqs}--\eqref{ic} are complicated in general.
However, there exist non-trivial cases where Proposition~\ref{prop0} gives simple explicit solutions of the non-stationary Heun equation:

\begin{Corollary}\label{corSimple}
Let $n\in\mathbb{Z}$, $\lambda$ a complex parameter such that $-\lambda\notin\mathbb{N}_0$ for $n>0$ and $-(g_0+g_1)\notin\mathbb{N}_0$ for $n<0$,
\begin{gather}\label{parametersgnu}
\tilde g_\nu\in\{0,1\},\qquad g_\nu=\lambda-\tilde g_\nu,\qquad \nu=0,1,2,3,\qquad \kappa=2\lambda-\sum_{\nu=0}^3\tilde g_\nu,
\end{gather}
${\mathcal N}_n$ in \eqref{cNn}, and $\mathcal{C}$ the integration contour defined after \eqref{fnint}. Then the non-stationary Heun equation in \eqref{Heun} has solutions $\psi_n(x)$, $E_n$ as in \eqref{solution} with
\begin{gather}\label{ExplicitSolution}
\mathcal{P}_n(z) = {\mathcal N}_n\oint_{\mathcal{C}}\frac{d\xi}{2\pi{\rm i}\xi}\xi^{-n} \frac{\prod\limits_{\nu=0}^3\Theta_{\nu+1}(\xi)^{\tilde g_\nu}}{\Theta(z,\xi)^\lambda}, \qquad
\mathcal{E}_n = \left(n+\frac{g_0+g_1}2\right)^2
\end{gather}
and such that the conditions in \eqref{eJacobi} hold true provided $-(g_0+g_1)\notin\mathbb{N}$.
\end{Corollary}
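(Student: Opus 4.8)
The plan is to apply Proposition~\ref{prop0} directly, by exhibiting an explicit single-term solution of the system \eqref{aneqs}--\eqref{ic} for the special parameters \eqref{parametersgnu}. The key observation I would start from is that the constraint $\tilde g_\nu\in\{0,1\}$ forces $\tilde g_\nu(\tilde g_\nu-1)=0$ for every $\nu$. Inspecting the definition \eqref{gammanu}, each $\gamma_k^\mu$ is a linear combination of the quantities $\tilde g_\nu(\tilde g_\nu-1)$, so under \eqref{parametersgnu} one has $\gamma_0^\mu=\gamma_1^\mu=0$ for all $\mu$. Since the entire right-hand side of \eqref{aneqs} is built from $\gamma_0^\mu$ and $\gamma_1^\mu$, it vanishes identically.

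With the right-hand side gone, the system \eqref{aneqs} decouples into the family of first-order equations $\big[-\kappa q\frac{\partial}{\partial q}+E^{(0)}_m-\mathcal{E}_n\big]\alpha_n(m)=0$. I would then propose the $q$-independent candidate $\alpha_n(m)=\delta(m,n)$ together with $\mathcal{E}_n=E^{(0)}_n=\big(n+\tfrac12(g_0+g_1)\big)^2$ and verify it solves everything: the term $q\frac{\partial}{\partial q}\delta(m,n)$ vanishes because $\delta(m,n)$ is constant in $q$, the factor $\big(E^{(0)}_m-\mathcal{E}_n\big)\delta(m,n)$ vanishes for $m\neq n$ trivially and for $m=n$ because $\mathcal{E}_n=E^{(0)}_n$, and the initial condition \eqref{ic} holds since $\delta(m,n)|_{q=0}$ equals $0$ for $m>n$ and $1$ for $m=n$. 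Note that because the right-hand side is absent there are no energy-difference fractions of the form \eqref{resonance1} generated, so no resonance subtleties arise and the solution is genuinely a single term to all orders in $q$.

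Having produced a solution of \eqref{aneqs}--\eqref{ic}, Proposition~\ref{prop0} immediately yields that $\psi_n(x)$, $E_n$ as in \eqref{solution} solve the non-stationary Heun equation \eqref{Heun}. It then remains to read off $\mathcal{P}_n(z)$: substituting $\alpha_n(m)=\delta(m,n)$ into the expansion \eqref{Pnseries} collapses the sum to $\mathcal{P}_n(z)=\mathcal{N}_n f_n(z)$, and inserting the integral representation \eqref{fnint} of $f_n(z)$ produces exactly \eqref{ExplicitSolution}, with $\mathcal{E}_n=\big(n+\tfrac12(g_0+g_1)\big)^2$ the value already fixed above. The normalization conditions \eqref{eJacobi} are inherited from the corresponding clause of Proposition~\ref{prop0}, valid under $-(g_0+g_1)\notin\mathbb{N}$, while the hypotheses $-\lambda\notin\mathbb{N}_0$ (for $n>0$) and $-(g_0+g_1)\notin\mathbb{N}_0$ (for $n<0$) are precisely those ensuring $\mathcal{N}_n$ in \eqref{cNn} is finite and nonzero and are carried over verbatim. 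The whole argument is essentially bookkeeping; the only genuine step is recognizing the vanishing of the $\gamma_k^\mu$, after which the explicit form of the solution is forced.
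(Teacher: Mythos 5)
Your proposal is correct and follows exactly the paper's own argument: the paper's proof likewise observes that \eqref{parametersgnu} forces $\gamma_k^\mu=0$ for all $k$, $\mu$, exhibits the single-term solution $\alpha_n(m)=\delta(m,n)$, $\mathcal{E}_n=E^{(0)}_n$ of \eqref{aneqs}--\eqref{ic}, and invokes Proposition~\ref{prop0}. Your write-up merely spells out the verification (vanishing of the $q$-derivative term, the collapse of \eqref{Pnseries} to $\mathcal{N}_n f_n(z)$ via \eqref{fnint}) that the paper leaves implicit.
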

\begin{proof}
The assumptions imply $\gamma_k^\mu=0$ for all $k$, $\mu$, and the equations in \eqref{aneqs}--\eqref{ic} in this case have the solution $\alpha_n(m)=\delta(m,n)$, $\mathcal{E}_n=E^{(0)}_n$. Proposition~\ref{prop0} implies the result.
\end{proof}

Note that \eqref{parametersgnu} gives several dif\/ferent one-parameter families $(\{g_\nu\}_{\nu=0}^3,\kappa)$, depending on $\lambda$, where this result provides simple integral representations of elliptic Jacobi polynomials $\mathcal{P}_n(z)$. Moreover, these formulas are non-trivial even in the trigonometric case $q=0$: the results above imply the following integral representations of Jacobi polynomials,
\begin{gather}
P_n^{\big(g-\frac{1}{2},g-\frac{1}{2}\big)}(z) =  \frac{(n+2g)_n}{4^n(g)_n} \oint_{\mathcal{C}}\frac{d\xi}{2\pi{\rm i}\xi}\xi^{-n} \frac1{(1-2z\xi+\xi^2)^g}, \nonumber\\
P_n^{\big(g-\frac{1}{2},g+\frac{1}{2}\big)}(z) =  \frac{(n+2g+1)_n}{4^n(g+1)_n} \oint_{\mathcal{C}}\frac{d\xi}{2\pi{\rm i}\xi}\xi^{-n} \frac{(1-\xi)}{(1-2z\xi+\xi^2)^{g+1}},\nonumber\\
P_n^{\big(g+\frac{1}{2},g-\frac{1}{2}\big)}(z) =  \frac{(n+2g+1)_n}{4^n(g+1)_n} \oint_{\mathcal{C}}\frac{d\xi}{2\pi{\rm i}\xi}\xi^{-n} \frac{(1+\xi)}{(1-2z\xi+\xi^2)^{g+1}},\nonumber\\
P_n^{\big(g-\frac{1}{2},g-\frac{1}{2}\big)}(z) =  \frac{(n+2g)_n}{4^n(g+1)_n} \oint_{\mathcal{C}}\frac{d\xi}{2\pi{\rm i}\xi}\xi^{-n} \frac{(1-\xi^2)}{(1-2z\xi+\xi^2)^{g+1}}\label{Simple1}
\end{gather}
for $n\in\mathbb{N}_0$ (the latter identities are obtained from Corollary~\ref{corSimple} for the cases where $(\tilde{g}_0,\tilde{g}_1,\lambda)$ is $(0,0,g)$, $(1,0,g+1)$, $(0,1,g+1)$, and $(1,1,g+1)$, respectively). The f\/irst identity in \eqref{Simple1} is equivalent to a well-known generating function for the Gegenbauer polynomials (see \eqref{Gegenbauer}), and also the others can be found in \cite{Dig10}.

\begin{Remark}\label{RelationToFLNO}
Fateev et al.\ gave integral representations of solutions of the non-stationary Lam\'e equation \cite{FLNO} which, in a special case, are similar to the one above for $g_0=g_1=g_2=g_3$. More specif\/ically, the solution given in equation~(3.11) of~\cite{FLNO} can be proved by a simple variant of the argument that we used in order to obtain our solution in~\eqref{ExplicitSolution} (note that our parameter $\kappa$ corresponds to $-2/b^2$ in~\cite{FLNO}). We also mention similar integral representations of solutions of the non-stationary Lam\'e equation appearing in works by Etingof and Kirillov (see \cite[Theorem~5.1]{EK}) and Felder and Varchenko (see, e.g., \cite[Example~1.2]{FV}).
\end{Remark}

We emphasize that the result in Corollary~\ref{corSimple} is more general in that it includes some non-stationary Heun cases that cannot be reduced to a non-stationary Lam\'e case.

\begin{Remark}Corollary~\ref{corSimple} can be obtained as special case $(N,\tilde{N})=(1,0)$ from Proposition~4.1 in~\cite{LT}. However, this is not easy to see, and it is therefore worthwhile to emphasize this result here.
\end{Remark}

\section{Proof of key result}\label{Proofprop0}
We turn to the proof of Proposition~\ref{prop0}. In Section~\ref{secKernelMethod} we derive the key identity using the kernel function method. In Section~\ref{secTrigonometric} we consider the trigonometric case $q=0$ to prove that the conditions in \eqref{ic} and the normalization condition in \eqref{cNn} yield a solution satisfying \eqref{eJacobi}.

\subsection{Kernel function method}\label{secKernelMethod}
We introduce the notation
\begin{gather}\label{H}
H\big(x;\{ g_\nu\}_{\nu=0}^3\big) \equiv -\frac{\partial^2}{\partial x^2} +\sum_{\nu=0}^3 g_\nu(g_\nu-1)\wp(x+\omega_\nu)
\end{gather}
with $\omega_\nu$ in \eqref{omeganu}. This allows us to write the non-stationary Heun equation in \eqref{Heun} as
\begin{gather}\label{Heun1}
\left( \frac{{\rm i}}{\pi}\kappa \frac{\partial}{\partial\tau} + H\big(x;\{g_\nu\}_{\nu=0}^3\big) - E\right)\psi(x)=0.
\end{gather}
We also recall the def\/initions of $\tilde g_\nu$ and $\lambda$ in \eqref{lambda}. Note that $H$ in \eqref{H} is the Hamiltonian def\/ining the $BC_1$ Inozemtsev model \cite{THeun}.

We obtain our result from the following generalized kernel function identity:

\begin{Lemma}\label{Lemma:kernel}
The function
\begin{gather}\label{K}
\mathcal{K}(x,y)\equiv
\frac{\prod\limits_{\nu=0}^3\theta_{\nu+1}\big(\frac{1}{2} x\big)^{g_\nu}\theta_{\nu+1}\big(\frac{1}{2} y\big)^{\tilde g_\nu}}{\theta_1\big(\frac{1}{2} (x+y)\big)^{\lambda}\theta_1\big(\frac{1}{2} (x-y)\big)^{\lambda}}
\end{gather}
obeys the identity
\begin{gather}\label{kernel}
\left(\frac{{\rm i}}{\pi} \kappa\frac{\partial}{\partial\tau} + H\big(x;\{g_\nu\}_{\nu=0}^3\big) - H\big(y;\{\tilde g_\nu\}_{\nu=0}^3\big)-C_{1,1} \right) \mathcal{K}(x,y)=0
\end{gather}
with
\begin{gather}\label{C11}
C_{1,1} = 2\kappa (1-\lambda)\frac{\eta_1}{\pi}.
\end{gather}
\end{Lemma}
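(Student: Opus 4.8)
The plan is to prove the identity \eqref{kernel} by the Liouville-type argument standard for kernel functions: divide by $\mathcal{K}(x,y)$, show that the resulting object is a doubly periodic function of $x$ (and, by symmetry, of $y$) with no poles, and hence a constant that can be evaluated. Concretely, set
\[
F(x,y) \equiv \frac{1}{\mathcal{K}(x,y)}\left(\frac{{\rm i}}{\pi}\kappa\frac{\partial}{\partial\tau} + H\big(x;\{g_\nu\}_{\nu=0}^3\big) - H\big(y;\{\tilde g_\nu\}_{\nu=0}^3\big)\right)\mathcal{K}(x,y),
\]
and use $\partial_x^2\mathcal{K}/\mathcal{K} = \partial_x^2\log\mathcal{K} + (\partial_x\log\mathcal{K})^2$ (and likewise for $y$ and $\tau$) to express $F$ through logarithmic derivatives of the theta functions in \eqref{K}. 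The two inputs I would take from Appendix~\ref{appSpecialFunctions} are the Weierstrass relation $\partial_x^2\log\theta_{\nu+1}(\tfrac12 x) = -\wp(x+\omega_\nu) + c$, with the \emph{same} additive constant $c\propto\eta_1/\pi$ for all $\nu$ (the half-period shifts relating the $\theta_{\nu+1}$ only add terms linear in $x$), and the heat equation $\frac{{\rm i}}{\pi}\partial_\tau\theta_{\nu+1}(\tfrac12 x) = \partial_x^2\theta_{\nu+1}(\tfrac12 x)$. With these the whole statement becomes an identity among (quasi-)elliptic functions.

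First I would establish that $F$ is elliptic in $x$ with the periods $(2\pi,2\pi\tau)$ of $\wp$. Invariance under $x\to x+2\pi$ is immediate, since $\mathcal{K}$ then changes only by an $x$-independent factor and $F$ depends on $\mathcal{K}$ only through $\partial_x\log\mathcal{K}$ and $\partial_x^2\log\mathcal{K}$. Invariance under $x\to x+2\pi\tau$ is the crux, and the place where $\kappa\neq0$ enters. Using the quasi-periodicity $\rho(z+\pi\tau)=\rho(z)-2{\rm i}$ of $\rho\equiv\theta_1'/\theta_1$ one finds that $\partial_x\log\mathcal{K}$ shifts by the constant $-{\rm i}\big(\sum_\nu g_\nu - 2\lambda\big) = -{\rm i}\kappa$, with $\lambda$ as in \eqref{lambda}; hence in $F$ the kinetic term $-(\partial_x\log\mathcal{K})^2$ acquires a spurious, non-periodic piece $+2{\rm i}\kappa\,\partial_x\log\mathcal{K}$. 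The role of the heat term is to cancel exactly this: differentiating the quasi-periodicity relation $\log\mathcal{K}(x+2\pi\tau,y) = \log\mathcal{K}(x,y) - {\rm i}\kappa x + (\text{$x$-independent})$ in $\tau$ shows that $\frac{{\rm i}}{\pi}\kappa\,\partial_\tau\log\mathcal{K}$ acquires the opposite piece $-2{\rm i}\kappa\,\partial_x\log\mathcal{K}$, so the two cancel. I expect verifying this cancellation — i.e.\ that the heat operator restores ellipticity precisely when the balancing condition fails — to be the main obstacle.

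Next I would check that $F$ has no poles in a fundamental domain. At $x\equiv-\omega_\nu$ (a zero of $\theta_{\nu+1}(\tfrac12 x)$) the local behaviour $\partial_x\log\mathcal{K}\sim g_\nu/(x+\omega_\nu)$ gives a double pole $-(g_\nu^2-g_\nu)/(x+\omega_\nu)^2$ in $-\partial_x^2\mathcal{K}/\mathcal{K}$, which is cancelled by the double pole $+g_\nu(g_\nu-1)\wp(x+\omega_\nu)$ of the potential in $H\big(x;\{g_\nu\}_{\nu=0}^3\big)$; the residual simple poles vanish since $\wp$ is even. At $x\equiv\pm y$ (zeros of $\theta_1(\tfrac12(x\mp y))$) there is no potential pole, but the equal double poles of $-\partial_x^2\mathcal{K}/\mathcal{K}$ and $+\partial_y^2\mathcal{K}/\mathcal{K}$ cancel between the two kinetic terms. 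The remaining cross terms — products mixing $\partial\log\theta_{\nu+1}(\tfrac12 x)$ with $\partial\log\theta_1(\tfrac12(x\pm y))$ — produce only simple poles whose residues cancel by the same evenness and the theta addition identities; this is routine. With $F$ holomorphic and doubly periodic in $x$, Liouville's theorem makes it independent of $x$, and the symmetry of the operator and of $\mathcal{K}$ under $x\leftrightarrow y$, $g_\nu\leftrightarrow\tilde g_\nu$ makes it independent of $y$ as well.

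It remains to evaluate the constant. Collecting the $x$- and $y$-independent contributions — the constants $c\propto\eta_1/\pi$ from $\partial_x^2\log\theta_{\nu+1}(\tfrac12 x)=-\wp(x+\omega_\nu)+c$ and its $y$-analogue, weighted by $g_\nu$, $\tilde g_\nu$ and $\lambda$ as dictated by \eqref{K}, together with the $x,y$-independent leftover from the heat-term computation above — and simplifying with $\sum_\nu g_\nu = 2\lambda+\kappa$ and $\sum_\nu\tilde g_\nu = 2\lambda-\kappa$ should reproduce $C_{1,1} = 2\kappa(1-\lambda)\eta_1/\pi$ as in \eqref{C11}. Alternatively one may fix the constant by evaluating $F$ in a convenient limit; in either case this last step is pure bookkeeping.
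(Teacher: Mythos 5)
Your route is genuinely different from the paper's: the paper does not prove the identity from scratch but obtains it as the special case $N=M=1$ of Corollary~3.2 in \cite{LT}, whereas you attempt a self-contained Liouville-type argument. That strategy is viable in principle, and your periodicity analysis is correct: under $x\to x+2\pi\tau$ one has $\partial_x\log\mathcal{K}\to\partial_x\log\mathcal{K}-{\rm i}\kappa$ because $\sum_\nu g_\nu-2\lambda=\kappa$, and the heat term compensates both the non-periodic piece $2{\rm i}\kappa\,\partial_x\log\mathcal{K}$ and the constant piece $\kappa^2$ (which you omit, but which must also cancel for ellipticity) acquired by $-(\partial_x\log\mathcal{K})^2$.

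The gap is in the pole analysis: you treat the heat term $\frac{{\rm i}}{\pi}\kappa\,\partial_\tau\log\mathcal{K}$ as regular, and claim the residual simple poles of the kinetic-plus-potential part vanish ``since $\wp$ is even''. That is true only at $x\equiv-\omega_0,-\omega_1$, the zeros of $\theta_1\big(\tfrac12 x\big)$, $\theta_2\big(\tfrac12 x\big)$, which do not move with $\tau$; there parity kills the residues. At $x\equiv-\omega_2,-\omega_3$ (zeros of $\theta_3\big(\tfrac12 x\big)$, $\theta_4\big(\tfrac12 x\big)$) the regular part of $\partial_x\log\mathcal{K}$ does \emph{not} vanish: writing $\theta_4\big(\tfrac12 x\big)=c\,{\rm e}^{{\rm i} x/2}\theta_1\big(\tfrac12(x+\pi\tau)\big)$ etc., the constants ${\rm i} g_\nu/2$ from the four theta factors and the contribution $-{\rm i}\lambda$ from the two factors $\theta_1\big(\tfrac12(x\pm y)\big)$ add up to $\tfrac{{\rm i}}{2}\big(\sum_\nu g_\nu-2\lambda\big)={\rm i}\kappa/2$, so the cross term in $-(\partial_x\log\mathcal{K})^2$ leaves a simple pole with residue $-{\rm i}\kappa g_\nu\neq0$ for $\kappa\neq0$. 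This residue is cancelled not by parity but by the heat term, which is \emph{not} pole-free: precisely because these theta zeros move with $\tau$, one has $\partial_\tau\log\theta_4\big(\tfrac12 x\big)\sim\pi/(x+\pi\tau)$ near $x=-\pi\tau$, giving the heat term the residue $+{\rm i}\kappa g_3$ there (and similarly for $\nu=2$). So for $\kappa\neq0$ --- exactly the non-stationary case the lemma is about --- the mechanism is a kinetic-versus-heat cancellation that your sketch never accounts for; followed literally, your plan would find uncancelled residues at $-\omega_2,-\omega_3$ and stall, or wrongly suggest the function is not constant. A smaller imprecision: at $x\equiv\pm y$ the residue cancellation between the $x$- and $y$-kinetic terms requires the duplication identity $\theta_1(2z)\propto\prod_\nu\theta_{\nu+1}(z)$ together with $g_\nu+\tilde g_\nu=\lambda$, not evenness alone. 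Once these cancellations are supplied, the remainder of your plan (Liouville plus evaluation of the constant, which indeed yields \eqref{C11}) goes through.
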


\begin{proof}This is the special case $N=M=1$ of Corollary~3.2 in \cite{LT} (note that the symbols $\beta$ and $A_{1,1}$ in \cite{LT} correspond to $-2\pi{\rm i} \tau$ and $2\kappa$ here, respectively).
\end{proof}

We use this to compute the action of $\frac{{\rm i}}{\pi} \kappa \frac{\partial}{\partial\tau} + H(x;\{g_\nu\}_{\nu=0}^3)$ on the functions
\begin{gather}\label{Fn}
F_m(x)\equiv \big(2q^{1/4}\big)^{-(g_0+g_1)}\left( \prod_{\nu=0}^3 \theta_{\nu+1}\big(\tfrac{1}{2} x\big)^{g_\nu} \right) f_m(\cos x),\qquad m\in\mathbb{Z}
\end{gather}
with $f_m(z)$ def\/ined in \eqref{fn}. For this we note that
\begin{gather*}
\mathcal{K}(x,y) = 2^{g_0+g_1}{\rm e}^{{\rm i} \pi\tilde g_0/2}G^{-\kappa }\big(2q^{1/4}\big)^{-(g_0+g_1)}\left( \prod_{\nu=0}^3 \theta_{\nu+1}\big(\tfrac{1}{2} x\big)^{g_\nu} \right)\frac{\prod\limits_{\nu=0}^3\Theta_{\nu+1}({\rm e}^{{\rm i} y})^{\tilde g_\nu}}{\Theta(\cos x,{\rm e}^{{\rm i} y})^\lambda} {\rm e}^{{\rm i} y(g_0+g_1)/2}
\end{gather*}
with $G$ def\/ined in \eqref{G} (we used \eqref{tettet} and \eqref{tettet1}). This and \eqref{fn} show that $\mathcal{K}(x,y)$ is a~generating function for the functions in \eqref{Fn}:
\begin{gather}\label{genfun}
 \mathcal{K}(x,y) = 2^{g_0+g_1}{\rm e}^{{\rm i} \pi\tilde g_0/2}G^{-\kappa} \sum_{m\in\mathbb{Z}} F_m(x) {\rm e}^{{\rm i} \big(m+\frac{1}{2}(g_0+g_1)\big)y}, \qquad 0<\Im(y)<\pi\Im(\tau).
\end{gather}
To evaluate $H(y;\{\tilde g_\nu\}_{\nu=0}^3) \mathcal{K}(x,y)$ we use the following expansions
\begin{gather}\label{wpseries}
\wp(y+\omega_\nu) = - \frac{\eta_1}{\pi} - \sum_{\mu\in{\mathbb{Z}'}} (S_\nu)_\mu{\rm e}^{{\rm i}\mu y},\qquad 0<\Im(y)<\pi\Im(\tau)
\end{gather}
with
\begin{gather}
(S_0)_\mu = \mu\frac{1}{1-q^{2\mu}}= |\mu|\frac{q^{|\mu|-\mu}}{1-q^{2|\mu|}},\nonumber\\
(S_1)_\mu = (-1)^\mu\mu\frac{1}{1-q^{2\mu}}= (-1)^\mu|\mu|\frac{q^{|\mu|-\mu}}{1-q^{2|\mu|}},\nonumber\\
(S_2)_\mu = (-1)^\mu\mu\frac{q^\mu}{1-q^{2\mu}}= (-1)^\mu|\mu|\frac{q^{|\mu|}}{1-q^{2|\mu|}},\nonumber\\
(S_3)_\mu = \mu\frac{q^\mu}{1-q^{2\mu}}= |\mu|\frac{q^{|\mu|}}{1-q^{2|\mu|}}\label{Snumu}
\end{gather}
(see Appendix~\ref{appwp} for derivations of these formulas). From this the following result is obtained by straightforward computations.

\begin{Lemma}\label{lemma2}
The functions in \eqref{Fn} satisfy
\begin{gather}\label{HFn}
\left( \frac{{\rm i}}{\pi}\kappa \frac{\partial}{\partial\tau} + H\big(x;\{g_\nu\}_{\nu=0}^3\big) \right)F_n(x) = \big( C_0 + E^{(0)}_n \big)F_n(x) -\sum_{\mu\in{\mathbb{Z}'}} S_\mu F_{n-\mu}(x)
\end{gather}
with $E^{(0)}_n$ in \eqref{PTsolution},
\begin{gather}\label{gammanu1}
S_\mu\equiv \sum_{\nu=0}^3\gamma_\nu(S_\nu)_\mu,\qquad \gamma_\nu \equiv \tilde g_\nu(\tilde g_\nu-1),
\end{gather}
$\tilde g_\nu\equiv \lambda-g_\nu$, and
\begin{gather}\label{C0}
C_0 = \kappa ^2\left(\frac1{12}-\frac{\eta_1}{\pi}\right) -\sum_{\nu=0}^3 g_\nu(g_\nu-1)\frac{\eta_1}{\pi}.
\end{gather}
\end{Lemma}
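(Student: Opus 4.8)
The plan is to extract \eqref{HFn} from the generating-function identity \eqref{genfun} by applying the operator $\frac{{\rm i}}{\pi}\kappa\frac{\partial}{\partial\tau}+H(x;\{g_\nu\}_{\nu=0}^3)$ to both sides and comparing Fourier coefficients in $y$. On the left this is exactly the operator appearing in \eqref{HFn}, so by Lemma~\ref{Lemma:kernel} its action on $\mathcal{K}(x,y)$ equals $(H(y;\{\tilde g_\nu\}_{\nu=0}^3)+C_{1,1})\mathcal{K}(x,y)$; the whole computation thus reduces to expanding the right-hand side of \eqref{kernel} in the modes ${\rm e}^{{\rm i}(m+\frac12(g_0+g_1))y}$ and reading off the coefficient of the $m=n$ mode.

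First I would differentiate \eqref{genfun} term by term. Since the modes ${\rm e}^{{\rm i}(m+\frac12(g_0+g_1))y}$ are $\tau$-independent and the only $\tau$-dependence of the prefactor $2^{g_0+g_1}{\rm e}^{{\rm i}\pi\tilde g_0/2}G^{-\kappa}$ is through $G^{-\kappa}$, the Leibniz rule produces, besides $(\frac{{\rm i}}{\pi}\kappa\frac{\partial}{\partial\tau}+H(x))F_m(x)$ in the $m$-th mode, an extra term $-\frac{{\rm i}}{\pi}\kappa^2(\frac{\partial}{\partial\tau}\log G)F_m(x)$. On the right I would apply $H(y;\{\tilde g_\nu\})=-\partial_y^2+\sum_\nu\tilde g_\nu(\tilde g_\nu-1)\wp(y+\omega_\nu)$ mode by mode: the operator $-\partial_y^2$ reproduces the factor $(m+\frac12(g_0+g_1))^2=E^{(0)}_m$ of \eqref{PTsolution}, the constant parts $-\eta_1/\pi$ of the expansions \eqref{wpseries} contribute $-\frac{\eta_1}{\pi}\sum_\nu\gamma_\nu$ with $\gamma_\nu=\tilde g_\nu(\tilde g_\nu-1)$, and the oscillatory parts, collected via $S_\mu=\sum_\nu\gamma_\nu(S_\nu)_\mu$ as in \eqref{gammanu1} and \eqref{Snumu}, shift the mode index by $\mu$.

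Because $(m+\mu)+\frac12(g_0+g_1)=\big(m+\frac12(g_0+g_1)\big)+\mu$, the shifted contributions reindex cleanly, and after dividing out the common factor ${\rm e}^{{\rm i}\frac12(g_0+g_1)y}$ both sides become Laurent series in ${\rm e}^{{\rm i} y}$ convergent in the annulus $0<\Im(y)<\pi\Im(\tau)$; matching the coefficient of ${\rm e}^{{\rm i}(n+\frac12(g_0+g_1))y}$ is then justified by uniqueness of Laurent coefficients. Solving for $(\frac{{\rm i}}{\pi}\kappa\frac{\partial}{\partial\tau}+H(x))F_n(x)$ reproduces the difference term $-\sum_{\mu\in\mathbb{Z}'}S_\mu F_{n-\mu}(x)$ of \eqref{HFn} together with a scalar multiple of $F_n(x)$ equal to $E^{(0)}_n+C_{1,1}-\frac{\eta_1}{\pi}\sum_\nu\gamma_\nu+\frac{{\rm i}}{\pi}\kappa^2\frac{\partial}{\partial\tau}\log G$.

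The one substantive step, which I expect to be the main obstacle, is then to check that this scalar equals $C_0+E^{(0)}_n$, i.e.\ that $C_{1,1}-\frac{\eta_1}{\pi}\sum_\nu\gamma_\nu+\frac{{\rm i}}{\pi}\kappa^2\frac{\partial}{\partial\tau}\log G=C_0$. Writing $\tilde g_\nu=\lambda-g_\nu$ and using $\sum_\nu g_\nu=2\lambda+\kappa$ from \eqref{lambda}, a short computation gives $\sum_\nu\gamma_\nu=\sum_\nu g_\nu(g_\nu-1)+2\kappa(1-\lambda)$, so that $C_{1,1}$ from \eqref{C11} exactly cancels the $2\kappa(1-\lambda)\eta_1/\pi$ piece and leaves $-\frac{\eta_1}{\pi}\sum_\nu g_\nu(g_\nu-1)$, which matches the second term of $C_0$ in \eqref{C0}. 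What remains is the purely $\tau$-dependent identity $\frac{{\rm i}}{\pi}\frac{\partial}{\partial\tau}\log G=\frac1{12}-\frac{\eta_1}{\pi}$, which accounts for the $\kappa^2(\frac1{12}-\frac{\eta_1}{\pi})$ term in \eqref{C0}; this follows from the explicit definition of $G$ in \eqref{G} and the $q$-expansion of $\eta_1/\pi$ in \eqref{eta1pi}. Everything else is the routine term-by-term Fourier matching described above, consistent with the statement that \eqref{HFn} is obtained ``by straightforward computations''.
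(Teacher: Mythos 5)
Your proposal is correct and follows essentially the same route as the paper: both rest on the kernel identity of Lemma~\ref{Lemma:kernel}, the Fourier expansions \eqref{wpseries}--\eqref{Snumu} of $H(y;\{\tilde g_\nu\}_{\nu=0}^3)$, mode matching in \eqref{genfun}, the identity $\frac{{\rm i}}{\pi}\frac{\partial}{\partial\tau}\log G=\frac1{12}-\frac{\eta_1}{\pi}$, and the algebra $\sum_\nu\gamma_\nu=\sum_\nu g_\nu(g_\nu-1)+2\kappa(1-\lambda)$. The only (purely cosmetic) difference is bookkeeping: the paper first multiplies the kernel identity by $G^{\kappa}$ so that the generating function $G^{\kappa}\mathcal{K}(x,y)$ has a $\tau$-independent prefactor, absorbing the $\kappa^2\big(\frac1{12}-\frac{\eta_1}{\pi}\big)$ term into $C_{1,1}'$, whereas you keep the $G^{-\kappa}$ factor in place and produce the same term via the Leibniz rule.
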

(The proof can be found at the end of this section.)

To proceed we make the ansatz
\begin{gather}\label{ansatz11}
\psi(x) = \mathcal{N} \sum_{m\in\mathbb{Z}}\alpha(m) F_m(x)
\end{gather}
with $\mathcal{N}$ a $q$-independent normalization constant to be determined. Inserting this in \eqref{Heun} and using Lemma~\ref{lemma2} we obtain
\begin{gather*}
\sum_{m\in\mathbb{Z}}\left(\left(\frac{{\rm i}}{\pi}\kappa \frac{\partial}{\partial\tau} + E^{(0)}_m -\mathcal{E} \right)\alpha(m)
-\sum_{\mu\in{\mathbb{Z}'}}S_\mu\alpha(m+\mu) \right) F_m(x)=0
\end{gather*}
with $\mathcal{E}\equiv E-C_0$. It follows that the function in \eqref{ansatz11} is a solution of \eqref{Heun} if the coef\/f\/icients~$\alpha(m)$ and $\mathcal{E}$ satisfy
\begin{gather}\label{aneqs1}
\left(\frac{{\rm i}}{\pi}\kappa \frac{\partial}{\partial\tau} + E^{(0)}_m -\mathcal{E} \right)\alpha(m)
= \sum_{\mu\in{\mathbb{Z}'}}S_\mu\alpha(m+\mu) .
\end{gather}
Inserting \eqref{Snumu} and \eqref{gammanu1} and changing variables from $\tau$ to $q={\rm e}^{{\rm i}\pi\tau}$ allow us to write this as
\begin{gather}
\left( - \kappa q\frac{\partial}{\partial q} +  E_{m}^{(0)} -\mathcal{E} \right) \alpha(m) = \sum_{\mu=1}^\infty \mu(\gamma_0 + (-1)^\mu\gamma_1) \alpha(m+\mu) \nonumber\\
\qquad{} + \sum_{\mu=1}^\infty \mu\left( \frac{[\gamma_0 + (-1)^\mu\gamma_1)]q^{2\mu}}{1-q^{2\mu}} +\frac{[(-1)^\mu\gamma_2+\gamma_3]q^{\mu}}{1-q^{2\mu}}\right) [ \alpha(m+\mu) + \alpha(m-\mu)]\label{aneqs2}
\end{gather}
equivalent to \eqref{aneqs}. This proves that $\psi(x)$ in \eqref{ansatz11} and $E=\mathcal{E}+C_0$ solve \eqref{Heun} provided \eqref{aneqs2} is fulf\/illed.

We are left to show that a solution $\alpha(m)=\alpha_n(m)$, $\mathcal{E}=\mathcal{E}_n$ satisfying the condition in \eqref{ic} is such that \eqref{eJacobi} holds true. This is done in Section~\ref{secTrigonometric}.

\begin{proof}[Proof of Lemma~\ref{lemma2}] It follows from \eqref{kernel} that
\begin{gather}\label{key1}
\left(\frac{{\rm i}}{\pi} \kappa \frac{\partial}{\partial\tau} + H\big(x;\{g_\nu\}_{\nu=0}^3\big) \right)G^{\kappa} \mathcal{K}(x,y) = \big( H\big(y;\{\tilde g_\nu\}_{\nu=0}^3\big) + C_{1,1}' \big) G^{\kappa} \mathcal{K}(x,y)
\end{gather}
with
\begin{gather}\label{C11p}
C_{1,1}' \equiv C_{1,1} + \kappa^2\left(\frac1{12}-\frac{\eta_1}{\pi}\right)
\end{gather}
(we computed
\begin{gather*}
\frac{{\rm i}}{\pi}\frac1{G}\frac{\partial}{\partial\tau} G = \sum_{n=1}^\infty \frac{2nq^{2n}}{1-q^{2n}} = \sum_{n=1}^\infty \sum_{\ell=1}^\infty 2nq^{2n\ell} =\sum_{\ell=1}^\infty \frac{2q^{2\ell}}{(1-q^{2\ell})^2}= \frac1{12}-\frac{\eta_1}{\pi}
\end{gather*}
using \eqref{G} and \eqref{eta1pi}). To compute the r.h.s.\ in \eqref{key1} we use
\begin{gather*}
H\big(y;\{\tilde g_\nu\}_{\nu=0}^3\big) = -\frac{\partial^2}{\partial y^2} - \sum_{\mu\in{\mathbb{Z}'}}S_\mu{\rm e}^{{\rm i} \mu y}-\sum_{\nu=0}^3\gamma_\nu\frac{\eta_1}{\pi},\qquad 0<\Im(y)<\pi\Im(\tau)
\end{gather*}
obtained by inserting \eqref{wpseries} in \eqref{H} and using \eqref{gammanu1}.
Using \eqref{genfun} and equating terms with the same factor ${\rm e}^{{\rm i} (n+\frac{1}{2}(g_0+g_1))y}$ give \eqref{HFn} and
\begin{gather*}
C_0 = C_{1,1}'-\sum_{\nu=0}^3\gamma_\nu\frac{\eta_1}{\pi}.
\end{gather*}
Using \eqref{C11} and \eqref{C11p}, inserting \eqref{gammanu1}, and recalling \eqref{lambda}, one obtains the formula in \eqref{C0}.
\end{proof}

\subsection{Trigonometric limit}\label{secTrigonometric}
For $q=0$ the equations in \eqref{aneqs}--\eqref{ic} simplify to
\begin{gather}\label{aneqs0}
\big[E^{(0)}_{m}-\mathcal{E}^{(0)}_n\big]\alpha^{(0)}_n(m) =\sum_{\mu=1}^{n-m}\mu \gamma_0^\mu \alpha^{(0)}_n(m+\mu)
\end{gather}
and $\alpha^{(0)}_n(n)=1$; we use the superscript ``$(0)$'' to indicate that a quantity is for $q=0$. This system of equations is an eigenvalue problem for a non-degenerate triangular matrix which can be solved by recursion: $\mathcal{E}_n^{(0)}=E^{(0)}_n$, $\alpha^{(0)}_n(n)=1$, and
\begin{gather}\label{Eq:an0iteration}
\alpha_n^{(0)}(m<n) = \frac{1}{b^{(0)}_n(m-n)}\sum_{\mu=1}^{n-m} \mu\gamma_0^\mu \alpha^{(0)}_n(m+\mu)
\end{gather}
with
\begin{gather}\label{bnm}
b_n^{(0)}(m) \equiv E^{(0)}_{m+n}-E^{(0)}_{n} = m(m+2n+g_0+g_1),
\end{gather}
provided that $g_0+g_1$ is such that
\begin{gather}\label{nores0}
b_n^{(0)}(-m)\neq 0 \qquad \forall\, m=1,2,\ldots,n.
\end{gather}
The latter condition is implied by our assumption that $-(g_0+g_1)\notin\mathbb{N}$.

We recall that $f^{(0)}_m(z)\equiv \left. f_m(z)\right|_{q=0}$ is zero for $m<0$ and a polynomial satisfying \eqref{f0n} for $m\geq 0$ (see Lemma~\ref{Lemma:fn2}), and thus
\begin{gather}\label{Pn0}
\mathcal{P}^{(0)}_n(z) \equiv \left. \mathcal{P}_n(z)\right|_{q=0} = {\mathcal N}_n \sum_{m=0}^n \alpha_n^{(0)}(m)f_{m}^{(0)}(z) = {\mathcal N}_n \binom{-\lambda}{n}(-2z)^n+O\big(z^{n-1}\big)
\end{gather}
is a polynomial of degree $n$. Moreover, the special case $q=0$ of results proved above implies that $\psi_n^{(0)}(x)\equiv (\sin \frac{1}{2} x)^{g_0}(\cos \frac{1}{2} x)^{g_1}\mathcal{P}_n^{(0)}(\cos x)$ is a solution of \eqref{PT} with $E^{(0)}=E^{(0)}_n$. Thus, by Lemma~\ref{lemmaUniqueness}, \eqref{eJacobi} holds true provided that the coef\/f\/icient of the leading term in~\eqref{Pn0} agrees with the one in \eqref{Pnormalization}. This f\/ixes the normalization constant $\mathcal{N}_n$ as in \eqref{cNn} and completes the proof of Proposition~\ref{prop0}.

\section{Recursive algorithms}\label{secPerturbative}
We now present algorithms to compute the expansion coef\/f\/icients $\mathcal{P}_n^{(\ell)}(z)$ and $\mathcal{E}_n^{(\ell)}$ of our solution def\/ined in \eqref{P}, which are based on Proposition~\ref{prop0}.
The f\/irst algorithm given in Section~\ref{secPerturbative1} is such that it can be used for all values of $\kappa$, including $\kappa=0$.
We then give a second algorithm, which is a variant of the f\/irst, which is simpler but requires $\kappa\neq 0$; see Section~\ref{secPerturbative2}.

\subsection{Algorithm I}\label{secPerturbative1}
We compute the functions $\mathcal{P}_n^{(\ell)}(z)$ and $\mathcal{E}_n^{(\ell)}$ def\/ined in \eqref{P} by solving the equations in \eqref{aneqs}--\eqref{ic} with the ansatz
\begin{gather}\label{anEnseries}
\alpha_n(m) = \sum_{\ell=0}^\infty \alpha_n^{(\ell)}(m)q^\ell,\qquad \mathcal{E}_n = \sum_{\ell=0}^\infty \mathcal{E}_n^{(\ell)}q^\ell,
\end{gather}
which leads to recursive relations allowing a straightforward solution. Inserting this solution in \eqref{Pnseries} and using Lemma~\ref{Lemma:fn2} one obtains representations of the functions $\mathcal{P}_n^{(\ell)}(z)$ which make manifest that they are polynomials of degree $n+\ell$ in $z$ for $n+\ell\geq 0$ and zero otherwise.

To formulate this result we f\/ind it useful to introduce the shorthand notation
\begin{gather}\label{bnml}
b_n^{(\ell)}(k) \equiv E^{(0)}_{n+k}-E^{(0)}_n-\kappa\ell=k(k+2n+g_0+g_1) -\kappa\ell
\end{gather}
($E^{(0)}_n$ in \eqref{PTsolution}) and recall the def\/inition of $\lambda$ in \eqref{lambda}. Note that the denominators in the fractions in \eqref{resonance1} are equal to $b_n^{(\ell)}(m-n)$.

\begin{Proposition}\label{prop1}
Let $n\in\mathbb{Z}$, $-\lambda\notin\mathbb{N}_0$ for $n>0$ and $-(g_0+g_1)\notin\mathbb{N}_0$ for $n<0$, $\{f^{(\ell)}_m(z)\}_{m\in\mathbb{Z}}$ the functions defined and characterized in Lemma~{\rm \ref{Lemma:fn2}}, and assume that either $\Im(\kappa)\neq 0$ and $g_0+g_1\in\mathbb{R}$ or $\kappa=0$ and $g_0+g_1\notin \mathbb{Z}$.\footnote{Note that the latter are the no-resonance conditions discussed in Section~\ref{secResonances}.} Then the non-stationary Heun equation in \eqref{Heun} has a~unique solution as in \eqref{solution}--\eqref{P} given by
\begin{gather}\label{Pnellz}
\mathcal{P}_n^{(\ell)}(z) = \mathcal{N}_n\sum_{\ell'=0}^{\ell} \sum_{m=-\ell'}^{n+\ell-\ell'} \alpha_n^{(\ell-\ell')}(m)f_m^{(\ell')}(z)
\end{gather}
with $\mathcal{N}_n$ in \eqref{cNn}, and $\alpha_n^{(\ell)}(m)$, $\mathcal{E}^{(\ell)}_n$ are determined by the following recursion relations,
\begin{gather}
 \alpha_n^{(\ell)}(m) = \frac1{b_n^{(\ell)}(m-n)}\Biggl( \sum_{\ell'=1}^\ell \mathcal{E}_n^{(\ell')}\alpha_n^{(\ell-\ell')}(m) + \sum_{\mu=1}^{n-m+\ell}\mu \gamma_0^\mu \alpha_n^{(\ell)}(m+\mu) \nonumber\\
 \hphantom{\alpha_n^{(\ell)}(m) =}{} + \sum_{\ell'=0}^{\ell-1} \sum_{\mu=1}^{\ell-\ell'}\sum_{k=1}^{\left \lfloor{\frac{\ell-\ell'}{\mu}}\right \rfloor } \mu \gamma_k^\mu \delta_{\ell,\ell'+k\mu} \big[\alpha_n^{(\ell')}(m+\mu)+\alpha_n^{(\ell')}(m-\mu)\big] \Biggr)\label{recur22}
\end{gather}
for $m\neq n$, $\ell\geq 0$, and
\begin{gather}\label{Eellneqs}
\mathcal{E}^{(\ell)}_n =
- \sum_{\mu=1}^{\ell}\mu \gamma_0^\mu \alpha_n^{(\ell)}(n+\mu) -
\sum_{\ell'=0}^{\ell-1} \sum_{\mu=1}^{\ell-\ell'}\sum_{k=1}^{\left \lfloor{\frac{\ell-\ell'}{\mu}} \right\rfloor } \mu\gamma_k^\mu \delta_{\ell,\ell'+k\mu} \big[\alpha_n^{(\ell')}(n+\mu)+\alpha_n^{(\ell')}(n-\mu)\big]
\end{gather}
for $\ell\geq 1$, together with the following conditions
\begin{gather}\label{ic3}
\alpha_n^{(0)}(n)=1,\qquad \alpha_n^{(\ell)}(n)=0\qquad \forall\, \ell\geq 1,\qquad \alpha_n^{(\ell)}(m) = 0\qquad \forall\, m>n+\ell,\quad \ell\geq 0.
\end{gather}
The functions $\mathcal{P}_n^{(\ell)}(z)$ thus obtained are polynomials of degree $n+\ell$ in $z$ for $n+\ell\geq 0$ and zero otherwise, and \eqref{eJacobi} holds true provided $-(g_0+g_1)\notin\mathbb{N}$.
\end{Proposition}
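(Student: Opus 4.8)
The plan is to reduce the entire statement to solving the differential–difference equation \eqref{aneqs} as a formal power series. By Proposition~\ref{prop0}, once I exhibit coefficients $\alpha_n(m)$ and $\mathcal{E}_n$ (power series in $q$) solving \eqref{aneqs} subject to \eqref{ic}, the associated $\psi_n(x)$, $E_n$ in \eqref{solution}--\eqref{P} automatically solve \eqref{Heun}, and \eqref{eJacobi} holds once $-(g_0+g_1)\notin\mathbb{N}$. So the substance is to solve \eqref{aneqs} order by order with the ansatz \eqref{anEnseries} and then read off the polynomial structure of the $\mathcal{P}_n^{(\ell)}(z)$ from \eqref{Pnseries}.

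First I would insert \eqref{anEnseries} into \eqref{aneqs} and extract the coefficient of $q^\ell$. Two observations streamline this: the operator $\tfrac{{\rm i}}{\pi}\kappa\partial_\tau=-\kappa q\partial_q$ acts on the $q^\ell$-coefficient as multiplication by $-\kappa\ell$, and the geometric expansion gives $\tfrac{\mu q^\mu}{1-q^{2\mu}}(\gamma_0^\mu q^\mu+\gamma_1^\mu)=\sum_{k\ge1}\mu\gamma_k^\mu q^{k\mu}$, which is exactly where the parity bookkeeping built into the definition \eqref{gammanu} of $\gamma_k^\mu$ pays off. Collecting terms and using $\mathcal{E}_n^{(0)}=E_n^{(0)}$ together with $b_n^{(\ell)}(m-n)=E_m^{(0)}-E_n^{(0)}-\kappa\ell$ recasts the $q^\ell$-equation as $b_n^{(\ell)}(m-n)\,\alpha_n^{(\ell)}(m)=(\text{RHS})$. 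For $m\neq n$ I divide by $b_n^{(\ell)}(m-n)$ to obtain \eqref{recur22}; for $m=n$, the condition $\alpha_n^{(\ell)}(n)=0$ ($\ell\ge1$) from \eqref{ic3} together with $b_n^{(\ell)}(0)=-\kappa\ell$ collapses the equation into \eqref{Eellneqs}, defining $\mathcal{E}_n^{(\ell)}$.

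Next I would verify that these relations form a genuine, non-circular recursion and then transfer the polynomial structure. The right-hand side of \eqref{recur22} involves only coefficients of strictly lower $q$-order, or same-order coefficients $\alpha_n^{(\ell)}(m+\mu)$ with $\mu\ge1$ (hence larger index); since $\alpha_n^{(0)}(m')=0$ for $m'>n$, the same-order term $\mathcal{E}_n^{(\ell)}\alpha_n^{(0)}(m)$ is absent for $m>n$, so one computes $\alpha_n^{(\ell)}(m)$ for $m>n$ downward from $m=n+\ell$, then $\mathcal{E}_n^{(\ell)}$ via \eqref{Eellneqs}, then $\alpha_n^{(\ell)}(m)$ for $m<n$. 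To get \eqref{Pnellz} I substitute the solution into \eqref{Pnseries}, expand $f_m(z)=\sum_{\ell'}f_m^{(\ell')}(z)q^{\ell'}$, and collect $q^\ell$; Lemma~\ref{Lemma:fn2} ($f_m^{(\ell')}=0$ for $m+\ell'<0$, degree $m+\ell'$ otherwise) and $\alpha_n^{(\ell-\ell')}(m)=0$ for $m>n+\ell-\ell'$ truncate the double sum to a finite one. Each surviving term has degree $\le(n+\ell-\ell')+\ell'=n+\ell$, giving a polynomial of degree $\le n+\ell$ (the asserted degree coming from the top-index contributions), while the summation range is empty for $n+\ell<0$, giving $\mathcal{P}_n^{(\ell)}=0$. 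The $\ell=0$ analysis of Section~\ref{secTrigonometric} together with Lemma~\ref{lemmaUniqueness} then fixes $\mathcal{N}_n$ as in \eqref{cNn} and yields \eqref{eJacobi}. Uniqueness follows since the ansatz reduces any solution to one of \eqref{aneqs}, whose residual gauge freedom \eqref{invariance} with $C=1+O(q)$ is eliminated by \eqref{ic3} (which forces $\alpha_n(n)\equiv1$), after which the recursion determines all coefficients.

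The main obstacle is the nonvanishing of the denominators $b_n^{(\ell)}(m-n)=(m-n)(m+n+g_0+g_1)-\kappa\ell$ for every $(m,\ell)$ with $m\neq n$ reached by the recursion — and this range is all of $m\in\mathbb{Z}$, since through the $\ell'=\ell$ slice of \eqref{Pnellz} the coefficients $\alpha_n^{(0)}(m)$ enter for arbitrarily negative $m$. For $\ell\ge1$ the imaginary part equals $-\Im(\kappa)\ell$, so $\Im(\kappa)\neq0$ rules out vanishing; the delicate layer is $\ell=0$ (and the entire $\kappa=0$ case), where $b_n^{(0)}(m-n)=(m-n)(m+n+g_0+g_1)$ vanishes precisely at the degeneracy $E_m^{(0)}=E_n^{(0)}$, excluded for all integers $m\neq n$ by the hypothesis on $g_0+g_1$. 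Confirming that these are the only denominators entering, and that the no-resonance conditions forbid them across the full integer range, is the heart of the argument.
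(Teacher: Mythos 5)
Your proposal follows the paper's own proof essentially step for step: insert the ansatz \eqref{anEnseries} and the geometric series for $q^\mu/(1-q^{2\mu})$ into \eqref{aneqs}, compare powers of $q$ to obtain the order-$\ell$ equations, use the conditions \eqref{ic3} to truncate the sums, to make the recursion triangular, and to turn the $m=n$ equation into \eqref{Eellneqs}, divide by the denominators $b_n^{(\ell)}(m-n)$ (non-zero by the no-resonance hypotheses), and then read off \eqref{Pnellz}, the degree statement, the normalization \eqref{cNn} and \eqref{eJacobi} from \eqref{Pnseries}, Lemma~\ref{Lemma:fn2}, and the $q=0$ analysis of Section~\ref{secTrigonometric}. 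This is the same decomposition and the same key steps as in the paper (including its appeal to the stated assumptions to justify the divisions and the gauge-fixing argument for uniqueness), so there is nothing essential to add.
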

(The proof can be found at the end of this section.)

The equations in \eqref{recur22}--\eqref{ic3} comprise the f\/irst of our perturbative solution algorithms.
It is important to note that it has a triangular structure and is f\/inite, which implies that it determines each coef\/f\/icient $\alpha_n^{(\ell)}(m)$ and $\mathcal{E}_n^{(\ell)}$ as a sum of f\/initely many terms.
To be specif\/ic: The set of pairs $(\ell,m)$ is partially ordered as follows,
\begin{gather}
\label{order}
(\ell',m')\prec (\ell,m)\Leftrightarrow (\ell'<\ell) \qquad \text{or}  \qquad ( \ell'=\ell\mbox{ and } m'>m ) ,
\end{gather}
and \eqref{recur22} and \eqref{Eellneqs} have the form
\begin{gather*}
\alpha_n^{(\ell)}(m) = \sum_{(\ell',m')\prec (\ell,m)} B_{(\ell,m),(\ell',m')}\alpha_n^{(\ell')}(m'),\\
 \mathcal{E}^{(\ell)}_n=\sum_{(\ell',m')\prec(\ell,n)} C_{(\ell,n),(\ell',m')}\alpha_n^{(\ell')}(m')
\end{gather*}
with explicitly known coef\/f\/icients $B_{(\ell,m),(\ell',m')}$ and $C_{(\ell,n),(\ell',m')}$ which, for f\/ixed f\/irst argument, are non-zero only for {\em finitely} many values of the second argument $(\ell',m')$. For the convenience of the reader we compute the solution $\alpha_n^{(\ell)}(m)$, $\mathcal{E}_n^{(\ell)}$ of this system of equations analytically for $\ell=0,1,2$ in Appendix~\ref{appExplicitResults}, and we obtain the following result for the generalized eigenvalues, $\mathcal{E}_n=(P/2)^2 +\mathcal{E}_n^{(1)}q + \mathcal{E}_n^{(2)}q^2+O(q^3)$ with $P=2n+g_0+g_1$ and
\begin{gather}\label{cEn1}
\mathcal{E}_n^{(1)}= \gamma_0^1\gamma_1^1\left(\frac1{P-1}-\frac1{P+1-\kappa}\right),\\
\mathcal{E}_n^{(2)} = \left(\gamma_0^1\right)^2\left(\frac{1}{P-1} - \frac{1}{P+1-2\kappa} \right) + \left(\gamma_1^1\right)^2 \left(\frac{1}{P-1+\kappa} - \frac{1}{P+1-\kappa} \right)\nonumber\\
\hphantom{\mathcal{E}_n^{(2)}=}{} + 4\gamma_0^0\gamma_1^0 \left( \frac{1}{2(P-2)} - \frac{1}{2(P+2)-2\kappa}\right)
- 2 (\gamma_0^1)^{2} \gamma_1^0 \left( \frac{1}{[2(P-2)][P-1]} \right. \nonumber\\
\left.\hphantom{\mathcal{E}_n^{(2)}=}{}
 - \frac{1}{[P-1][P+1-2\kappa]} +\frac{1}{[P+1-2\kappa][2(P+2)-2\kappa]}\right)\nonumber\\
 \hphantom{\mathcal{E}_n^{(2)}=}{}
  - 2\gamma_0^0 \left(\gamma_1^1\right)^2 \left( \frac{1}{[2(P-2)][P-1+\kappa]} - \frac{1}{[P-1+\kappa][P+1-\kappa]} \right.\nonumber\\
\left.\hphantom{\mathcal{E}_n^{(2)}=}{}  + \frac{1}{[P+1-\kappa][2(P+2)-2\kappa]} \right)
- (\gamma_0^1)^2(\gamma_1^1)^2 \left( \frac{1}{[P-1+\kappa][P-1]^2} \right. \nonumber\\
\hphantom{\mathcal{E}_n^{(2)}=}{} - \frac{1}{[P+1-2\kappa][P+1-\kappa]^2}- \frac{1}{[P-1+\kappa][P-1][P+1-\kappa]}
\nonumber\\
\hphantom{\mathcal{E}_n^{(2)}=}{} + \frac{1}{[P-1][P+1-\kappa][P+1-2\kappa]}  -\frac{1}{[2(P-2)][P-1][P-1+\kappa]}\nonumber\\
\left.\hphantom{\mathcal{E}_n^{(2)}=}{}
+ \frac{1}{[P+1-\kappa][P+1-2\kappa][2(P+2)-2\kappa]} \right).\label{cEn2}
\end{gather}
This computation can be straightforwardly implemented in a symbolic programming language like MAPLE or MATHEMATICA and, in this way, extended to higher values of $\ell$.\footnote{We computed $\mathcal{E}_n^{(\ell)}$ up to $\ell=6$ using ordinary laptops and with computation times of the order of minutes.}

As proved by Ruijsenaars in \cite{RHeun}, the Heun equation in \eqref{Heun} for $\kappa=0$ has solutions $\psi_n(x)$ and corresponding eigenvalues $E_n$ such that the latter are invariant under all permutations of the following af\/f\/ine combinations of coupling parameters,
\begin{gather}\label{S4}
c_0\equiv g_0+g_2-1,\qquad c_1\equiv g_1+g_3-1,\qquad c_2\equiv g_1-g_3,\qquad c_3\equiv g_0-g_2.
\end{gather}
We used MAPLE to check that the coef\/f\/icients $\mathcal{E}_n^{(\ell)}$, $\ell=1,2,3,4,5$, determined by the algorithm in Proposition~\ref{prop2} have this permutation symmetry for $\kappa=0$ (but not for $\kappa\neq 0$).

\begin{proof}[Proof of Proposition~\ref{prop1}]
Insert the ansatz in \eqref{anEnseries} and the geometric series for $q^{\mu}/(1-q^{2\mu})$ into \eqref{aneqs}, compare terms which have the same power in $q$, and obtain
\begin{gather}
b_n^{(\ell)}(m-n) \alpha_n^{(\ell)}(m) = \sum_{\ell'=1}^\ell \mathcal{E}_n^{(\ell')}\alpha_n^{(\ell-\ell')}(m) + \sum_{\mu=1}^\infty\mu \gamma_0^\mu \alpha_n^{(\ell)}(m+\mu) \nonumber\\
\hphantom{b_n^{(\ell)}(m-n) \alpha_n^{(\ell)}(m) =}{}
+ \sum_{\ell'=0}^{\ell-1} \sum_{\mu=1}^\infty\sum_{k=1}^\infty \mu\gamma_k^\mu \delta_{\ell,\ell'+k\mu} \big[\alpha_n^{(\ell')}(m+\mu)+\alpha_n^{(\ell')}(m-\mu)\big],\label{recur}
\end{gather}
where $b_n^{(\ell)}(k)$ is short for $-\kappa\ell + E_{n+k}^{(0)} - \mathcal{E}_n^{(0)}$ for $\ell=0,1,2,\ldots$.
The f\/irst condition in \eqref{ic3} implies $\mathcal{E}_n^{(0)}=E^{(0)}_n$, and one obtains the formula for $b_n^{(\ell)}(k)$ in \eqref{bnml}.

One can check that the f\/irst two conditions in \eqref{ic3} are consistent with \eqref{recur} (details are given in Section~\ref{secTrigonometric}). Moreover, they imply that it is consistent to set
\begin{gather}\label{hwconditions1}
\alpha^{(\ell)}_n(m)=0\qquad \forall\, m>n+\ell,\quad  \ell\geq 0
\end{gather}
(this can be proved by induction: \eqref{hwconditions1} is true by assumption for $\ell=0$; for $\ell\geq 1$ it follows from~\eqref{recur} that $\alpha_n^{(\ell)}(m)$ for $m>n+\ell$ is a linear combination of $\alpha_n^{(\ell')}(m')$ with $\ell'<\ell$, $m'\geq m-\mu$ and $\mu\geq 1$ constrained by the Kronecker deltas in \eqref{recur}, i.e., $m'\geq m-(\ell-\ell')>n+\ell'$, which proves the claim). Thus one can restrict the second sum in~\eqref{recur} to $1\leq\mu\leq n-m+\ell$. One can also check that the inf\/inite sums in the terms in the second line in \eqref{recur} can be replaced by f\/inite ones: the Kronecker deltas there are non-zero for $\ell-\ell'=2k\mu$ and $\ell-\ell'=(2k-1)\mu$, and since $\mu,k\geq 1$ this is possible only if $1\leq\mu\leq\ell-\ell'$ and $1\leq k \leq (\ell-\ell')/(2\mu)+\frac{1}{2} $. Since $b_n^{(\ell)}(k)$ is non-zero for $m\neq n$ by assumption, we obtain~\eqref{recur22} from~\eqref{recur}. For $m=n$ and $\ell\geq 1$, the f\/irst and last conditions in \eqref{ic3} make~\eqref{recur} into an equation determining $\mathcal{E}^{(\ell)}_n$ as in~\eqref{Eellneqs}.
\end{proof}

\subsection{Algorithm II}\label{secPerturbative2}
For $\kappa\neq 0$, the condition $\alpha_n^{(\ell\geq 1)}(n)=0$ in \eqref{ic3} can be replaced by the condition $\mathcal{E}_n^{(\ell\geq 1)}=0$ and, in this way, a somewhat simpler solution algorithm is obtained.

 Recall the def\/initions of $\lambda$ in \eqref{lambda}$, \gamma_\mu^k$ in \eqref{gammanu} and $b_n^{(\ell)}(k)$ in \eqref{bnml}.

\begin{Proposition}\label{prop2}
Let $n\in\mathbb{Z}$, $-\lambda\notin\mathbb{N}_0$ for $n>0$ and $-(g_0+g_1)\notin\mathbb{N}_0$ for $n<0$, $\big\{f^{(\ell)}_m(z)\big\}_{m\in\mathbb{Z}}$ the functions defined and characterized in Lemma~{\rm \ref{Lemma:fn2}}, and assume that $\Im(\kappa)\neq 0$ and $g_0+g_1\in\mathbb{R}$.\footnote{The latter are no-resonance conditions discussed in Section~\ref{secResonances}.}
Then the non-stationary Heun equation in \eqref{Heun} has a unique solution $\psi_n(x)$, $E_n$ as in \mbox{\eqref{solution}--\eqref{P}} given by~\eqref{cNn}, \eqref{Pnellz} and
\begin{gather}\label{Ensimple}
\mathcal{E}_n = \left(n+\frac{g_0+g_1}{2} \right)^2,
\end{gather}
with coefficients $\alpha_n^{(\ell)}(m)$ determined by the following recursion relations,
\begin{gather}
\alpha_n^{(\ell)}(m) = \frac1{b_n^{(\ell)}(m-n)}\Biggl( \sum_{\mu=1}^{n-m+\ell}\mu \gamma_0^\mu \alpha_n^{(\ell)}(m+\mu) \nonumber\\
\hphantom{\alpha_n^{(\ell)}(m) =}{} +\sum_{\ell'=0}^{\ell-1} \sum_{\mu=1}^{\ell-\ell'}\sum_{k=1}^{\left \lfloor{\frac{\ell-\ell'}{\mu}}\right \rfloor } \mu \gamma_k^\mu \delta_{\ell,\ell'+k\mu} \big[\alpha_n^{(\ell')}(m+\mu)+\alpha_n^{(\ell')}(m-\mu)\big] \Biggr)\label{recur33}
\end{gather}
for all $m\in{\mathbb{Z}'}$ if $\ell=0$ and all $m\in\mathbb{Z}$ if $\ell\geq 1$, together with the following conditions
\begin{gather}\label{ic33}
\alpha_n^{(0)}(n)=1,\qquad \alpha_n^{(\ell)}(m)=0\qquad \forall\, m>n+\ell,\quad \ell\geq 0.
\end{gather}
The functions $\mathcal{P}_n^{(\ell)}(z)$ thus obtained are polynomials of degree $n+\ell$ in $z$ for $n+\ell\geq 0$ and zero otherwise.
\end{Proposition}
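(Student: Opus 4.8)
The plan is to derive Proposition~\ref{prop2} from Proposition~\ref{prop0}: it suffices to produce coefficients $\alpha_n(m)$ and a generalized eigenvalue $\mathcal{E}_n$ solving the system \eqref{aneqs}--\eqref{ic}, for then $\psi_n(x)$, $E_n$ built as in \eqref{solution}--\eqref{P} and \eqref{Pnseries}--\eqref{cNn} automatically solve \eqref{Heun}. First I would substitute the power-series ansatz \eqref{anEnseries} into \eqref{aneqs} and collect equal powers of $q$, reproducing the order-by-order relations \eqref{recur} exactly as in the proof of Proposition~\ref{prop1}. The one distinguishing step is to impose the normalization \eqref{Ensimple}, i.e.\ $\mathcal{E}_n^{(0)}=E_n^{(0)}$ and $\mathcal{E}_n^{(\ell)}=0$ for all $\ell\geq 1$, in place of the condition $\alpha_n^{(\ell\geq 1)}(n)=0$ of Algorithm~I. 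That this choice is available is exactly the content of the invariance \eqref{invariance}: since $\kappa\neq 0$, the higher coefficients $\mathcal{E}_n^{(\ell)}$ can be gauged away by the formal normalization $C=1+O(q)$ obtained by integrating $q\partial_q\log C=\kappa^{-1}\sum_{\ell\geq 1}\mathcal{E}_n^{(\ell)}q^\ell$ term by term. With $\mathcal{E}_n^{(\ell)}=0$ for $\ell\geq 1$ the first sum on the right of \eqref{recur} vanishes, reducing that relation to \eqref{recur33} for every $m$ and $\ell$.

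The heart of the matter, and the point where $\kappa\neq 0$ is indispensable, is the behaviour of \eqref{recur33} at $m=n$. By \eqref{bnml} the prefactor there is $1/b_n^{(\ell)}(0)=1/(-\kappa\ell)$. For $\ell=0$ this prefactor is absent because \eqref{recur} at $(\ell,m)=(0,n)$ degenerates to $0=0$ — its right-hand side vanishes by the highest-weight condition $\alpha_n^{(0)}(n+\mu)=0$ — so $\alpha_n^{(0)}(n)=1$ survives as a free normalization, supplying the base case in \eqref{ic33}. For every $\ell\geq 1$, by contrast, the no-resonance hypothesis $\Im(\kappa)\neq 0$, $g_0+g_1\in\mathbb{R}$ gives $\Im\big(b_n^{(\ell)}(k)\big)=-\ell\,\Im(\kappa)\neq 0$, so $b_n^{(\ell)}(m-n)\neq 0$ for all $m\in\mathbb{Z}$ including $m=n$; the $m=n$ instance of \eqref{recur33} is then non-degenerate and determines $\alpha_n^{(\ell)}(n)$ — precisely the coefficient that Algorithm~I held fixed at zero. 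This role reversal is why fixing $\mathcal{E}_n^{(\ell)}=0$ is consistent rather than over-determining. For $\ell=0$ and $m\neq n$ the factor $b_n^{(0)}(m-n)=(m-n)(m+n+g_0+g_1)$ is nonzero under the standing hypotheses on $g_0+g_1$ (reading the recursion in the shifted index $m-n\in\mathbb{Z}'$ as in Remark~\ref{remP}). I expect this degeneracy analysis at $m=n$ to be the main obstacle; the remainder runs parallel to Proposition~\ref{prop1}.

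Next I would check that \eqref{recur33} with \eqref{ic33} determines all coefficients uniquely through finite sums. As in Proposition~\ref{prop1}, the truncation $\alpha_n^{(\ell)}(m)=0$ for $m>n+\ell$ is first shown to be compatible with \eqref{recur33} by induction on $\ell$: for $m>n+\ell$ the right-hand side involves only $\alpha_n^{(\ell')}(m')$ with $\ell'<\ell$ and $m'\geq m-(\ell-\ell')>n+\ell'$, which vanish by the inductive hypothesis. This truncation bounds the $\mu$-sum by $n-m+\ell$ and, together with the Kronecker deltas $\delta_{\ell,\ell'+k\mu}$, renders every sum finite, so each $\alpha_n^{(\ell)}(m)$ is a finite combination of coefficients strictly smaller in the partial order \eqref{order}. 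The recursion is therefore triangular with a unique solution, evaluated downward in $m$ for each fixed $\ell$ and then upward in $\ell$.

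Finally I would read off the analytic claims. Inserting the computed $\alpha_n^{(\ell)}(m)$ into \eqref{Pnseries} and using Lemma~\ref{Lemma:fn2} — that $f_m^{(\ell')}(z)$ vanishes for $m+\ell'<0$ and otherwise is a polynomial of degree $m+\ell'$ — gives the representation \eqref{Pnellz}; the highest-weight truncation caps the summation so that $\mathcal{P}_n^{(\ell)}(z)$ is a polynomial of degree at most $n+\ell$, and is identically zero when $n+\ell<0$. The exact degree $n+\ell$ then follows from the non-vanishing of the leading coefficient, argued as in the trigonometric analysis of Section~\ref{secTrigonometric}; in particular at $\ell=0$ the relation \eqref{recur33} coincides with \eqref{aneqs0}, so $\mathcal{P}_n^{(0)}(z)$ is the Jacobi polynomial of \eqref{eJacobi} and $\mathcal{E}_n^{(0)}=E_n^{(0)}$ agrees with \eqref{Ensimple}. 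Proposition~\ref{prop0} then guarantees that the resulting $\psi_n(x)$, $E_n$ solve \eqref{Heun}, which completes the proof.
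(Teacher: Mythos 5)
Your proof is correct and is essentially the paper's own (largely implicit) argument: the paper presents Proposition~\ref{prop2} as the variant of Proposition~\ref{prop1} in which, via the invariance \eqref{invariance} and the fact that $b_n^{(\ell)}(0)=-\kappa\ell\neq 0$ for $\ell\geq 1$ when $\Im(\kappa)\neq 0$, the normalization $\alpha_n^{(\ell\geq 1)}(n)=0$ is traded for $\mathcal{E}_n^{(\ell\geq 1)}=0$, the $m=n$ equations then determining $\alpha_n^{(\ell)}(n)$ --- exactly your ``role reversal'' analysis, with the triangularity/finiteness and polynomial-degree arguments carried over unchanged from Proposition~\ref{prop1}. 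The only caveat is your assertion that $b_n^{(0)}(m-n)\neq 0$ for all $m\neq n$ ``under the standing hypotheses'': this can fail for $g_0+g_1\in\mathbb{Z}$ (which the hypothesis $g_0+g_1\in\mathbb{R}$ permits), but this imprecision is inherited from the paper's own no-resonance discussion rather than introduced by you.
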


It follows from \eqref{anEnseries} that the f\/irst and last conditions in \eqref{ic3} are equivalent to $\alpha_n^{I}(n)=1$, which give rise to $\mathcal{E}_n^I=E^{(0)}_n+O(q)$, whereas the condition in \eqref{ic33} and $\mathcal{E}_n^{(\ell\geq 1)}=0$ corresponding to relaxing the former conditions to $\alpha_n^{II}(n) = 1+O(q)$ and instead requiring $\mathcal{E}_n^{II}=E^{(0)}_n$ (the superscripts $I$ and $II$ here are to distinguish the results from the algorithms in Proposition~\ref{prop1} and \ref{prop2}, respectively). From this and \eqref{invariance} we conclude that the results of the two algorithms are related as follows,
\begin{gather}\label{relprops}
\alpha_n^{I}(m) = \frac{\alpha_n^{II}(m) }{\alpha_n^{II}(n)},\qquad \mathcal{E}_n^I = E^{(0)}_n + \frac{1}{\alpha_n^{II}(n)}\kappa q\frac{\partial}{\partial q}\alpha_n^{II}(n).
\end{gather}
This shows that the algorithm in Proposition~\ref{prop2} is a re-summation of the one in Proposition~\ref{prop1}: it follows from \eqref{relprops} that, by computing $\alpha^{II}_n(n)$ up to some order $O(q^{N+1})$, one obtains a Pad\'e approximation of the generalized eigenvalues $\mathcal{E}_n^I$ as follows,
\begin{gather}\label{Pade}
 \mathcal{E}_n^I = E^{(0)}_n +\frac{\kappa\sum\limits_{\ell=1}^N \ell\alpha_n^{(II,\ell)}(n)q^\ell}{1+ \sum\limits_{\ell=1}^{N-1}\alpha_n^{(II,\ell)}(n)q^\ell} + O\big(q^{N+1}\big) .
\end{gather}
As an example we give
\begin{gather*}
\alpha_n^{(II,1)}(n) = \frac{\mathcal{E}_n^{(1)}}{\kappa}, \qquad \alpha_n^{(II,2)}(n) = \frac{\mathcal{E}_n^{(2)}}{2\kappa} + \frac{\big(\mathcal{E}_n^{(1)}\big)^2}{2\kappa^2},
\end{gather*}
with $\mathcal{E}_n^{(1)}$ and $\mathcal{E}_n^{(2)}$ in \eqref{cEn1} and \eqref{cEn2}, which is obtained by straightforward computations using the algorithm in Proposition~\ref{prop2}; note that, when this is inserted in \eqref{Pade} for $N=2$, the singularities at $\kappa=0$ in the numerator and denominator cancel. From our results above it is clear that this cancellation of singularities at $\kappa=0$ happens for arbitrary $N$ (this can be proved by the same argument we used to prove Lemma~\ref{lemResonance} in Section~\ref{secResonances}). This implies that the limit $\kappa\to 0$ is well-def\/ined (in the sense made precise in Lemma~\ref{lemResonance}), and we thus obtain from \eqref{Pade} an interesting representation of the eigenvalues of the $BC_1$ Inozemtsev model.

\begin{Remark}\label{remComplexity}
It is interesting to note that \eqref{Pade} provides a simpler representation of the Taylor coef\/f\/icients $\mathcal{E}_n^{(\ell)}$ of the eigenvalues of the Heun equation for larger values of $\ell$: we computed $\mathcal{E}_n^{(I,\ell)}$ and $\alpha_n^{(II,\ell)}(n)$ using MAPLE for $\ell=1,2,3,4,5$ and found that the former have $2$, $18$, $162$, $1776$, $21002$ distinct terms,\footnote{The numbers $2$ and $18$ in comparison with \eqref{cEn1} and \eqref{cEn2} explain what we mean by ``number of distinct terms''.} whereas the latter have $2$, $18$, $148$, $1298$, $11632$ distinct terms, respectively.
Moreover, as discussed in Appendix~\ref{appCombinatorics}, there exist explicit formulas for the coef\/f\/icients $\mathcal{E}_n^{(I,\ell)}$ and $\alpha_n^{(II,\ell)}(n)$, and the ones for the latter are much simpler than the ones for the former.
\end{Remark}

\section{Solution to all orders}\label{secAllOrders}
In this section we use results in \cite{ELeCS2} to obtain perturbative solutions of the equation in \eqref{Heun} to all orders, in generalization of the results in Section~\ref{secTrigonometric} for $q=0$.

As shown in Section~\ref{Proofprop0}, for all $n\in\mathbb{Z}$, \eqref{Heun} has solutions $\psi_n(x)$, $E_n$ as in \eqref{solution} and \eqref{Pnseries}--\eqref{cNn} provided that the coef\/f\/icients $\alpha_n(m)$ and the (shifted) eigenvalue $\mathcal{E}_n$ are (suitable) solutions of the dif\/ferential-dif\/ference equations in \eqref{aneqs}.
We observe that, by introducing the notation\footnote{As explained further below, the following def\/ines operators $\mathbb{A}$ and $\mathbb{B}$ on the vector space of sequences $\alpha=\{\alpha(m)\}_{m\in\mathbb{Z}}$.}
\begin{gather}\label{AbbBbb}
(\mathbb{A}\alpha)(m) \equiv \left(-\kappa q\frac{\partial}{\partial q} + E^{(0)}_m\right) \alpha(m),\qquad
(\mathbb{B}\alpha)(m) \equiv \sum_{\mu\in{\mathbb{Z}'} } S_\mu \alpha(m+\mu)
\end{gather}
with $E^{(0)}_m$ in \eqref{PTsolution} and $S_\mu$ in \eqref{gammanu1}, one can write \eqref{aneqs} in a simple way as follows,
\begin{gather}\label{eCSeq}
[\mathbb{A}-\mathcal{E} ]\alpha(m) = (\mathbb{B}\alpha)(m)
\end{gather}
(we f\/ind it convenient to suppress the dependence on $n$ here). We note that the notation introduced here is such that we can use results in \cite[Section~4.1]{ELeCS2} as they stand. We f\/irst consider the special case $\kappa=0$ (Heun equation) in Section~\ref{secAllOrders0}, and then the general case in Section~\ref{secAllOrders1}. We note that Theorem~\ref{Thm1} is a generalization of Proposition~\ref{prop1} restricted to $\kappa=0$; Theorem~\ref{Thm2} generalizes Proposition~\ref{prop2}.

\subsection{Heun equation}\label{secAllOrders0}
The Heun equation corresponds to the special case $\kappa=0$ of \eqref{Heun}. This is an eigenvalue equation for a 1D Schr\"odinger operator, and $\psi(x)$ and $E$ have a quantum mechanical interpretation as energy eigenfunction and eigenvalue, respectively. In this section we show how to use results in~\cite{ELeCS2} to obtain a perturbative solution for this case to all orders.

Using the Feshbach method and expanding a Neumann series yields implicit solutions of \eqref{AbbBbb}--\eqref{eCSeq} for $\kappa=0$ to all orders \cite{ELeCS2}. To formulate this result we introduce the following two functions of a complex variable $z$ ($n\in\mathbb{N}_0$ and $m\in\mathbb{Z}$ are f\/ixed),
 \begin{gather}
\Phi_n(z) \equiv -\sum_{s=2}^\infty \sum_{\mu_1\in{\mathbb{Z}'}} S_{\mu_1} \cdots \sum_{\mu_s\in{\mathbb{Z}'}} S_{\mu_s} \nonumber\\
\hphantom{\Phi_n(z) \equiv}{} \times
\frac{\delta(0,\mu_1+\cdots+\mu_s)}{\big[b^{(0)}_{n}(\mu_1)-z\big]_n\big[b^{(0)}_{n}(\mu_1+\mu_2)-z\big]_n\cdots \big[b^{(0)}_{n}(\mu_1+\cdots+\mu_{s-1})-z\big]_n}
\nonumber\\
\hphantom{\Phi_n(z)}{}
=  -\sum S_{\mu_1} S_{\mu_2} \frac{\delta(0,\mu_1+\mu_2)}{\big[b^{(0)}_{n}(\mu_1)-z\big]_n}\nonumber\\
\hphantom{\Phi_n(z) \equiv}{}
-\sum S_{\mu_1} S_{\mu_2} S_{\mu_3} \frac{\delta(0,\mu_1+\mu_2+\mu_3)}{\big[b^{(0)}_{n}(\mu_1)-z\big]_n\big[b^{(0)}_{n}(\mu_1+\mu_2)-z\big]_n} + \cdots\label{Phin}
\end{gather}
and
 \begin{gather}
G_n(z;m) \equiv \delta(m,n)+\sum_{s=1}^\infty
\sum_{\mu_1\in{\mathbb{Z}'}}S_{\mu_1} \cdots \sum_{\mu_s\in{\mathbb{Z}'}} S_{\mu_s}\nonumber \\
\hphantom{G_n(z;m) \equiv}{}
\times
\frac{\delta(m,n+\mu_1+\cdots+\mu_s)}{\big[b^{(0)}_{n}(m\!-\!n)\!-\!z\big]_n\big[b^{(0)}_{n}(m\!-\!n\!+\!\mu_1)\!-\!z\big]_n\cdots \big[b^{(0)}_{n}(m\!-\!n\!+\!\mu_1\!+\!\cdots\!+\!\mu_{s-1})\!-\!z\big]_n}
\nonumber\\
\hphantom{G_n(z;m)}{}
=  \delta(n,m) + \sum S_{\mu}\frac{\delta(n,m+\mu)}{\big[b^{(0)}_{n}(m-n)-z\big]_n} \nonumber\\
\hphantom{G_n(z;m) \equiv}{}
+ \sum S_{\mu_1} S_{\mu_2} \frac{\delta(n,m+\mu_1+\mu_2)}{\big[b^{(0)}_{n}(m-n)-z\big]_n\big[b^{(0)}_{n}(m-n+\mu_1)-z\big]_n} +\cdots\label{Gnm}
\end{gather}
with the Kronecker delta $\delta(n,m)$, $S_\mu$ in \eqref{gammanu1}, and the convenient shorthand notations
\begin{gather}\label{DEmn}
 \frac1{[b^{(0)}_{n}(k)-z]_n} \equiv \begin{cases} 1/\big[E^{(0)}_{n+k}-E^{(0)}_{n}-z\big], & k\neq 0, \\ 0, & k=0. \end{cases}
\end{gather}
It is important to note that these formulas should be interpreted in a perturbative sense as follows: for f\/ixed $N\in\mathbb{N}$, there are only f\/initely many terms in~\eqref{Phin} and~\eqref{Gnm} that are $O(q^\ell)$ with $\ell<N$ and thus, to obtain results up to $O(q^N)$-corrections, all inf\/inite series in our formulas can be truncated to {\em finite series}. The reason why it is convenient to write inf\/inite series is that it is dif\/f\/icult to give a simple general recipe for f\/inite-$N$ truncations; for example, in \eqref{Phin} one can restrict to $2\leq s\leq N$ and $-N\leq \mu_j\leq N$ for $j=1,2,\ldots,s$, but this is somewhat arbitrary since the resulting f\/inite sum still contains many terms which do not contribute to $O(q^N)$.

\begin{Theorem}\label{Thm1}
Let $n\in\mathbb{Z}$, $-\lambda\notin\mathbb{N}_0$ for $n>0$ and $-(g_0+g_1)\notin\mathbb{N}_0$ for $n<0$, $\big\{f^{(\ell)}_m(z)\big\}_{m\in\mathbb{Z}}$ the functions defined and characterized in Lemma~{\rm \ref{Lemma:fn2}}. Then the Heun equation in \eqref{Heun} for $\kappa=0$ and $g_0+g_1\notin\mathbb{Z}$ has a unique solution as in \eqref{solution} and \eqref{Pnseries}--\eqref{cNn} provided that
\begin{gather*}
\mathcal{E}_n = E^{(0)}_n+\tilde{\mathcal{E}}_n,\qquad \alpha_n(m)=G_n\big(\tilde{\mathcal{E}}_n;m\big)
\end{gather*}
with $\tilde{\mathcal{E}}_n$ the unique solution of the equation
\begin{gather*}
\tilde{\mathcal{E}}_n = \Phi_n\big(\tilde{\mathcal{E}}_n\big)
\end{gather*}
vanishing like $O(q)$ as $q\to 0$.
\end{Theorem}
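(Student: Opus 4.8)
The plan is to recognize the $\kappa=0$ case of \eqref{eCSeq} as an ordinary eigenvalue problem for a diagonal operator plus an off-diagonal perturbation, and to apply the Feshbach (Schur-complement) reduction of \cite[Section~4.1]{ELeCS2}; the real work is then just to cast the problem into the form treated there and to match the resulting series with \eqref{Phin}--\eqref{Gnm}. Setting $\kappa=0$ in \eqref{AbbBbb}, the operator $\mathbb{A}$ becomes the diagonal multiplication $(\mathbb{A}\alpha)(m)=E^{(0)}_m\alpha(m)$, and $\mathbb{B}$ is purely off-diagonal since $S_0=0$ in \eqref{gammanu1}; thus \eqref{eCSeq} reads $(\mathbb{A}-\mathbb{B})\alpha=\mathcal{E}\alpha$, with unperturbed eigenpair $\alpha^{(0)}(m)=\delta(m,n)$, $\mathcal{E}^{(0)}=E^{(0)}_n$. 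First I would write $\mathcal{E}=E^{(0)}_n+\tilde{\mathcal{E}}_n$, so that the relevant energy denominators become $E^{(0)}_m-\mathcal{E}=b^{(0)}_n(m-n)-\tilde{\mathcal{E}}_n$ with $b^{(0)}_n$ in \eqref{bnm}.

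Next I would introduce the rank-one projection $P$ onto the index $n$ and $Q\equiv 1-P$, and use that $P\mathbb{A}Q=Q\mathbb{A}P=0$ (diagonality of $\mathbb{A}$) and $P\mathbb{B}P=0$ ($S_0=0$). Projecting $(\mathbb{A}-\mathbb{B})\alpha=\mathcal{E}\alpha$ onto the two subspaces produces the coupled pair consisting of the $Q$-equation
\[
\big(Q\mathbb{A}Q-Q\mathbb{B}Q-\mathcal{E}\big)\,Q\alpha=Q\mathbb{B}P\alpha ,
\]
together with the scalar $P$-equation $\big(E^{(0)}_n-\mathcal{E}\big)\alpha(n)=(P\mathbb{B}Q\alpha)(n)$, i.e.\ $\tilde{\mathcal{E}}_n\alpha(n)=-(P\mathbb{B}Q\alpha)(n)$. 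I would then solve the $Q$-equation by iterating it as a Neumann series in $Q\mathbb{B}Q$ about the diagonal part $Q\mathbb{A}Q-\mathcal{E}$, whose inverse acts on the $m$-th coordinate as $1/[b^{(0)}_n(m-n)-\tilde{\mathcal{E}}_n]_n$ with the convention \eqref{DEmn} encoding that $Q$ omits the index $n$; this reproduces component-wise exactly the series \eqref{Gnm}, so that $\alpha_n(m)=G_n(\tilde{\mathcal{E}}_n;m)$ once we normalise $\alpha(n)=1$. Substituting this into the $P$-equation and reading off the matrix element at $n$ yields the self-consistency condition $\tilde{\mathcal{E}}_n=\Phi_n(\tilde{\mathcal{E}}_n)$: the one factor of $\mathbb{B}$ on each side together with the $j\geq 0$ internal factors $Q\mathbb{B}Q$ reproduce the sum over $s=j+2\geq 2$ and the overall sign in \eqref{Phin}. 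Note that, by the $k=0$ convention in \eqref{DEmn}, every $s\geq 1$ term of $G_n(z;n)$ carries a vanishing factor $1/[b^{(0)}_n(0)-z]_n=0$, so $G_n(z;n)=1$ automatically, which fixes the normalisation of Proposition~\ref{prop1}.

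Finally I would make sense of these formulas as formal power series and solve the fixed-point equation. Here I would invoke the order-counting of \cite{ELeCS2}: by \eqref{Snumu} one has $S_\mu=O(1)$ for $\mu>0$ and $S_\mu=O\big(q^{|\mu|}\big)$ for $\mu<0$, while the Kronecker deltas in \eqref{Phin}--\eqref{Gnm} force $\sum_i\mu_i=0$, so any monomial $S_{\mu_1}\cdots S_{\mu_s}$ has $q$-order at least $M=\sum_{\mu_i>0}\mu_i=\sum_{\mu_i<0}|\mu_i|$; since each $|\mu_i|\geq 1$, only finitely many monomials contribute below any fixed order in $q$, so \eqref{Phin} and \eqref{Gnm} are well-defined formal power series and $\Phi_n(z)=O(q)$. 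Each $q^\ell$-coefficient of $\Phi_n(z)$ is a rational function of $z$ with poles only at the resonance values $z=b^{(0)}_n(k)$, $k\neq 0$, and the hypothesis $g_0+g_1\notin\mathbb{Z}$ guarantees $b^{(0)}_n(k)=k(k+2n+g_0+g_1)\neq 0$ there, so these coefficients are regular at $z=0$. Consequently $\tilde{\mathcal{E}}_n=\Phi_n(\tilde{\mathcal{E}}_n)$ can be solved recursively: at order $q^\ell$ the right-hand side depends, through $\tilde{\mathcal{E}}_n=O(q)$, only on the coefficients $\mathcal{E}^{(\ell')}_n$ with $\ell'<\ell$, determining $\mathcal{E}^{(\ell)}_n$ uniquely; substituting the result into $G_n(\tilde{\mathcal{E}}_n;m)$ gives the unique $\alpha_n(m)$, and the polynomial degrees of $\mathcal{P}^{(\ell)}_n(z)$ and the normalisation \eqref{eJacobi} then follow from \eqref{Pnseries}--\eqref{cNn}, Lemma~\ref{Lemma:fn2} and Proposition~\ref{prop1} as in Section~\ref{secTrigonometric}. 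The main obstacle is precisely this formal-power-series bookkeeping---showing that the infinite Neumann and self-energy series truncate order by order and that the fixed-point iteration closes---rather than the algebra of the Feshbach reduction, which is purely formal and carries over verbatim from \cite{ELeCS2} once the $\kappa=0$ problem is written as \eqref{eCSeq}.
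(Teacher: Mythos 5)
Your proposal is correct and takes essentially the same route as the paper's own proof: the paper likewise casts the $\kappa=0$ problem as \eqref{eCSeq}, applies the Feshbach reduction of \cite[Section~4.1]{ELeCS2} with the rank-one projection $(\mathbb{P}\alpha)(m)=\delta(n,m)\alpha(m)$, expands the resolvent in a Neumann series, and identifies the result with $\alpha_n(m)=G_n\big(\tilde{\mathcal{E}}_n;m\big)$ and the self-consistency equation $\tilde{\mathcal{E}}_n=\Phi_n\big(\tilde{\mathcal{E}}_n\big)$. The only difference is one of detail, not of method: you derive the Schur-complement identities and the formal-power-series bookkeeping (order counting via $S_\mu=O\big(q^{|\mu|}\big)$ for $\mu<0$, the role of $g_0+g_1\notin\mathbb{Z}$, and the recursive solvability of the fixed-point equation) explicitly, whereas the paper obtains these by citing Lemma~4.1.1 and the convergence discussion of \cite{ELeCS2}.
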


(The proof can be found at the end of this section.)

From this result one can obtain the following fully explicit formula for the generalized eigenvalues of the non-stationary Heun equation,
\begin{gather}\label{Enexplicit}
\mathcal{E}_n = E^{(0)}_n + \sum_{m=1}^\infty \sum_{k_0,k_1,\ldots,k_{m-1}\in\mathbb{N}}\!\delta\left(\sum_{r=0}^{m-1}k_r, m\right)\!
\delta\left(\sum_{r=1}^{m-1}rk_r, m-1\right)\! (m-1)! \prod_{r=1}^m \frac{[\Phi_n^{(r)}]^{k_r}}{k_r!}\!\!\!
\end{gather}
with
 \begin{gather}
\Phi^{(r)}_n(z) \equiv -\sum_{s=2}^\infty  \sum_{\mu_1,\ldots,\mu_s\in{\mathbb{Z}'}}\sum_{r_1,\ldots,r_{s-1}\in\mathbb{N}_0} S_{\mu_1}\cdots S_{\mu_s} \nonumber\\
 \hphantom{\Phi^{(r)}_n(z) \equiv}{} \times
\frac{\delta(0,\mu_1+\cdots+\mu_s)\delta(r_1+\cdots r_{s-1},r)}{[b^{(0)}_{n}(\mu_1)]^{1+r_1}_n[b^{(0)}_{n}(\mu_1+\mu_2)]^{1+r_2}_n\cdots [b^{(0)}_{n}(\mu_1+\cdots+\mu_{s-1})]^{1+r_{s-1}}_n},\label{Phin111}
\end{gather}
and similarly for $\alpha_n(m)$; see Theorem~4.3.1 in \cite{ELeCS2} (one can check that the proof of the later theorem applies as it stands to the present case). As explained in Appendix~\ref{appCombinatorics}, this formula can be used to turn the computation of the generalized eigenvalues $\mathcal{E}_n$ of the non-stationary Heun equation into a combinatorial problem; see~\eqref{Enell11} {\em ff}.

\begin{Remark}
The elliptic generalizations of the Jacobi polynomials $P^{\big(g_0-\frac{1}{2},g_1-\frac{1}{2}\big)}_n(z)$ provided by Theorem~\ref{Thm1} can be formally written as
\begin{gather*}
\mathcal{P}_n(z) = {\mathcal N}_n\oint_{\mathcal{C}}\frac{d\xi}{2\pi{\rm i}\xi}\frac{\prod\limits_{\nu=0}^3\Theta_{\nu+1}(\xi)^{\tilde g_\nu}}{\Theta(z,\xi)^\lambda}\tilde{\mathcal{P}}_n(\xi)
\end{gather*}
with
\begin{gather*}
\tilde{\mathcal{P}}_n(\xi) = \xi^{-n} +\sum_{s=1}^\infty \prod_{\kappa=1}^s\left( \sum_{\mu_\kappa\in{\mathbb{Z}'}} \frac{ S_{\mu_\kappa}}{\Big[b_n^{(0)}(m-n+\sum\limits_{\ell=1}^{\kappa-1}\mu_{\ell})-\tilde{\mathcal{E}}_n \Big]_n} \right) \xi^{-n-\mu_1-\cdots-\mu_s}.
\end{gather*}
It is possible to identify $\tilde{\mathcal{P}}_n(\xi)$ with a singular eigenfunction of the Inozemtsev Hamiltonian $H(y;\{\tilde g_\nu\}_{\nu=0}^3)$ appearing in the generalized kernel function identity in Lemma~\ref{Lemma:kernel} and, in this way, extend the interpretation of the kernel function method given in \cite{ELsigma} to the present case.
\end{Remark}

\begin{proof}[Proof of Theorem~\ref{Thm1}] (We are brief since interested readers can f\/ind further details in \cite{ELeCS2}. In particular, it is explained in \cite{ELeCS2} why we can ignore questions of convergence in the argument below.)

Let $V$ be the vector space of sequences $\{\alpha(m)\}_{m\in\mathbb{Z}}$ and regard $\mathbb{A}$ and $\mathbb{B}$ in \eqref{AbbBbb} as linear operators $V\to V$. Def\/ine projections $\mathbb{P}$ on $V$ as follows,\footnote{We write $\mathbb{P}$ short for $\mathbb{P}_n$.}
\begin{gather}\label{Pbb}
(\mathbb{P}\alpha)(m) \equiv \delta(n,m)\alpha(m)\qquad \forall\, \alpha\in V
\end{gather}
and $\mathbb{P}^\perp\equiv I-\mathbb{P}$. Then $\mathbb{A}$ in \eqref{AbbBbb} commutes with $\mathbb{P}$, and the equation $\mathbb{P}\alpha_0=\alpha_0$ is solved by $(\alpha_0)(m)\equiv \delta(n,m)$.
Thus Lemma~4.1.1 in \cite{ELeCS2} implies that
\begin{gather}\label{Feshbach}
\alpha = \big[I+\big(\mathbb{A}-\mathcal{E}-\mathbb{P}^\perp\mathbb{B}\big)^{-1}\mathbb{P}^\perp\mathbb{B}\big]\alpha_0 , \qquad
\mathcal{E}\alpha_0 = \mathbb{A}\alpha_0 -\mathbb{P}\mathbb{B}\alpha
\end{gather}
is a solution of \eqref{eCSeq}. Expanding the resolvent in a Neumann series we obtain \cite{ELeCS2}
\begin{gather}\label{alphanmEn}
\alpha(m) = \sum_{s=0}^\infty \big( \big([\mathbb{A}-\mathcal{E}]^{-1}\mathbb{P}^\perp \mathbb{B}\big)^s\alpha_0\big)(m),\!\!\!\qquad \mathcal{E} = E^{(0)}_n-\sum_{s=0}^\infty \big(\mathbb{B}\big([\mathbb{A}-\mathcal{E}]^{-1}\mathbb{P}^\perp \mathbb{B}\big)^s\alpha_0\big)(n).\!\!\!\!\!
\end{gather}
The ansatz $\mathcal{E}=E^{(0)}_n+ \tilde{\mathcal{E}}$ implies
\begin{gather}\label{Abbinv}
\big([\mathbb{A}-\mathcal{E}]^{-1} \mathbb{P}^\perp \alpha\big)(m) = \frac1{\big[b^{(0)}_{n}(m-n)-\tilde{\mathcal{E}} \big]_n}\alpha(m)
\end{gather}
using the shorthand notation in \eqref{DEmn}. Using \eqref{Abbinv} to compute \eqref{alphanmEn} we obtain $\alpha(m)=G_n(\tilde{\mathcal{E}};m)$ and $\tilde{\mathcal{E}} = \Phi_n(\tilde{\mathcal{E}})$ with the functions def\/ined in \eqref{Phin}--\eqref{Gnm} \cite{ELeCS2}. The latter should be interpreted as non-linear equation whose solution $\tilde{\mathcal{E}}=\tilde{\mathcal{E}}_n$ vanishing like $O(q)$ determines the solution of the Heun equation we are interested in; see \cite{ELeCS2} for details.
\end{proof}

\subsection{Time dependent Heun equation}\label{secAllOrders1}
We now present generalizations of the results in the previous section to the non-stationary case $\kappa\neq 0$.

The argument to solve \eqref{AbbBbb}--\eqref{eCSeq} in the proof of Theorem~\ref{Thm1} can be generalized to $\kappa\neq 0$ if the following projection is used,
\begin{gather}\label{Pbb1}
(\mathbb{P}\alpha)^{(\ell)}(m) = \delta(\ell,0)\delta(n,m)\alpha^{(\ell)}(m)
\end{gather}
where $\alpha^{(\ell)}(m)$ are def\/ined as coef\/f\/icients of the formal power series in $q$ (see \eqref{anEnseries}); with that, the results in \eqref{Feshbach} hold true as they stand.
However, in the present case, the second of these equations is trivially solved by $\mathcal{E}=E^{(0)}_n$, and this implies a stronger result:

\begin{Theorem}\label{Thm2}
Let $n\in\mathbb{Z}$, $-\lambda\notin\mathbb{N}_0$ for $n>0$ and $-(g_0+g_1)\notin\mathbb{N}_0$ for $n<0$, $\big\{f^{(\ell)}_m(z)\big\}_{m\in\mathbb{Z}}$ the functions defined and characterized in Lemma~{\rm \ref{Lemma:fn2}}, and assume that $\Im(\kappa)\neq 0$ and $g_0+g_1\in\mathbb{R}$.\footnote{The latter are no-resonance conditions discussed in Section~\ref{secResonances}.}
Then the non-stationary Heun equation in \eqref{Heun} has a unique solution $\psi_n(x)$, $E_n$ as in \mbox{\eqref{solution}--\eqref{P}} given by \eqref{cNn}, \eqref{Pnellz}, \eqref{Ensimple} and the coefficients
\begin{gather}
\alpha^{(\ell)}_n(m) = \sum_{s=0}^\infty \sum_{k_1,\ldots,k_s\in\mathbb{N}_0} \sum_{\mu_1,\ldots,\mu_s\in{\mathbb{Z}'}} S_{\mu_1}(k_1)\cdots S_{\mu_s}(k_s)
\nonumber\\ \times
\frac{\delta(\ell,|\mu_1|k_1+\cdots+|\mu_s|k_s)\delta(n,m+\mu_1+\cdots+\mu_s)}{\big[b^{(\ell)}_{n}(m\!-\!n)\big] \big[b^{(\ell-|\mu_1|k_1)}_{n}(m\!-\!n\!+\!\mu_1)\big] \cdots \big[b^{(\ell-|\mu_1|k_1\!-\!\cdots\!-\!|\mu_{s-1}|k_{s-1})}_{n}(m\!-\!n\!+\!\mu_1\!+\!\cdots\!+\!\mu_{s-1})\big]} \nonumber\\ =
\delta(\ell,0)\delta(n,m) + \sum S_\mu(k) \frac{\delta(\ell,|\mu|k)\delta(n,m+\mu)}{[b_n^{(\ell)}(n-m)]} \nonumber\\
 + \sum S_{\mu_1}(k_1) S_{\mu_2}(k_2) \frac{\delta(\ell,|\mu_1|k_1+|\mu_2|k_2)\delta(n,m+\mu_1+\mu_2)}{\big[b_n^{(\ell)}(n-m)\big]\big[b_n^{(\ell-|\mu_1|k_1)}(n-m+\mu_1)\big]} + \cdots\label{Gnm1}
\end{gather}
using the shorthand notations
\begin{gather}\label{Sellmu1}
S_{\mu}(k) \equiv \begin{cases} \mu \gamma_0^\mu & \mbox{if $k=0$ and $\mu>0$}, \\
|\mu| \gamma_{k}^\mu & \mbox{if $k\in\mathbb{N}$}, \\
0 & \mbox{otherwise }
\end{cases}
\end{gather}
with $\gamma_k^\mu$ in \eqref{gammanu}, and
\begin{gather}\label{bnml1}
\frac1{[b^{(\ell)}_n(k)]}\equiv \begin{cases} 0 & \mbox{if } \ell=0 \mbox{ and } k=0, \\
1/b_n^{(\ell)}(k) & \mbox{otherwise} \end{cases}
\end{gather}
with $b_n^{(\ell)}(k)$ in \eqref{bnml}.
\end{Theorem}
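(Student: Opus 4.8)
The plan is to mirror the proof of Theorem~\ref{Thm1}, solving the operator equation \eqref{eCSeq} by the Feshbach method with the projection \eqref{Pbb1}, but now carrying along the extra $q$-grading that $\kappa\neq0$ introduces through the operator $-\kappa q\tfrac{\partial}{\partial q}$ in $\mathbb{A}$. By Proposition~\ref{prop0} it suffices to exhibit $\mathcal{E}_n$ and $\alpha_n(m)$ solving \eqref{aneqs} together with the normalization and hierarchy conditions; equivalently, I must produce the formal-power-series solution of the recursion \eqref{recur33}--\eqref{ic33}, whose existence and uniqueness are already guaranteed by Proposition~\ref{prop2}. Thus the real content of Theorem~\ref{Thm2} is that \eqref{Gnm1} is the closed, all-orders resummation of that recursion, and I would either derive it from the Neumann series below or (as a consistency check) verify directly that \eqref{Gnm1} satisfies \eqref{recur33}--\eqref{ic33}.

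The first step is to record that $\mathcal{E}_n=E^{(0)}_n$ holds \emph{exactly}, which is the feature distinguishing the non-stationary case from the fixed-point equation $\tilde{\mathcal{E}}_n=\Phi_n(\tilde{\mathcal{E}}_n)$ of Theorem~\ref{Thm1}. Since $\mathbb{A}$ is diagonal in $m$ and multiplies $q^\ell$ by the scalar $-\kappa\ell$, it preserves the bigrading by $(\ell,m)$ and therefore commutes with the projection \eqref{Pbb1} onto $(\ell,m)=(0,n)$, while $\mathbb{P}\alpha_0=\alpha_0$ for $(\alpha_0)^{(\ell)}(m)=\delta(\ell,0)\delta(n,m)$; hence \eqref{Feshbach} applies as stated. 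The second Feshbach equation reads $\mathcal{E}\alpha_0=\mathbb{A}\alpha_0-\mathbb{P}\mathbb{B}\alpha$, and $\mathbb{A}\alpha_0=E^{(0)}_n\alpha_0$, so it remains to see $\mathbb{P}\mathbb{B}\alpha=0$, i.e.\ that the $q^0$-coefficient of $(\mathbb{B}\alpha)(n)$ vanishes. From \eqref{Snumu} each $S_\mu$ is $O(q^{|\mu|})$ for $\mu<0$ and has nonzero $q^0$-part $\mu\gamma_0^\mu$ only for $\mu>0$; pairing this with $\alpha^{(0)}(n+\mu)$, which vanishes for $\mu>0$ by the hierarchy (established inductively at order $q^0$ starting from $\alpha_0$), gives $(\mathbb{B}\alpha)^{(0)}(n)=0$ and hence $\mathcal{E}_n=E^{(0)}_n$, which is \eqref{Ensimple}.

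With $\mathcal{E}=E^{(0)}_n$ fixed, the operator $\mathbb{A}-\mathcal{E}$ acts on the $(\ell',m)$-graded piece by multiplication by $E^{(0)}_m-\kappa\ell'-E^{(0)}_n=b_n^{(\ell')}(m-n)$, so $(\mathbb{A}-\mathcal{E})^{-1}\mathbb{P}^\perp$ multiplies by $1/[b_n^{(\ell')}(m-n)]$ with the convention \eqref{bnml1}. Next I would decompose $\mathbb{B}$ into homogeneous pieces: expanding $1/(1-q^{2|\mu|})$ geometrically in \eqref{Snumu} shows $S_\mu=\sum_{k\geq0}S_\mu(k)\,q^{|\mu|k}$, where even multiples of $|\mu|$ reproduce $\gamma_{\text{even}}^\mu$, odd multiples reproduce $\gamma_{\text{odd}}^\mu$, and the $k=0$ term survives only for $\mu>0$; this is exactly \eqref{Sellmu1}. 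Expanding the resolvent in \eqref{Feshbach} as the Neumann series \eqref{alphanmEn}, each of the $s$ factors of $\mathbb{B}$ performs a step $m\mapsto m+\mu_j$ carrying the weight $S_{\mu_j}(k_j)\,q^{|\mu_j|k_j}$, and each intervening resolvent contributes $1/[b_n^{(\ell')}(\cdot)]$ with $\ell'$ the residual $q$-degree not yet accounted for by the $S_\mu(k)$ factors already stripped off, so the superscripts decrease by $|\mu_j|k_j$ at each step, matching $b_n^{(\ell)}, b_n^{(\ell-|\mu_1|k_1)}, \ldots$ in \eqref{Gnm1}. Reading off the coefficient of $q^\ell$ imposes $\delta(\ell,|\mu_1|k_1+\cdots+|\mu_s|k_s)$, while the outer $\mathbb{P}$ of the first Feshbach equation imposes the momentum constraint $\delta(n,m+\mu_1+\cdots+\mu_s)$, yielding precisely \eqref{Gnm1}.

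The remaining assertions follow routinely: that $\mathcal{P}_n^{(\ell)}(z)$ is a polynomial of degree $n+\ell$ for $n+\ell\geq0$ and zero otherwise comes from \eqref{Pnellz} and Lemma~\ref{Lemma:fn2} exactly as in Proposition~\ref{prop1}, and uniqueness is inherited from Proposition~\ref{prop2}. I expect the main obstacle to be not any single computation but the careful bookkeeping of the $q$-grading through the Neumann series, namely verifying that the denominator at each level carries the reduced superscript $\ell-|\mu_1|k_1-\cdots-|\mu_{j-1}|k_{j-1}$ rather than the total $\ell$; this hinges entirely on $-\kappa q\tfrac{\partial}{\partial q}$ reading off the instantaneous $q$-degree of whatever graded piece the resolvent acts on, and it is the one genuinely new feature relative to the $\kappa=0$ formula of Theorem~\ref{Thm1}. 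A secondary point, which I would defer to \cite{ELeCS2}, is well-definedness of \eqref{Gnm1} as a formal power series: although the sum over $s$ is infinite and the $k_j=0$ steps (allowed for $\mu_j>0$) do not raise the $q$-degree, only finitely many terms survive at each order $q^\ell$ because the convention \eqref{bnml1} annihilates zero-shift resolvents at residual degree zero and the momentum must return to $n$, just as in Theorem~\ref{Thm1}; the hypotheses $\Im(\kappa)\neq0$ and $g_0+g_1\in\mathbb{R}$, together with $-(g_0+g_1)\notin\mathbb{N}$, ensure that all surviving denominators $b_n^{(\ell')}(k)$ are nonzero.
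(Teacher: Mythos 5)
Your proposal is correct and follows essentially the same route as the paper's proof: the Feshbach decomposition with the projection \eqref{Pbb1}, the observation that $\mathbb{P}\mathbb{B}\alpha=0$ (via the triangularity $\alpha^{(0)}(m>n)=0$ from the trigonometric analysis) forces $\mathcal{E}_n=E^{(0)}_n$ exactly, and then the Neumann series for the resolvent with careful tracking of the $q$-grading through each factor of $\mathbb{B}$. The only cosmetic difference is that you decompose $S_\mu=\sum_{k\ge 0}S_\mu(k)q^{|\mu|k}$ from the outset, whereas the paper first writes the series with generic homogeneous pieces $S_\mu^{(\ell_j)}$ and changes variables $\ell_j=|\mu_j|k_j$ at the end; these are equivalent.
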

(The proof can be found at the end of this section.)

It is not dif\/f\/icult to convince one-selves that the coef\/f\/icients in \eqref{Gnm1} are identical with the ones determined by the algorithm in Proposition~\ref{prop2}.
Thus, by setting $m=n$ and using the second identity in \eqref{relprops}, we obtain a formula to all orders for the generalized eigenvalues $\mathcal{E}_n$ of the time dependent Heun equation:
\begin{gather}\label{Enexplicit1}
\mathcal{E}_n = E^{(0)}_n + \frac{\sum\limits_{\ell =1}^\infty \kappa\ell\alpha_n^{(\ell)}(n)q^\ell}{1+\sum\limits_{\ell=1}^\infty \ell\alpha_n^{(\ell)}(n)q^\ell}
\end{gather}
with $\alpha_n(m)$ in \eqref{Gnm1} for $m=n$; the limit $\kappa\to 0$ of this formula is non-trivial but well-def\/ined. It would be interesting to f\/ind a re-summation which makes this manifest.

We thus obtained two explicit formulas for the eigenvalues of the Heun equation in \eqref{Heun} for $\kappa=0$: the formula in \eqref{Enexplicit}, and the limit $\kappa\to 0$ of the formula in \eqref{Enexplicit1}.
The former has the advantage that it is manifestly f\/inite for $\kappa=0$, whereas the latter requires a non-trivial limit. However, as explained in Appendix~\ref{appCombinatorics}, the latter formula is much simpler than the former.

\begin{proof}[Proof of Theorem~\ref{Thm2}]
Let $V$ be the vector space of all sequences $\alpha=\{\alpha^{(\ell)}(m)\}_{\ell\in\mathbb{N}_0,m\in \mathbb{Z}}$ identif\/ied with $\alpha=\{\alpha(m)\}_{m\in\mathbb{Z}}$ as in \eqref{anEnseries}.
Then $\mathbb{A}$ and $\mathbb{B}$ in \eqref{AbbBbb} can be written as linear operators $V\to V$ as follows,
\begin{gather}\label{AbbBbb1}
(\mathbb{A}\alpha)^{(\ell)}(m) = \big({-}\kappa\ell+E^{(0)}_m\big)\alpha^{(\ell)}(m),\qquad (\mathbb{B}\alpha)^{(\ell)}(m) = \sum_{\ell'=0}^\ell \sum_{\mu\in{\mathbb{Z}'}} S^{(\ell')}_\mu \alpha^{(\ell-\ell')}(m+\mu)
\end{gather}
with
\begin{gather}\label{Sellmu}
S_{\mu}^{(\ell)} \equiv \begin{cases} \mu \gamma_0^\mu & \mbox{if $\ell=0$, $\mu>0$}, \\
|\mu| \gamma_{\ell/|\mu|}^\mu & \mbox{if $\ell/|\mu|\in\mathbb{N}$}, \\
0 & \mbox{otherwise }
\end{cases}
\end{gather}
for $\mu\in{\mathbb{Z}'}$ and $\ell\in\mathbb{N}_0$ (the latter follows with \eqref{gammanu}, \eqref{Snumu}, and \eqref{gammanu1}).
This allows to rewrite~\eqref{aneqs} as in~\eqref{eCSeq}.

It is clear that the projection $\mathbb{P}$ in \eqref{Pbb1} commutes with $\mathbb{A}$ in \eqref{AbbBbb1}, and that the equation $\mathbb{P}\alpha_0=\alpha_0$ is solved by
\begin{gather}\label{alpha00}
(\alpha_0)^{(\ell)}(m)=\delta(\ell,0)\delta(n,m).
\end{gather}
Thus Lemma~4.1.1 in \cite{ELeCS2} implies that the solution $\alpha$ and $\mathcal{E}$ of \eqref{aneqs} satisf\/ies \eqref{Feshbach} with that $\alpha_0$ ($\mathbb{P}^\perp=I-\mathbb{P}$).

The def\/initions imply
\begin{gather*}
(\mathbb{P}\mathbb{B}\alpha)^{(\ell)}(m) = \delta(\ell,0)\delta(n,m)\sum_{\mu=1}^\infty \mu\gamma_0^{\mu}\alpha^{(0)}(n+\mu).
\end{gather*}
We showed already in Section~\ref{secTrigonometric} that the solution $\alpha$ of \eqref{aneqs} is such that $\alpha^{(0)}(m>n)=0$, and thus $\mathbb{P}\mathbb{B}\alpha=0$.
With that we f\/ind that the second equation in \eqref{Feshbach} is solved by $\mathcal{E}=\mathcal{E}^{(0)}_n$.\footnote{This result was already obtained by a dif\/ferent argument in the proof of Proposition~\ref{prop2}.}
Using this the f\/irst equation in \eqref{alphanmEn} simplif\/ies to
\begin{gather}\label{alphanmEn1}
\alpha(m) = \sum_{s=0}^\infty \big( \big(\big[\mathbb{A}-E^{(0)}_n\big]^{-1}\mathbb{P}^\perp \mathbb{B}\big)^s\alpha_0\big)(m)
\end{gather}

Inserting def\/initions we f\/ind
\begin{gather*}
\big(\big[\mathbb{A}-E^{(0)}_n\big]^{-1}\mathbb{P}^\perp\mathbb{B}\alpha\big)^{(\ell)}(m) = \frac1{\big[b^{(\ell)}_n(m-n)\big]}\sum_{\ell'=0}^\ell \sum_{\mu\in{\mathbb{Z}'}} S^{(\ell')}_\mu \alpha^{(\ell-\ell')}(m+\mu)
\end{gather*}
using the notation in \eqref{bnml1}. With that one can prove by induction that
\begin{gather*}
\big(\big(\big[\mathbb{A}-E^{(0)}_n\big]^{-1}\mathbb{P}^\perp\mathbb{B}\big)^s\alpha\big)^{(\ell)}(m) = \sum_{\ell_1=0}^{\ell}\sum_{\ell_2=0}^{\ell-\ell_1}\cdots \sum_{\ell_s=0}^{\ell-\ell_1-\cdots-\ell_{s-1}}
\sum_{\mu_1,\ldots,\mu_s\in{\mathbb{Z}'}} S^{(\ell_1)}_{\mu_1}\cdots S^{(\ell_s)}_{\mu_s} \\ \qquad{}\times
\frac1{\big[b^{(\ell)}_n(m-n)\big]\big[b^{(\ell-\ell_1)}_n(m-n+\mu_1)\big]\cdots \big[b^{(\ell-\ell_1-\cdots-\ell_{s-1})}_n(m-n+\mu_1+\cdots+\mu_{s-1})\big]} \\
\qquad{} \times
\alpha^{(\ell-\ell_1-\cdots-\ell_s)}(m+\mu_1+\cdots+\mu_s)
\end{gather*}
for all $s=1,2,\ldots$. Inserting this in \eqref{alphanmEn1} and using \eqref{alpha00} give
\begin{gather}\label{Gnm11}
\alpha^{(\ell)}_n(m) = \sum_{s=0}^\infty \sum_{\ell_1,\ldots,\ell_s\in\mathbb{N}_0} \sum_{\mu_1,\ldots,\mu_s\in{\mathbb{Z}'}} S^{(\ell_1)}_{\mu_1}\cdots S^{(\ell_s)}_{\mu_s}
\\ \hphantom{\alpha^{(\ell)}_n(m) =}{} \times
\frac{\delta(\ell,\ell_1+\cdots+\ell_s)\delta(n,m+\mu_1+\cdots+\mu_s)}{b^{(\ell)}_{n}(m-n)b^{(\ell-\ell_1)}_{n}(m-n+\mu_1)\cdots b^{(\ell-\ell_1-\cdots-\ell_{s-1})}_{n}(m-n+\mu_1+\cdots+\mu_{s-1})}\nonumber
\end{gather}
(to simplify notation we extended the $\ell_j$-summations to all non-negative integers, which is possible due to the f\/irst Kronecker delta in the sum; the term for $s=0$ is by def\/inition equal to the r.h.s.\ in~\eqref{alpha00}).

The def\/inition of $S_\mu^{(\ell)}$ in \eqref{Sellmu} suggests to change summation variables from $\ell$ to $k$ such that $\ell=|\mu| k$ (to reduce the number of terms in the formula which give zero contributions). Wri\-ting~$S_\mu(k)$ short for $S_\mu^{(|\mu|k)}$ we obtain the result in \eqref{Gnm1}--\eqref{Sellmu1}.
\end{proof}

\section{Final remarks}\label{secFinal}
We showed that a solution method based on kernel functions and developed to solve the elliptic Calogero--Sutherland (eCS) model \cite{ELeCS0,ELeCS2} generalizes to the non-stationary Heun equation in~\eqref{Heun}. This suggests to us that also other elliptic problems can be treated by this method; as a possible candidate we mention the non-stationary generalization of the eigenvalue problem for the $BC_{N}$ Inozemtsev model for $N\geq 2$ \cite{I}, which def\/ines a natural many-variable generalization of the non-stationary Heun equation:
\begin{gather}
\Biggl( \frac{{\rm i}}{\pi}\kappa \frac{\partial}{\partial\tau} + \sum_{j=1}^N\Biggl( -\frac{\partial^2}{\partial x_j^2}   +\sum_{\nu=0}^3 g_\nu(g_\nu-1) \wp(x_j+\omega_\nu) \Biggr)\nonumber\\
\qquad{} + 2\lambda(\lambda-1)\sum_{1\leq j<k\leq N}\left\{\wp(x_j-x_k)+\wp(x_j+x_k) \right\} \Biggr) \psi(x)=E\psi(x) ;\label{BCN}
\end{gather}
note that generalized kernel function identities for this problem and a discrete set of $\kappa$-values (depending on the other model parameters) are known \cite{LT}.

We also obtained results beyond previous results about the eCS model: the non-stationary Heun equation in \eqref{Heun} is invariant under the transformations in \eqref{symmetry}, and we found that, for $\kappa\neq 0$, this symmetry can be exploited to construct simpler representations of the generalized eigenvalues $E$ which are useful even in the case $\kappa=0$; see \eqref{Enexplicit1}.
We expect that a formula similar to the one in \eqref{Enexplicit1} for the eigenvalues of the eCS model can be found, and that this would be interesting since it might allow to better understand the relations between the solutions of the eCS model in \cite{ELeCS2} and the one by Nekrasov and Shatiashvili \cite{NS}.

As discussed in Appendix~\ref{appCombinatorics}, the explicit formulas for the solutions of the Heun equation in Theorem~\ref{Thm1} can be regarded as translations of the problem to solve this equation into a combinatorial problem. It is remarkable that dif\/ferent elliptic problems lead to the same combinatorial problem: we f\/ind that the combinatorial structure of the solution of the eCS model and the non-stationary Heun equation are the same, and model details only af\/fect the basic building blocks of the solutions. We expect the same is true for other elliptic problems like the one in \eqref{BCN}. We believe that a similar remark applies to non-stationary extensions of elliptic problems. It would be interesting to explore these combinatorial aspects of our solution in future work.

One important question about the non-stationary Heun equation is uniqueness: which conditions determine a unique solution?
Our results shed light on this question: we f\/ind that the conditions in \eqref{solution}--\eqref{eJacobi} do not f\/ix the solution uniquely, and our results suggest the following further requirements:
\begin{gather}\label{condlast}
\mathcal{E}_n^{(\ell)}=0,\qquad \mathcal{P}_n^{(\ell)}(z)=O\big(z^{n+\ell}\big) \qquad \forall\, n+\ell\geq 0,\; \ell\geq 1.
\end{gather}
It would be interesting to prove that the conditions \eqref{solution}--\eqref{eJacobi} and \eqref{condlast} imply uniqueness.

We f\/inally note that kernel functions have been used since a long time to transform the Heun equation into integral equations \cite{Erd,LW}; see also \cite{LaSl,Novikov}.  This has provided powerful tools to study analytical properties of solutions; for example, this was used by Ruijsenaars in his work on the hidden permutation symmetry mentioned after \eqref{S4} \cite{RHeun}. Our approach is similar in that the kernel function we use determines the analytic structure of the solutions we construct.
However, there are also important dif\/ferences. For example, our kernel functions are {\em not} given by hypergeometric functions as the ones in \cite{Erd,LW}, and they are singular and {\em not} $L^2$ as the ones used in \cite{RHeun}. Moreover, our emphasis is on constructing explicit representations of solutions.

\appendix

\section{Special functions}\label{appSpecialFunctions}
For the convenience of the reader we collect def\/initions and properties of standard special functions that we use.

\subsection{Elliptic functions}\label{appA1}
The Weierstrass $\wp$-function with periods $(2\omega_1, 2\omega_3)$ is def\/ined as follows:
\begin{gather}\label{wpdef}
 \wp (x\,|\,\omega_1,\omega_3)\equiv \frac{1}{x^2}+ \sum_{(m,n)\in {\mathbb{Z}} \times {\mathbb{Z}} \setminus \{ (0,0)\} } \left\{ \frac{1}{(x-\Omega_{m,n})^2}-\frac{1}{\Omega_{m,n}^2}\right\}
\end{gather}
with $\Omega_{m,n}\equiv 2m\omega_1 +2n\omega_3$. We also recall the def\/inition of the theta functions,
\begin{gather}
\theta _1(x) \equiv 2\sum _{n=0}^{\infty } (-1)^{n} q^{(n+1/2)^2} \sin (2n+1)x,\nonumber \\
\theta _2(x) \equiv 2\sum _{n=0}^{\infty } q^{(n+1/2)^2} \cos (2n+1)x ,\nonumber\\
\theta _3 (x) \equiv 1+ 2\sum _{n=1}^{\infty } q^{n^2} \cos 2n x ,\nonumber\\
\theta _4 (x) \equiv 1+ 2\sum _{n=1}^{\infty } (-1)^{n} q^{n^2} \cos 2n x\label{thetanu}
\end{gather}
with
\begin{gather}\label{q}
q={\rm e}^{{\rm i}\pi\tau},\qquad \tau=\frac{\omega_3}{\omega_1}.
\end{gather}

We need the identity
\begin{gather}\label{wp22}
\wp(z|\pi,\pi\tau) = -\frac{\eta_1}{\pi} +\sum_{n\in\mathbb{Z}} \frac{1}{4\sin^2\big(\frac{1}{2}[z+2n\pi\tau]\big)}
\end{gather}
with
\begin{gather}\label{eta1pi}
\frac{\eta_1}{\pi} = \frac1{12}-\sum_{n=1}^\infty\frac{2q^{2n}}{(1-q^{2n})^2}
\end{gather}
(see \cite[23.8.3 and 23.8.5]{Dig10}).

Let
\begin{gather}\label{G}
G\equiv \prod_{n=1}^\infty\big(1-q^{2n}\big).
\end{gather}
From the product representations of the theta functions (see \cite[20.5.1--20.5.4]{Dig10}) we obtain
\begin{gather}
\theta_1\big(\tfrac{1}{2} y\big) = q^{1/4} {\rm e}^{{\rm i}\pi/2} {\rm e}^{-{\rm i} y/2}G\Theta_1\big({\rm e}^{{\rm i} y}\big),\nonumber \\
\theta_2\big(\tfrac{1}{2} y\big) = q^{1/4} {\rm e}^{-{\rm i} y/2}G\Theta_2\big({\rm e}^{{\rm i} y}\big),\nonumber\\
\theta_3\big(\tfrac{1}{2} y\big) = G\Theta_3\big({\rm e}^{{\rm i} y}\big),\nonumber\\
\theta_4\big(\tfrac{1}{2} y\big) = G\Theta_4\big({\rm e}^{{\rm i} y}\big)\label{tettet}
\end{gather}
and
\begin{gather}\label{tettet1}
\theta_1\big(\tfrac{1}{2}(x+y)\big)\theta_1\big(\tfrac{1}{2}(x-y)\big)= q^{1/2}G^2{\rm e}^{-{\rm i} y}\Theta\big(\cos(x),{\rm e}^{{\rm i} y}\big)
\end{gather}
with the functions $\Theta_\nu(\xi)$, $\nu=1,2,3,4$, and $\Theta(z,\xi)$ def\/ined in~\eqref{Thetanu} and~\eqref{Theta}.

\subsection{Jacobi polynomials}
We use the following def\/initions of the Pochhammer symbol
\begin{gather}\label{Pochhammer}
(x)_n\equiv x(x+1)\cdots (x+n-1)
\end{gather}
and binomial coef\/f\/icients
\begin{gather}\label{binomial}
\binom{x}{n}\equiv \frac{(x-n+1)_n}{n!}
\end{gather}
for complex $x$ and $n\in\mathbb{N}_0$. We will also use the extensions of these def\/initions to all $n\in\mathbb{Z}$ provided by the $\Gamma$ function, i.e., $(x)_n=\Gamma(x+n)/\Gamma(x)$ and similarly for the binomial coef\/f\/icients.

The Jacobi polynomials are given by
\begin{gather}\label{Series}
P_n^{(\alpha,\beta)}(z) \equiv \sum_{\ell =0}^n\frac{(n+\alpha+\beta+1)_\ell(\alpha+\ell+1)_{n-\ell}}{\ell!(n-\ell)!}\left(\frac{z-1}{2}\right)^\ell
\end{gather}
for $\alpha,\beta\geq -1$, $n=0,1,2,\ldots$, and complex $z$ (see \cite[18.5.7]{Dig10}). Thus
\begin{gather}\label{JacobiNormalization}
P_n^{(\alpha,\beta)}(z) = \frac{(n+\alpha+\beta+1)_n}{2^n n!} z^n + O\big(z^{n-1}\big) .
\end{gather}
We recall the def\/inition of the Gegenbauer (or ultraspherical) polynomials by their well-known generating function:
\begin{gather}\label{Gegenbauer}
\big(1-2z\xi+\xi^2\big)^{-\lambda} = \sum_{n=0}^\infty C^{(\lambda)}_n(z)\xi^n = \sum_{n=0}^\infty \frac{(2\lambda)_n}{\left(\lambda+\frac{1}{2}\right)_n} P_n^{(\lambda-\frac{1}{2},\lambda-\frac{1}{2})}(z)\xi^n
\end{gather}
(see \cite[18.12.4]{Dig10}). This is equivalent to the f\/irst identity in \eqref{Simple1} (to see this, note that $(n+2g)_n/[4^n(g)_n] = (g+\frac{1}{2})_n/(2g)_n$).

\section{Scaling}\label{appScaling}
In the literature dif\/ferent conventions for $\omega_1$ are used. We therefore state how our results for $\omega_1=\pi$ can be transformed into results for other values of $\omega_1$.

For arbitrary $\omega_1\neq 0$, the non-stationary Heun equation can be written as
\begin{gather}\label{Heun2}
\left( \frac{{\rm i}\pi}{\omega_1^2}\kappa \frac{\partial}{\partial\tau} -\frac{\partial^2}{\partial x^2} +\sum_{\nu=0}^3 g_\nu(g_\nu-1) \wp(x+\omega_\nu|\omega_1,\omega_3)\right)\psi(x)=E\psi(x)
\end{gather}
with $\omega_0=0$, $\omega_2=-\omega_1-\omega_1\tau$, $\omega_3=\omega_1\tau$ and $\Im(\tau)>0$. Let
\begin{gather*}
\psi(x)\equiv \psi\big(x,\tau;\omega_1,\{g_\nu\}_{\nu=0}^3,\kappa\big),\qquad E\equiv E\big(\tau;\omega_1,\{g_\nu\}_{\nu=0}^3,\kappa\big)
\end{gather*}
be a solution of this equation for generic values of $\omega_1$. Then this solution can be obtained from a corresponding one for $\omega_1=\pi$ as follows,
\begin{gather*}
\psi\big(x,\tau;\omega_1,\{g_\nu\}_{\nu=0}^3,\kappa\big) = \left(\frac{\pi}{\omega_1}\right)^{1/2} \psi\left(\pi x/\omega_1,\tau;\pi, \{g_\nu\}_{\nu=0}^3,\kappa\right),\\
E\big(\tau;\omega_1,\{g_\nu\}_{\nu=0}^3,\kappa\big) = \left(\frac{\pi}{\omega_1}\right)^2 E\left(\tau;\pi, \{g_\nu\}_{\nu=0}^3,\kappa\right)
\end{gather*}
(to see this, use $\wp(x|\omega_1,\omega_3)=s^2\wp(sx|s\omega_1,s\omega_3)$ to show that \eqref{Heun2} is invariant under the scaling transformation
\begin{gather*}
\big(x,\tau,\omega_1,\kappa,\{g_\nu\}_{\nu=0}^3,E\big)\to \big(sx,\tau,s\omega_1,\kappa,\{g_\nu\}_{\nu=0}^3,E/s^2\big),
\end{gather*}
and set $s=\pi/\omega_1$; we scale $\psi(x)$ such that its $L^2$-norm is invariant).

\section{Explicit solution to low orders for Algorithm I}
\label{appExplicitResults}
To shorten notation we present our results using the following notation
\begin{gather*}
\mathcal{E}_n^{(\ell)}=\mathcal{E}^{(\ell)},\qquad a^{(\ell)}(k)= \alpha_n^{(\ell)}(n+k)
\end{gather*}
using that
\begin{gather*}
b_n^{(\ell)}(k) = k(k+P)-\kappa\ell \equiv b^{(\ell)}(k)
\end{gather*}
only depends on $P$ in \eqref{Pdef}; see \eqref{bnml} and Remark~\ref{remP} for further explanations.

We give explicit formulas for the perturbative solution described in the main text for $\ell=0,1,2$.
Note that in the computations the conditions in \eqref{ic3} are used without this being mentioned.

\subsubsection*{Zeroth order} For $\ell=0$ we get (here and in the following, $k=m-n$),
\begin{gather}\label{recur0}
b^{(0)}(k)a^{(0)}(k) = \sum_{\mu=1}^{-k}\mu \gamma_0^\mu a^{(0)}(k+\mu)
\end{gather}
for $k=-1,-2,-3,\ldots$ and $a(0)=1$, which has the solution
\begin{gather*}
a^{(0)}(-1) = \frac{\gamma_0^1}{b^{(0)}(-1)} ,\\
a^{(0)}(-2) = \frac1{b^{(0)}(-2)} \big( \gamma_0^1a^{(0)}(-1) + 2\gamma_0^0 \big) =
 \frac{1}{b^{(0)}(-2)}\left(2\gamma_0^0 +\frac{(\gamma_0^1)^2}{b^{(0)}(-1)}\right),
\\
a^{(0)}(-3) = \frac1{b^{(0)}(-3)} \big( \gamma_0^1a^{(0)}(-2) + 2\gamma_0^0 a^{(0)}(-1) + 3\gamma_0^1 \big) \\
\hphantom{a^{(0)}(-3)}{} =\frac{\gamma_0^1}{b^{(0)}(-3)}\left(
 3+ 2 \gamma_0^0\left(\frac{1}{b^{(0)}(-2)}+\frac{1}{b^{(0)}(-1)}\right)+ \frac{(\gamma_0^1)^2}{b^{(0)}(-2)b^{(0)}(-1)} \right)
\end{gather*} etc.

\subsubsection*{First order}
For $\ell=1$ we get
\begin{gather}\label{recur1}
b^{(1)}(k) a^{(1)} (k) = \mathcal{E}^{(1)} a^{(0)}(k) +\sum_{\mu=1}^{1-k}\mu \gamma_0^\mu a^{(1)}(k+\mu) +
\gamma_1^1\big[a^{(0)}(k+1)+a^{(0)}(k-1)\big]
\end{gather}
for $k=1,0,-1,\ldots$, which has the solution
\begin{gather*}
 a^{(1)}(1) = \frac{\gamma_1^1}{b^{(1)}(1)},\\
\mathcal{E}^{(1)} = -\gamma_0^1 a^{(1)}(1) -\gamma_1^1 a^{(0)}(-1) = -\gamma_0^1\gamma_1^1\left(\frac1{b^{(0)}(-1)}+ \frac1{b^{(1)}(1)}\right)
\end{gather*}
implying the result in \eqref{cEn1},
\begin{gather*}
 a^{(1)}(-1) =\frac1{b^{(1)}(-1)}\big(\mathcal{E}^{(1)} a^{(0)}(-1) + 2\gamma_0^0 a^{(1)}(1) + \gamma_1^1\big(1+ a^{(0)}(-2)\big) \big)\\
 =  \frac{\gamma_1^1}{b^{(1)}(-1)}\left( 1+ 2\gamma_0^0\left( \frac1{b^{(0)}(-2)} + \frac1{b^{(1)}(1)} \right) +
\frac{(\gamma_0^1)^2}{b^{(0)}(-1)} \left( \frac1{b^{(0)}(-2)} -\frac1{b^{(0)}(-1)} -\frac1{b^{(1)}(1)}
 \right) \right)
\end{gather*}
etc.
\subsubsection*{Second order}
For $\ell=2$ we get
\begin{gather}
b^{(2)}(k) a^{(2)}(k) = \mathcal{E}^{(2)}a^{(0)}(k) + \mathcal{E}^{(1)}a^{(1)}(k) +\sum_{\mu=1}^{2-k}\mu \gamma_0^\mu a^{(2)}(k+\mu) +
\gamma_0^1\big[a^{(0)}(k+1)\nonumber\\
\qquad{}  +a^{(0)}(k-1)\big] + \gamma_1^1\big[a^{(1)}(k+1)+a^{(1)}(k-1)\big] +2\gamma_1^0\big[a^{(0)}(k+2)+a^{(0)}(k-2)\big]\label{recur2}
\end{gather}
for $k=2,1,0,\ldots$, which has the following solution
\begin{gather*}
 a^{(2)}(2) = \frac{1}{b^{(2)}(2)}\big( \gamma_1^1 a^{(1)}(1) + 2\gamma_1^0 \big) = \frac1{b^{(2)}(2)}\left( 2\gamma_1^0 + \frac{(\gamma_1^1)^2}{b^{(2)}(1)}\right) ,\\
 a^{(2)}(1) = \frac1{b^{(2)}(1)}\big(\mathcal{E}^{(1)} a^{(1)}(1) + \gamma_0^1 a^{(2)}(2) + \gamma_0^1+2\gamma_1^0 a^{(0)}(-1)\big)
 = \frac{ \gamma_0^1}{b^{(2)}(1)}\Biggl(1 + 2\gamma_1^0\\
 \hphantom{a^{(2)}(1) =}{} \times
 \left(\frac1{b^{(0)}(-1)}+\frac1{b^{(2)}(2)} \right) + (\gamma_1^1)^2\left(\frac1{b^{(2)}(2)b^{(1)}(1)} - \frac1{b^{(1)}(1)^2}-\frac1{b^{(1)}(1)b^{(0)}(-1)} \right) \Biggr) ,
\end{gather*}
and
\begin{gather*}
\mathcal{E}^{(2)} = -\gamma_0^1 a^{(2)}(1) -2\gamma_0^0 a^{(2)}(2) - \gamma_0^1 a^{(0)}(-1)
- \gamma_1^1\big[ a^{(1)}(1)+ a^{(1)}(-1)\big]-2\gamma_1^0 a^{(0)}(-2) \\
\hphantom{\mathcal{E}^{(2)}}{} =
-\left(\gamma_0^1\right)^2\left(\frac{1}{b^{(0)}(-1)} + \frac{1}{b^{(2)}(1)} \right) - \left(\gamma_1^1\right)^2 \left(\frac{1}{b^{(1)}(-1)} + \frac{1}{b^{(1)}(1)} \right) \\
\hphantom{\mathcal{E}^{(2)}=}{}
- 4\gamma_0^0\gamma_1^0 \left( \frac{1}{b^{(0)}(-2)} + \frac{1}{b^{(2)}(2)}\right) \\
\hphantom{\mathcal{E}^{(2)}=}{}
-  2 (\gamma_0^1)^{2} \gamma_1^0 \left( \frac{1}{b^{(0)}(-2) b^{(0)}(-1)} + \frac{1}{b^{(0)}(-1) b^{(2)}(1)} + \frac{1}{b^{(2)}(1)b^{(2)}(2)}\right) \\
\hphantom{\mathcal{E}^{(2)}=}{} -  2\gamma_0^0 \left(\gamma_1^1\right)^2 \left( \frac{1}{b^{(0)}(-2)b^{(1)}(-1)} + \frac{1}{b^{(1)}(-1)b^{(1)}(1)} + \frac{1}{b^{(1)}(1) b^{(2)}(2)} \right) \\
\hphantom{\mathcal{E}^{(2)}=}{}
+  (\gamma_0^1)^2(\gamma_1^1)^2 \left( \frac{1}{b^{(1)}(-1)b^{(0)}(-1)^{2}} + \frac{1}{b^{(2)}(1) b^{(1)}(1)^{2}} + \frac{1}{b^{(1)}(-1) b^{(0)}(-1) b^{(1)}(1)} \right. \\
\hphantom{\mathcal{E}^{(2)}=}{}
+  \left. \frac{1}{b^{(0)}(-1)b^{(1)}(1) b^{(2)}(1)} - \frac{1}{b^{(0)}(-2)b^{(0)}(-1)b^{(1)}(-1)} - \frac{1}{b^{(1)}(1)b^{(2)}(1)b^{(2)}(2)} \right)
\end{gather*}
equal to the result in \eqref{cEn2}.

\begin{Remark}
The results for the Heun equation ($\kappa=0$) are obtained from this by replacing
\begin{gather*}
b^{(\ell)}(k)\to b^{(0)}(k) \qquad \forall\, k,\; \ell.
\end{gather*}
For this case we found empirically that there is a useful symmetry as follows: one can obtain the formula for $a^{(\ell)}(k)$, $k=1,2,\ldots,\ell$, from the one for $a^{(\ell-k)}(-k)$ by the following transformation:
\begin{gather*}
b^{(0)}(k')\leftrightarrow b^{(0)}(-k'),\qquad \gamma_0^\mu\leftrightarrow \gamma_1^\mu
\end{gather*}
(i.e., swap the sign of the argument in all $b^{(0)}(k')$'s, and exchange all $\gamma_0^\mu$ and $\gamma_1^\mu$), and $\mathcal{E}^{(\ell)}$ is invariant under this transformation $\forall \ell$.
This allows to reduce the computations considerably.
\end{Remark}

\section{Details and proofs}
\label{appComputations}
We collect computational details and technical proofs.

\subsection{Proof of Lemma~\ref{Lemma:fn2}}
\label{secProofLemfn2}
We f\/irst consider the case $q=0$. We insert the def\/initions in \eqref{Thetanu}--\eqref{Theta} and use binomial series to expand
\begin{gather}
\sum_{n\in\mathbb{Z}}f_n^{(0)}(z)\xi^n = \frac{(1-\xi)^{\tilde{g}_0}(1+\xi)^{\tilde{g}_1}}{(1-2\xi z+\xi^2)^\lambda} =
\sum_{\nu_0,\nu_1,\nu_2=0}^\infty \binom{\tilde{g}_0}{\nu_0}\binom{\tilde{g}_1}{\nu_1}\binom{-\lambda}{\nu_2}(-\xi)^{\nu_0}\xi^{\nu_1}\big(\xi^2-2\xi z\big)^{\nu_2} \nonumber\\
\hphantom{\sum_{n\in\mathbb{Z}}f_n^{(0)}(z)\xi^n}{}
= \sum_{\nu_0,\nu_1,\nu_2=0}^\infty\sum_{m=0}^{\nu_2} \binom{\tilde{g}_0}{\nu_0}\binom{\tilde{g}_1}{\nu_1}\binom{-\lambda}{\nu_2}\binom{\nu_2}{m}(-1)^{\nu_0}\xi^{\nu_0+\nu_1+2\nu_2-m}(-2z)^{m}.\label{Eq:fn0}
\end{gather}
Comparing both sides we obtain
\begin{gather}\label{Eq:fn0explicit}
f_n^{(0)}(z) = \sum_{\nu_1,\nu_2,m} \binom{\tilde{g}_0}{n+m-\nu_1-2\nu_2}\binom{\tilde{g}_1}{\nu_1}\binom{-\lambda}{\nu_2}\binom{\nu_2}{m}(-1)^{\nu_0}(-2z)^{m}
\end{gather}
where the sum is over all non-negative integer triples $(\nu_1,\nu_2,m)$ such that $n+m-\nu_1-2\nu_2\geq 0$ and $m\leq \nu_2$.
From this it is obvious that $f_n^{(0)}(z)$ is a polynomial of degree $n$ in $z$ satisfying \eqref{f0n} (the highest power of $z$ is obtained for $(\nu_1,\nu_2,m)=(0,n,n)$).

We now consider the case $q\neq 0$. We rewrite \eqref{fn} using the result for $q=0$ above:
\begin{gather*}
\sum_{n\in\mathbb{Z}}f_n(z)\xi^n = \sum_{m=0}^\infty f_m^{(0)}(z)\xi^m\prod_{k=1}^\infty \frac{ \prod\limits_{\epsilon=\pm}\big[\big(1-\epsilon q^k\xi\big)\big(1-\epsilon q^k\xi^{-1}\big)\big]^{\tilde{g}_{k,\epsilon}}}{\big[\big(1-2q^{2k}\xi z+q^{4k}\xi^2\big)\big(1-2q^{2k}\xi^{-1}z+q^{4k}\xi^{-2}\big)\big]^\lambda}
\end{gather*}
with $(\tilde{g}_{k,+},\tilde{g}_{k,-})$ equal to $(\tilde{g}_0,\tilde{g}_1)$ and $(\tilde{g}_3,\tilde{g}_2)$ for even- and odd $k$, respectively. Expanding all factors in the product in binomial series the r.h.s.\ of this becomes a linear combination of terms
\begin{gather*}
f_m^{(0)}(z)\xi^m \prod_{k=1}^\infty \big(q^k\xi\big)^{\nu_{k,1}}\big(q^k\xi^{-1}\big)^{\nu_{k,2}}\big[q^{2k}\xi\big(q^{2k}\xi - z\big)\big]^{\nu_{k,3}} \big[q^{2k}\xi^{-1}\big(q^{2k}\xi^{-1} - z\big)\big]^{\nu_{k,4}}
\end{gather*}
with non-negative integers $m$ and $\nu_{k,j}$, which can be further expanded to a linear combination of terms
\begin{gather*}
f_m^{(0)}(z)\xi^m \prod_{k=1}^\infty \big(q^k\xi\big)^{\nu_{k,1}}\big(q^k\xi^{-1}\big)^{\nu_{k,2}}\big(q^{2k}\xi\big)^{2\nu_{k,3}-m_{k,3}}z^{m_{k,3}}\big(q^{2k}\xi^{-1}\big)^{2\nu_{k,4} -m_{k,4}}z^{m_{k,4}} \\
\qquad{} \equiv f_m^{(0)}(z) q^L z^{M}\xi^{m+N}
\end{gather*}
with non-negative integers $m_{k,3}\leq \nu_{k,3}$, $m_{k,4}\leq \nu_{k,4}$ and
\begin{gather*}
L=\sum_{k=1}^\infty k(\nu_{k,1}+\nu_{k,2} + 4\nu_{k,3} -2m_{k,3} + 4\nu_{k,4}-2 m_{k,4}),\\
M= \sum_{k=1}^\infty (m_{k,3}+m_{k,4}),\qquad N=\sum_{k=1}^\infty (\nu_{k,1}-\nu_{k,2} + 2\nu_{k,3} -m_{k,3} - 2\nu_{k,4}+m_{k,4} ).
\end{gather*}
This implies that $f_n(z)$ is a linear combination of terms $f_m^{(0)}(z) q^L z^{M}$ with $m+N=n$. We need to prove that, for all these terms, $m+M\leq n+L$ or, equivalently,
\begin{gather*}
L-M+N\geq 0.
\end{gather*}
We insert the constraints on these integers given above to write $L-M+N$ as
\begin{gather*}
\sum_{k=1}^\infty [(k+1)\nu_{k,1} +(k-1)\nu_{k,2} + 2(k+1)\nu_{k,3} + 2k(\nu_{k,3}-m_{k,3}) + 2(k-1)(2\nu_{k,4}-m_{k,4})] ,
\end{gather*}
which is manifestly non-negative.

\subsection[Series representations of $\wp$]{Series representations of $\boldsymbol{\wp}$}\label{appwp}
We derive the formulas in \eqref{wpseries}--\eqref{Snumu}. Recall that $\omega_\nu$ equals $0$, $\pi$, $-\pi-\pi\tau$ and $\pi\tau$ for $\nu=0,1,2$ and $3$, respectively.

We use the absolutely convergent series representation
\begin{gather*}
\frac1{4\sin^2\big(\frac{1}{2} z\big)}=-\sum_{\mu=0}^\infty\mu({\rm e}^{\pm {\rm i} z})^\mu \qquad \mbox{if}\quad \Im(z)\gtrless 0
\end{gather*}
and \eqref{wp22} to compute, assuming $2\pi\Im(\tau)>\Im(x)>0$ (so that $\Im(x+2n\pi\tau)>0$ for $n\geq 0$ and $\Im(x-2n\pi\tau)<0$ for $n>0$),
\begin{gather}
\wp(x|\pi,\pi\tau) + \frac{\eta_1}{\pi} = \sum_{n=0}^\infty \frac1{4\sin^2\big(\frac{1}{2}[x+2n\pi\tau]\big)} + \sum_{n=1}^\infty \frac1{4\sin^2\big(\frac{1}{2}[x-2n\pi\tau]\big)}\nonumber \\ =
-\sum_{n=0}^\infty \sum_{\mu=1}^\infty \mu \big({\rm e}^{{\rm i} (x+2n\pi\tau)}\big)^\mu - \sum_{n=1}^\infty \sum_{\mu=1}^\infty \mu \big({\rm e}^{-{\rm i} (x-2n\pi\tau)}\big)^\mu \nonumber\\ =
 -\sum_{\mu=1}^\infty \mu\left( \sum_{n=0}^\infty {\rm e}^{{\rm i}\mu x} q^{2n\mu} + \sum_{n=1}^\infty {\rm e}^{-{\rm i}\mu x} q^{2n\mu} \right) =
-\sum_{\mu=1}^\infty \mu\left( \frac1{1-q^{2\mu}} {\rm e}^{{\rm i}\mu x} + \frac{q^{2\mu}}{1-q^{2\mu}} {\rm e}^{-{\rm i}\mu x} \right)\label{computation}
\end{gather}
(we inserted \eqref{q} and summed up geometric series; the interchange of summations in the third equality is justif\/ied by absolute convergence).
This implies \eqref{wpseries}--\eqref{Snumu} for $\nu=0$.

From this one gets \eqref{wpseries}--\eqref{Snumu} also in the other cases $\nu=1,2,3$: for $2\pi\Im(\tau)>\Im(x)>0$,
\begin{gather*}
\wp(x+\pi|\pi,\pi\tau) = -\frac{\eta_1}{\pi} -\sum_{\mu=1}^\infty \mu(-1)^\mu \left( \frac1{1-q^{2\mu}} {\rm e}^{{\rm i}\mu x} + \frac{q^{2\mu}}{1-q^{2\mu}} {\rm e}^{-{\rm i}\mu x} \right)
\end{gather*}
implies the result for $\nu=1$; for $\pi\Im(\tau)>\Im(x)>-\pi\Im(\tau)$,
\begin{gather*}
\wp(x+\pi\tau|\pi,\pi\tau) = -\frac{\eta_1}{\pi} -\sum_{\mu=1}^\infty \mu\left( \frac{q^\mu}{1-q^{2\mu}} {\rm e}^{{\rm i}\mu x} + \frac{q^\mu}{1-q^{2\mu}} {\rm e}^{-{\rm i}\mu x} \right)
\end{gather*}
implies the result for $\nu=3$; and for $\pi\Im(\tau)>\Im(x)>-\pi\Im(\tau)$,
\begin{gather*}
\wp(x+\pi+\pi\tau|\pi,\pi\tau) = -\frac{\eta_1}{\pi} -\sum_{\mu=1}^\infty \mu(-1)^\mu \left( \frac{q^\mu}{1-q^{2\mu}} {\rm e}^{{\rm i}\mu x} + \frac{q^\mu}{1-q^{2\mu}} {\rm e}^{-{\rm i}\mu x} \right)
\end{gather*}
and the double periodicity of $\wp$ imply the result for $\nu=2$.

\section{Relation to combinatorics}
\label{appCombinatorics}
In this appendix we give examples that our results allow to translate the problem to solve the Heun equation into combinatorial problems.
We also explain in which sense the representation of the eigenvalues $\mathcal{E}_n$ by Theorem~\ref{Thm2} is simpler than the one provided by Theorem~\ref{Thm1}.

We note that \eqref{Enexplicit} implies the following formula for the Taylor coef\/f\/icients $\mathcal{E}_n^{(\ell)}$ of the gene\-ra\-lized eigenvalues,
\begin{gather}
\mathcal{E}_n^{(\ell\geq 1)} = \sum_{m} \sum_{k_0,k_1,\ldots,k_{m-1}}\sum_{\ell_1,\cdots,\ell_m} \delta \left(\sum_{r=0}^{m-1}k_r, m\right)
\delta\left(\sum_{r=1}^{m-1}rk_r, m-1\right)(m-1)!\nonumber\\
\hphantom{\mathcal{E}_n^{(\ell\geq 1)} =}{}  \times
\delta\left(\ell_1+\cdots+\ell_m ,\ell\right) \frac{\big[\Phi_n^{(r_1,\ell_1)}\big]^{k_1}}{k_1!}\cdots \frac{\big[\Phi_n^{(r_m,\ell_m)}\big]^{k_m}}{k_m!}\label{Enell11}
\end{gather}
with
 \begin{gather}
\Phi^{(r,\ell)}_n \equiv -\sum_{s} \sum_{\mu_1,\ldots,\mu_s}\sum_{r_1,\ldots,r_{s-1}}\sum_{k_1,\ldots,k_s} S_{\mu_1}(k_1)\cdots S_{\mu_s}(k_s) \nonumber\\
\hphantom{\Phi^{(r,\ell)}_n \equiv}{} \times
\frac{\delta(0,\mu_1+\cdots+\mu_s)\delta(\ell,|\mu_1|k_1+\cdots+|\mu_s|k_s) \delta(r_1+\cdots r_{s-1},r)}{\big[b^{(0)}_{n}(\mu_1)\big]^{1+r_1}_n\big[b^{(0)}_{n}(\mu_1+\mu_2)\big]^{1+r_2}_n\cdots \big[b^{(0)}_{n}(\mu_1+\cdots+\mu_{s-1})\big] ^{1+r_{s-1}}_n}, \label{Phin11}
\end{gather}
using notation introduced above (this follows by straightforward computations); for f\/ixed $\ell$, there are only f\/initely many integer vectors $(k_0,\ldots,k_{m-1})$ and $(\ell_1,\ldots,\ell_m)$ that give non-zero contributions to the sums in \eqref{Enell11} (due to the constraints in the Kronecker deltas), and, similarly, \eqref{Phin11} represents each of the f\/initely many dif\/ferent~$\Phi_n^{(r',\ell')}$ needed to evaluate~\eqref{Enell11} as a~f\/inite number of terms. It is clear that this combinatorial problem is complicated.

On the other hand, if one uses \eqref{Enexplicit1} to compute $\mathcal{E}_n^{(\ell)}$, one only needs $\alpha_n^{(\ell')}(n)$ for $\ell'=1,2,\ldots,\ell$, and \eqref{Gnm1} for $m=n$ allows to compute this by a combinatorial formula which is very similar to the one for $\Phi_n^{(0,\ell')}$. It thus is clear that the formula in \eqref{Enexplicit1} gives a signif\/icantly simpler representation of $\mathcal{E}_n^{(\ell)}$ than the formula in \eqref{Enexplicit}, as noted already in Remark~\ref{remComplexity}.

It is interesting to note that, from the combinatorial point of view, the solutions of the eCS model in \cite{ELeCS2} and of the Heun equation in the present paper are very similar; the only dif\/ferences are the details of the building blocks $\Phi^{(r,\ell)}_n$ of the solution. This suggests to us that all elliptic CMS type problems have a common underlying combinatorial structure which deserves further study.

\subsection*{Acknowledgements}
We thank M.~Halln{\"a}s, O.~Chalykh, and H.~Rosengren for helpful discussions and comments, as well as an anonymous referee for carefully reading our paper. We gratefully acknowledge partial f\/inancial support by the \emph{Stiftelse Olle Engkvist Byggm{\"a}stare} (contract 184-0573).

\pdfbookmark[1]{References}{ref}
\LastPageEnding


\begin{thebibliography}{99}
\footnotesize\itemsep=0pt

\bibitem{BM}
Bazhanov V.V., Mangazeev V.V., Eight-vertex model and non-stationary {L}am\'e
 equation, \href{https://doi.org/10.1088/0305-4470/38/8/L01}{\textit{J.~Phys.~A: Math. Gen.}} \textbf{38} (2005), L145--L153,
 \href{https://arxiv.org/abs/hep-th/0411094}{hep-th/0411094}.

\bibitem{CFV}
Chalykh O., Feigin M., Veselov A., New integrable generalizations of
 {C}alogero--{M}oser quantum problem, \href{https://doi.org/10.1063/1.532347}{\textit{J.~Math. Phys.}} \textbf{39}
 (1998), 695--703.

\bibitem{Erd}
Erd\'elyi A., Integral equations for {H}eun functions, \href{https://doi.org/10.1093/qmath/os-13.1.107}{\textit{Quart.~J. Math.,
 Oxford Ser.}} \textbf{13} (1942), 107--112.

\bibitem{EFK}
Etingof P.I., Frenkel I.B., Kirillov Jr. A.A., Spherical functions on af\/f\/ine
 {L}ie groups, \href{https://doi.org/10.1215/S0012-7094-95-08003-X}{\textit{Duke Math.~J.}} \textbf{80} (1995), 59--90,
 \href{https://arxiv.org/abs/hep-th/9407047}{hep-th/9407047}.

\bibitem{EK}
Etingof P.I., Kirillov Jr. A.A., Representations of af\/f\/ine {L}ie algebras,
 parabolic dif\/ferential equations, and {L}am\'e functions, \href{https://doi.org/10.1215/S0012-7094-94-07421-8}{\textit{Duke
 Math.~J.}} \textbf{74} (1994), 585--614, \href{https://arxiv.org/abs/hep-th/9310083}{hep-th/9310083}.

\bibitem{FG}
Falceto F., Gaw\c{e}dzki K., Unitarity of the
 {K}nizhnik--{Z}amolodchikov--{B}ernard connection and the {B}ethe ansatz for
 the elliptic {H}itchin systems, \href{https://doi.org/10.1007/BF02506407}{\textit{Comm. Math. Phys.}} \textbf{183}
 (1997), 267--290, \href{https://arxiv.org/abs/hep-th/9604094}{hep-th/9604094}.

\bibitem{FLNO}
Fateev V.A., Litvinov A.V., Neveu A., Onofri E., A dif\/ferential equation for a
 four-point correlation function in {L}iouville f\/ield theory and elliptic
 four-point conformal blocks, \href{https://doi.org/10.1088/1751-8113/42/30/304011}{\textit{J.~Phys.~A: Math. Theor.}} \textbf{42}
 (2009), 304011, 29~pages, \href{https://arxiv.org/abs/0902.1331}{arXiv:0902.1331}.

\bibitem{FV95}
Felder G., Varchenko A., Integral representation of solutions of the elliptic
 {K}nizhnik--{Z}amolodchikov--{B}ernard equations, \href{https://doi.org/10.1155/S1073792895000171}{\textit{Int. Math. Res.
 Not.}} \textbf{1995} (1995), 221--233, \href{https://arxiv.org/abs/hep-th/9502165}{hep-th/9502165}.

\bibitem{FV}
Felder G., Varchenko A., Special functions, conformal blocks, {B}ethe ansatz
 and {${\rm SL}(3,{\mathbb Z})$}, \href{https://doi.org/10.1098/rsta.2001.0839}{\textit{Phil. Trans.~R. Soc. Lond.~A}} \textbf{359} (2001), 1365--1373, \href{https://arxiv.org/abs/math.QA/0101136}{math.QA/0101136}.

\bibitem{HL}
Halln\"as M., Langmann E., A unif\/ied construction of generalized classical
 polynomials associated with operators of {C}alogero--{S}utherland type,
 \href{https://doi.org/10.1007/s00365-009-9060-4}{\textit{Constr. Approx.}} \textbf{31} (2010), 309--342,
 \href{https://arxiv.org/abs/math-ph/0703090}{math-ph/0703090}.

\bibitem{HR1}
Halln\"as M., Ruijsenaars S., A recursive construction of joint eigenfunctions
 for the hyperbolic nonrelativistic {C}alogero--{M}oser {H}amiltonians,
 \href{https://doi.org/10.1093/imrn/rnu267}{\textit{Int. Math. Res. Not.}} \textbf{2015} (2015), 10278--10313,
 \href{https://arxiv.org/abs/1305.4759}{arXiv:1305.4759}.

\bibitem{I}
Inozemtsev V.I., Lax representation with spectral parameter on a torus for
 integrable particle systems, \href{https://doi.org/10.1007/BF00420008}{\textit{Lett. Math. Phys.}} \textbf{17} (1989),
 11--17.

\bibitem{LaSl}
Kazakov A.Y., Slavyanov S.Yu., Integral relations for {H}eun-class special functions, \href{https://doi.org/10.1007/BF02070381}{\textit{Theoret. and Math. Phys.}} \textbf{107} (1996), 733--739.

\bibitem{Kolb}
Kolb S., Radial part calculations for {$\widehat{\mathfrak{sl}}_2$} and the
 {H}eun-{KZB} heat equation, \href{https://doi.org/10.1093/imrn/rnv064}{\textit{Int. Math. Res. Not.}} \textbf{2015}
 (2015), 12941--12990, \href{https://arxiv.org/abs/1310.0782}{arXiv:1310.0782}.

\bibitem{KNS}
Komori Y., Noumi M., Shiraishi J., Kernel functions for dif\/ference operators of
 {R}uijsenaars type and their applications, \href{https://doi.org/10.3842/SIGMA.2009.054}{\textit{SIGMA}} \textbf{5} (2009),
 054, 40~pages, \href{https://arxiv.org/abs/0812.0279}{arXiv:0812.0279}.

\bibitem{KT}
Komori Y., Takemura K., The perturbation of the quantum
 {C}alogero--{M}oser--{S}utherland system and related results, \href{https://doi.org/10.1007/s002200200622}{\textit{Comm.
 Math. Phys.}} \textbf{227} (2002), 93--118, \href{https://arxiv.org/abs/math.QA/0009244}{math.QA/0009244}.

\bibitem{KS}
Koroteev P., Sciarappa A., On elliptic algebras and large-{$n$} supersymmetric
 gauge theories, \href{https://doi.org/10.1063/1.4966641}{\textit{J.~Math. Phys.}} \textbf{57} (2016), 112302, 32~pages,
 \href{https://arxiv.org/abs/1601.08238}{arXiv:1601.08238}.

\bibitem{LW}
Lambe C.G., Ward D.R., Some dif\/ferential equations and associated integral
 equations, \href{https://doi.org/10.1093/qmath/os-5.1.81}{\textit{Quart.~J. Math.}} \textbf{os-5} (1934), 81--97.

\bibitem{ELeCS0}
Langmann E., Anyons and the elliptic {C}alogero-{S}utherland model,
 \href{https://doi.org/10.1023/A:1010961107811}{\textit{Lett. Math. Phys.}} \textbf{54} (2000), 279--289,
 \href{https://arxiv.org/abs/math-ph/0007036}{math-ph/0007036}.

\bibitem{ELJack}
Langmann E., Algorithms to solve the (quantum) {S}utherland model,
 \href{https://doi.org/10.1063/1.1389472}{\textit{J.~Math. Phys.}} \textbf{42} (2001), 4148--4157,
 \href{https://arxiv.org/abs/math-ph/0104039}{math-ph/0104039}.

\bibitem{ELrem}
Langmann E., Remarkable identities related to the (quantum) elliptic
 {C}alogero--{S}utherland model, \href{https://doi.org/10.1063/1.2167807}{\textit{J.~Math. Phys.}} \textbf{47} (2006),
 022101, 18~pages, \href{https://arxiv.org/abs/math-ph/0406061}{math-ph/0406061}.

\bibitem{ELsigma}
Langmann E., Singular eigenfunctions of {C}alogero--{S}utherland type systems
 and how to transform them into regular ones, \href{https://doi.org/10.3842/SIGMA.2007.031}{\textit{SIGMA}} \textbf{3}
 (2007), 031, 18~pages, \href{https://arxiv.org/abs/math-ph/0702089}{math-ph/0702089}.

\bibitem{ELeCS1}
Langmann E., Source identity and kernel functions for elliptic
 {C}alogero--{S}utherland type systems, \href{https://doi.org/10.1007/s11005-010-0416-2}{\textit{Lett. Math. Phys.}} \textbf{94}
 (2010), 63--75, \href{https://arxiv.org/abs/1003.0857}{arXiv:1003.0857}.

\bibitem{ELeCS2}
Langmann E., Explicit solution of the (quantum) elliptic
 {C}alogero--{S}utherland model, \href{https://doi.org/10.1007/s00023-013-0254-8}{\textit{Ann. Henri Poincar\'e}} \textbf{15}
 (2014), 755--791, \href{https://arxiv.org/abs/math-ph/0407050}{math-ph/0407050}.

\bibitem{LT}
Langmann E., Takemura K., Source identity and kernel functions for
 {I}nozemtsev-type systems, \href{https://doi.org/10.1063/1.4745001}{\textit{J.~Math. Phys.}} \textbf{53} (2012),
 082105, 19, \href{https://arxiv.org/abs/1202.3544}{arXiv:1202.3544}.

\bibitem{Maier}
Maier R.S., The 192 solutions of the {H}eun equation, \href{https://doi.org/10.1090/S0025-5718-06-01939-9}{\textit{Math. Comp.}}
 \textbf{76} (2007), 811--843, \href{https://arxiv.org/abs/math.CA/0408317}{math.CA/0408317}.

\bibitem{Nag}
Nagoya H., Hypergeometric solutions to {S}chr\"odinger equations for the
 quantum {P}ainlev\'e equations, \href{https://doi.org/10.1063/1.3620412}{\textit{J.~Math. Phys.}} \textbf{52} (2011),
 083509, 16~pages, \href{https://arxiv.org/abs/1109.1645}{arXiv:1109.1645}.

\bibitem{Nawata}
Nawata S., Givental {$J$}-functions, quantum integrable systems, {AGT} relation
 with surface operator, \href{https://doi.org/10.4310/ATMP.2015.v19.n6.a4}{\textit{Adv. Theor. Math. Phys.}} \textbf{19} (2015),
 1277--1338, \href{https://arxiv.org/abs/1408.4132}{arXiv:1408.4132}.

\bibitem{NS}
Nekrasov N.A., Shatashvili S.L., Quantization of integrable systems and four
 dimensional gauge theories, in X{VI}th {I}nternational {C}ongress on
 {M}athematical {P}hysics, \href{https://doi.org/10.1142/9789814304634_0015}{World Sci. Publ.}, Hackensack, NJ, 2010, 265--289,
 \href{https://arxiv.org/abs/0908.4052}{arXiv:0908.4052}.

\bibitem{Dig10}
{NIST} digital library of mathematical functions, {R}elease date 2010-05-07, available at
 \url{http://dlmf.nist.gov/}.

\bibitem{Novikov}
Novikov D.P., Integral transformation of solutions for a~{F}uchsian-class equation corresponding to the {O}kamoto transformation of the Painlev\'e~{V}{I} equation, \href{https://doi.org/10.1007/s11232-006-0040-6}{\textit{Theoret. and Math. Phys.}} \textbf{146} (2006), 295--303.

\bibitem{Ron}
Ronveaux A., Arscott F. (Editors), Heun's dif\/ferential equations, \textit{Oxford
 Science Publications}, Oxford University Press, Oxford, 1995.

\bibitem{Ro1}
Rosengren H., Special polynomials related to the supersymmetric eight-vertex
 model. III.~{P}ainlev\'e~{VI} equation, \href{https://arxiv.org/abs/1405.5318}{arXiv:1405.5318}.

\bibitem{Ro}
Rosengren H., Special polynomials related to the supersymmetric eight-vertex
 model: a summary, \href{https://doi.org/10.1007/s00220-015-2439-0}{\textit{Comm. Math. Phys.}} \textbf{340} (2015), 1143--1170,
 \href{https://arxiv.org/abs/1503.028332}{arXiv:1503.02833}.

\bibitem{RHeun}
Ruijsenaars S.N.M., Hilbert--{S}chmidt operators vs. integrable systems of
 elliptic {C}alogero--{M}oser type. {III}.~{T}he {H}eun case, \href{https://doi.org/10.3842/SIGMA.2009.049}{\textit{SIGMA}}
 \textbf{5} (2009), 049, 21~pages, \href{https://arxiv.org/abs/0904.3250}{arXiv:0904.3250}.

\bibitem{Sergeev}
Sergeev A.N., {C}alogero operator and {L}ie superalgebras, \href{https://doi.org/10.1023/A:1015968505753}{\textit{Theoret.
 and Math. Phys.}} \textbf{131} (2002), 747--764.

\bibitem{SL}
Slavyanov S.Yu., Lay W., Special functions. A~unif\/ied theory based on
 singularities, \textit{Oxford Mathematical Monographs}, Oxford University Press,
 Oxford, 2000.

\bibitem{Smirnov}
Smirnov A.O., Elliptic solitons and {H}eun's equation, in The {K}owalevski
 Property ({L}eeds, 2000), \textit{CRM Proc. Lecture Notes}, Vol.~32, Amer.
 Math. Soc., Providence, RI, 2002, 287--305, \href{https://arxiv.org/abs/math.CA/0109149}{math.CA/0109149}.

\bibitem{THeun}
Takemura K., The {H}eun equation and the {C}alogero--{M}oser--{S}utherland
 system. {II}.~{P}erturbation and algebraic solution, \textit{Electron.~J.
 Differential Equations} \textbf{2004} (2004), 15, 30~pages, \href{https://arxiv.org/abs/math.CA/0112179}{math.CA/0112179}.

\bibitem{TakemuraHeun4}
Takemura K., The {H}eun equation and the {C}alogero--{M}oser--{S}utherland
 system. {IV}.~{T}he {H}ermite--{K}richever ansatz, \href{https://doi.org/10.1007/s00220-005-1359-9}{\textit{Comm. Math. Phys.}}
 \textbf{258} (2005), 367--403, \href{https://arxiv.org/abs/math.CA/0406141}{math.CA/0406141}.

\bibitem{WW}
Whittaker E.T., Watson G.N., A course of modern analysis. 4th~ed., \href{https://doi.org/10.1017/CBO9780511608759}{\textit{Cambridge Mathematical
 Library, Cambridge University Press}}, Cambridge, 1996.

\end{thebibliography}
\end{document}